\numberwithin{equation}{section}
\definecolor{darkgreen}{rgb}{0.0, 0.5, 0.0}
\newcommand{\cmark}{\textcolor{darkgreen}{\ding{51}}}  
\newcommand{\xmark}{\textcolor{red}{\ding{55}}}    
\setlist[enumerate]{itemsep=1ex, topsep=-1ex}
\setlist[itemize]{itemsep=0pt, topsep=0pt}
\titlerule\vspace{-2ex}}
\titleformat{\subsection}[runin]
  {\normalfont\normalsize\bfseries}{\thesubsection.}{1ex}{\protect\subsectiontitle}
\titleformat{\subsubsection}[runin]
  {\normalfont\normalsize\itshape}{\thesubsubsection.}{1ex}{\protect\subsubsectiontitle}
\titlespacing*{\subsection}{0pt}{0.0\baselineskip}{0.5ex}
\titlespacing*{\subsubsection}{0pt}{0.0\baselineskip}{0.5ex}
\newcommand{\subsectiontitle}[1]{#1.}
\newcommand{\subsubsectiontitle}[1]{#1.}
\newtheorem{theorem}{Theorem}[section]
\newtheoremstyle{style}
  {\baselineskip} 
  {0em} 
  {\itshape} 
  {} 
  {\bfseries} 
  {.} 
  {.5em} 
  {} 
\theoremstyle{style}
\newtheorem*{theorem*}{Theorem}
\numberwithin{equation}{section}
\newtheorem{condition}{Condition}
\newtheorem*{definition*}{Definition}
\newtheorem{lemma}[theorem]{Lemma}
\newtheorem*{lemma*}{Lemma}
\newtheorem{prop}[theorem]{Proposition}
\newtheorem*{prop*}{Proposition}
\crefname{condition}{Condition}{Conditions}
\algrenewcommand\algorithmicrequire{\textbf{Input:}}
\algrenewcommand\algorithmicensure{\textbf{Output:}}
\DeclareMathOperator{\1}{\mathds{1}}
\DeclareMathOperator*{\argmin}{arg\,min}
\newcommand{\efp}{\mathtt{efp}}
\newcommand{\nhalf}{\lfloor n/2 \rfloor}
\newcommand{\lmax}{\lambda_{\mathrm{max}}}
\newcommand{\lmin}{\lambda_{\mathrm{min}}}
\newcommand{\evipss}{\mathrm{E(FP)}}
\newcommand{\evmb}{\mathrm{E(FP)}}
\newcommand{\evum}{\mathrm{E(FP)}}
\newcommand{\evtarget}{\mathrm{E(FP)}_*}
\title{{\Large {\LARGE I}ntegrated path stability selection}}
\author{{\large O}mar {\large M}elikechi\textsuperscript{1}}
\author{{\large J}effrey {\large W}. {\large M}iller\textsuperscript{1}}
\address{\textsuperscript{1}Department of Biostatistics, Harvard T.H. Chan School of Public Health, Boston, MA}
\begin{document}

\maketitle

\textbf{Publication notice.} This version of the paper has been peer-reviewed and published in the \textit{Journal of the American Statistical Association}. See \url{https://doi.org/10.1080/01621459.2025.2525589}.

\begin{abstract}
Stability selection is a popular method for improving feature selection algorithms. One of its key attributes is that it provides theoretical upper bounds on the expected number of false positives, E(FP), enabling false positive control in practice. However, stability selection often selects few features because existing bounds on E(FP) are relatively loose. In this paper, we introduce a novel approach to stability selection based on integrating stability paths rather than maximizing over them. This yields upper bounds on E(FP) that are much stronger than previous bounds, leading to significantly more true positives in practice for the same target E(FP). Furthermore, our method requires no more computation than the original stability selection algorithm. We demonstrate the method on simulations and real data from two cancer studies.
\end{abstract}


\vspace{1em}
\section{Introduction}\label{sec:intro}


Stability selection is a widely used method that uses subsampling to improve feature selection algorithms \citep{mb}. It is attractive due to its generality, simplicity, and theoretical control on the expected number of false positives, E(FP), sometimes called the \textit{per-family error rate}. Despite these favorable qualities, existing theory for stability selection---which heavily informs its implementation---provides relatively weak bounds on E(FP), resulting in a diminished number of true positives \citep{gwas,hofner,binco,big_empirical}.
Stability selection also requires users to specify two of three parameters: the target E(FP), a selection threshold, and the expected number of selected features. Several works have shown that stability selection is sensitive to these choices, making it difficult to tune for good performance \citep{tigress,binco,big_empirical,topk}. 

The limitations of stability selection are illustrated in \cref{fig:limitations}. Here, we simulate data $y_i=\bm{x}_i^\mathtt{T}\bm{\beta}^*+\epsilon_i$ for $i\in\{1,\dots,n\}$, where $n = 200$ is the number of samples, $p=1000$ is the number of features, and $(\bm{x}_1,y_1),\ldots,(\bm{x}_n,y_n)$ are observations of $n$ independent random vectors $(\bm{X}_i,Y_i)$, where $\bm{X}_i\in\mathbb{R}^p$ and $Y_i\in\mathbb{R}$. The features and noise are generated as $X_{ij}\sim\mathcal{N}(0,1)$ independently, and $\epsilon_i\sim\mathcal{N}(0,\sigma^2)$ independently given $\bm{X}_1,\ldots,\bm{X}_n$, where $\sigma^2=\frac{1}{2 n}\sum_{i=1}^n (\bm{x}_i^\mathtt{T}\bm{\beta}^*)^2$ so that the empirical signal-to-noise ratio is $2$. The coefficient vector $\bm{\beta}^*$ has $s=20$ nonzero entries $\beta^*_j\sim\mathrm{Uniform}([-1,-0.5]\cup[0.5,1])$, located at randomly selected $j\in\{1,\ldots,p\}$, and all other entries are $0$. The results in \cref{fig:limitations} are obtained by simulating $100$ random data sets as above and running stability selection using lasso as the base estimator \citep{lasso}. A selected feature is a \textit{true positive} if its corresponding $\bm{\beta}^*$ entry is nonzero, and is a \textit{false positive} otherwise.

\begin{figure}
\includegraphics[width=\textwidth]{./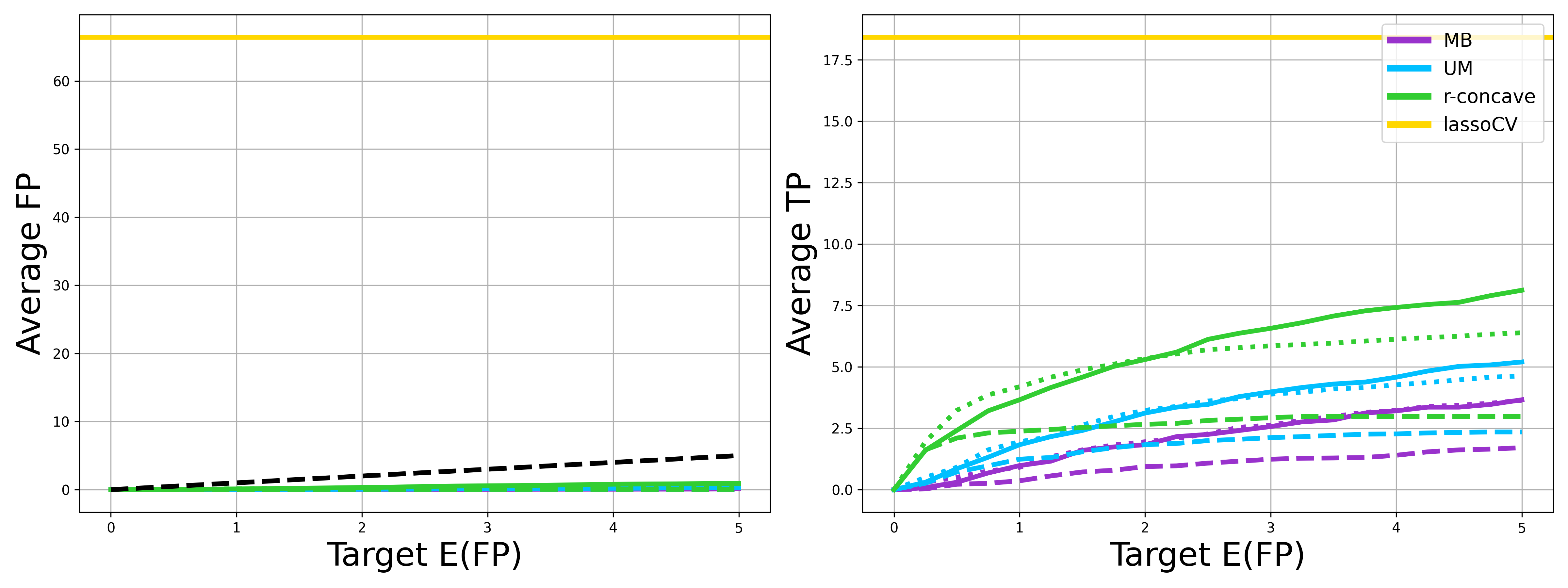}%
\caption{\textit{Tradeoff between FP and TP}. Stability selection is overly conservative, yielding a small number of false positives (FP) at the expense of a small number of true positives (TP).  Meanwhile, lasso has a high TP, but also a very high FP. (\textit{Left}) Average FP versus target E(FP) for the original stability selection method of \cite{mb}, denoted MB, and the unimodal (UM) and $r$-concave methods of \cite{shah} at thresholds $\tau=0.6$, $0.75$, and $0.9$ (solid, dotted, and dashed lines, respectively), as well as lasso with cross-validation, which does not depend on the horizontal axis. The black dashed line is the target E(FP). (\textit{Right}) Average TP versus target E(FP) for the same methods.}
\label{fig:limitations}
\end{figure}

The horizontal axis in \cref{fig:limitations} is the target E(FP). The vertical axes show the actual numbers of false positives (FP) and true positives (TP), averaged over the 100 data sets. The stability selection methods have relatively few true positives on average, and usually 0 false positives, undershooting the target E(FP). On the other hand, lasso with regularization parameter chosen by cross-validation (lassoCV) produces around $18$ true positives, but over $65$ false positives. This is a known trade-off: Typically, stability selection is too conservative, while lasso with cross-validation is not conservative enough \citep{gwas,leng,adaptive_lasso}. Furthermore, while UM and $r$-concave outperform MB, both rely on additional assumptions and have less transparent bounds on E(FP); in particular, $r$-concave requires an additional algorithm to implement and its bound does not admit a closed form, making it difficult to interpret \citep{shah}.

In this article, we introduce \textit{integrated path stability selection} (IPSS) to address these limitations. We prove that IPSS satisfies bounds on $\mathrm{E(FP)}$ that are orders of magnitude stronger than existing stability selection bounds, yielding more true positives for the same target E(FP). This is a key advantage since the actual number of false positives is unknown in practice, so stronger bounds enable one to increase the true positive rate while maintaining control over $\mathrm{E(FP)}$. IPSS is also simple to implement and requires no more computation than stability selection. 

The rest of the article is organized as follows. 
In \cref{sec:background}, we describe our setup, provide background on stability selection, and discuss related work. In \cref{sec:ipss}, we introduce IPSS, and in \cref{sec:theory}, we present our theoretical results.
\cref{sec:simulations} contains an extensive simulation study, and \cref{sec:applications} contains applications of IPSS to prostate and colon cancer. \cref{sec:discussion} concludes with a discussion.


\section{Stability selection and previous work}\label{sec:background}


In this section, we define the setup of the problem (\cref{sec:setup}) and describe existing stability selection methods (\cref{sec:ss}) and other related work (\cref{sec:related_work}).


\subsection{Setup}\label{sec:setup}


Suppose $S\subseteq \{1,\ldots,p\}$ is an unknown subset to be estimated from observations $\bm{z}_1,\dots,\bm{z}_n$ of $n$ independent and identically distributed (iid) random vectors $\bm{Z}_1,\ldots,\bm{Z}_n$. 
Let $\hat{S}_\lambda(\bm{Z}_{1:n})\subseteq\{1,\ldots,p\}$ be an estimator of $S$, where $\bm{Z}_{1:n} = (\bm{Z}_1,\ldots,\bm{Z}_n)$ and $\lambda > 0$ is a parameter, and let $\hat{S}_\lambda(\bm{z}_{1:n})$ denote the corresponding estimate obtained from the observed data. We refer to $\hat{S}_\lambda$ as the \textit{base estimator}, and allow it to be a random function, such as a stochastic optimization algorithm.
In regression, we have an observation $\bm{z}_i = (\bm{x}_i,y_i)$ of each $\bm{Z}_i=(\bm{X}_i,Y_i)$, where $\bm{x}_i \in \mathbb{R}^p$ is a vector of features and $y_i\in\mathbb{R}$ is a response variable. A canonical base estimator in this setting is the lasso algorithm \citep{lasso}, in which case $\hat{S}_\lambda(\bm{z}_{1:n}) = \{j : \hat{\beta}_j(\lambda) \neq 0\}$ where 
\begin{align}\label{eq:lasso}
	\hat{\bm{\beta}}(\lambda) &= \argmin_{\bm{\beta}\in\mathbb{R}^p}\, \frac{1}{2}\sum_{i = 1}^n (y_i-\bm{x}_i^\mathtt{T}\bm{\beta})^2 + \lambda \sum_{j=1}^p |\beta_j|.
\end{align}
We also consider the minimax concave penalty, MCP \citep{mcp}, the smoothly clipped absolute deviation penalty, SCAD \citep{scad}, the adaptive lasso \citep{adaptive_lasso}, and $\ell_1$-regularized logistic regression \citep{logistic_regression}; see \cref{sup_sec:other_estimators} for details. In an unsupervised learning setting such as graphical lasso, we have $\bm{Z}_i = \bm{X}_i \in\mathbb{R}^p$, without a response variable \citep{glasso}. Additional background on graphical lasso and other examples of base estimators amenable to stability selection can be found in Section 2 of \cite{mb}.

For a given $A \subseteq \{1,\ldots,n\}$, define $\bm{Z}_A = (\bm{Z}_i : i \in A)$ and let $\hat{S}_\lambda(\bm{Z}_A)$ denote the estimator computed using only the data in $\bm{Z}_A$.
A key quantity is the probability that feature $j$ is selected when using half of the data. We denote this \textit{selection probability} by
\begin{align}\label{eq:selection_probabilites}
   \pi_j(\lambda) &= \mathbb{P}\big(j\in\hat{S}_\lambda(\bm{Z}_{1:\nhalf})\big).
\end{align}
Stability selection and IPSS employ estimators of $\pi_j(\lambda)$, called the \textit{estimated selection probabilities} $\hat\pi_j(\lambda)$, computed by repeatedly applying a base estimator to random subsamples of the data (\cref{alg:ss}). The resulting \textit{stability paths} $\lambda\mapsto\hat\pi_j(\lambda)$, which will be important in what follows, are shown in \cref{fig:stability_paths}. Both $\pi_j(\lambda)$ and $\hat\pi_j(\lambda)$ depend on $n$, but $n$ is suppressed from notation since it is always fixed in this work. Notably, all our results are non-asymptotic, applying to any $n\geq 2$.

\begin{algorithm}
\caption{(Estimated selection probabilities)}\label{alg:ss}
\begin{algorithmic}[1]
\Require{Data $\bm{z}_1,\ldots,\bm{z}_n$, base estimator $\hat{S}_\lambda$, parameter grid $\Lambda$, number of iterations $B$.}
\For{$b=1,\ldots,B$}
\State Randomly select disjoint $A_{2b-1},A_{2b}\subseteq\{1,\ldots,n\}$ with $\lvert A_{2b-1}\rvert=\lvert A_{2b}\rvert=\lfloor n/2\rfloor$.
\For{$\lambda\in\Lambda$}
\State Evaluate $\hat{S}_\lambda(\bm{z}_{A_{2b-1}})$ and $\hat{S}_\lambda(\bm{z}_{A_{2b}})$.
\EndFor
\EndFor
\Ensure{Estimated selection probabilities $\hat\pi_j(\lambda)=\frac{1}{2 B}\sum_{b=1}^{2 B} \1\!\big(j \in \hat{S}_\lambda(\bm{z}_{A_b})\big)$.}
\end{algorithmic}
\end{algorithm}

In \cref{alg:ss} and throughout this work, $\1(\cdot)$ denotes the indicator function: $\1(E) = 1$ if $E$ is true and $\1(E) = 0$ otherwise.
Note that $\hat{S}_\lambda$ is evaluated on both disjoint subsets, $A_{2b-1}$ and $A_{2b}$, at each iteration of \cref{alg:ss}. This technique of using complementary pairs of subsets was introduced by \cite{shah}. By contrast, in the original stability selection algorithm of \cite{mb}, $\hat{S}_\lambda$ is applied to only one subset of size $\nhalf$ at each iteration. The difference in empirical performance between the two approaches is minimal, but this slight modification simplifies the assumptions needed for the theory \citep{shah}. 

The choice of $\nhalf$ samples is required for the theory of stability selection to hold, both in this paper and in previous works. An alternative approach is to randomly select $n$ samples with replacement in each of the $B$ subsampling steps \citep{bolasso}. However, this bootstrap approach only guarantees that $S$ is recovered asymptotically in the special case where lasso is the base estimator. By contrast, stability selection and IPSS provide finite sample control of E(FP) for arbitrary base estimators. \cite{shah} also observe that the selection probabilities computed by sampling with replacement are very similar to those computed using \cref{alg:ss}, suggesting that stability selection and IPSS depend little on whether subsampling is implemented with or without replacement.


\subsection{Stability selection}\label{sec:ss}


Once the $\hat{\pi}_j$ values are computed, the set of features selected by stability selection \citep{mb,shah} is
\begin{align}\label{eq:ss}
    \hat{S}_{\mathrm{SS}} &= \Big\{j : \max_{\lambda\in\Lambda} \hat{\pi}_j(\lambda) \geq \tau\Big\}
\end{align}
where $\Lambda = [\lmin,\lmax]\subseteq (0,\infty)$ is an interval defined below and $\tau \in (0,1)$ is a user-specified threshold. The upper endpoint $\lmax$ is inconsequential provided it is large enough that all features have small selection probability, which is easy to determine empirically.
Choosing $\lmin$ is considerably more subtle, since many or even all features satisfy $\hat{\pi}_j(\lambda)\geq\tau$ as $\lambda \to 0$. 
While there is no consensus on how to choose $\lmin$ \citep{cox,topk}, a standard approach is to use theoretical upper bounds on E(FP) as follows.

The MB, UM, and r-concave versions of stability selection all satisfy theoretical upper bounds of the form $\mathrm{E(FP)} \leq \mathcal{B}(q,\tau)$, where $\mathcal{B}(q,\tau)$ is an expression that depends on the method (MB, UM, or $r$-concave), the average number of features selected over $\Lambda$, $q = \mathrm{E}\big\lvert\textstyle{\bigcup}_{\lambda\in\Lambda}\hat{S}_{\lambda}(\bm{Z}_{1:\nhalf})\big\rvert$, and the threshold, $\tau$; see \cref{sec:compare}. To determine $\lmin$, two of the following three quantities must be specified: (i) the target $\mathrm{E(FP)}$, denoted $\evtarget$, (ii) the threshold, $\tau$, and (iii) the target number of features selected, $q_*$. The third quantity is then obtained by setting $\evtarget = \mathcal{B}(q_*,\tau)$ and solving. Once $q_*$ is determined, $\lmin = \sup\big\{\lambda\in(0,\lmax) : \mathrm{E}\big\lvert\textstyle{\bigcup}_{\lambda'\in[\lambda,\lmax]}\hat{S}_{\lambda'}(\bm{Z}_{1:\nhalf})\big\rvert\geq q_*\big\}$ is empirically estimated and $\Lambda = [\lmin,\lmax]$ is used in \cref{eq:ss}.

The above construction elucidates some of the shortcomings of stability selection and motivates our formulation of IPSS. First, the inequalities $\mathrm{E(FP)} \leq \mathcal{B}(q,\tau)$ are replaced by equalities in order to determine $\lmin$ in a way that controls FP. Thus, while the recommended procedure does typically keep the actual FP smaller than $\evtarget$, it may be much smaller, as shown in \cref{fig:limitations,fig:bounds}. This overconservative tendency leads to a lower TP than necessary. More precisely, weak bounds on E(FP) lead to large values of $\lmin$, which prevent true features from being selected because their stability paths have not yet distinguished themselves from the noise (\cref{fig:stability_paths}). Second, $\mathrm{E(FP)}_*$, $q_*$, and $\tau$ are interdependent, making it difficult to select these parameters in practice. \cite{mb} recommended taking $\tau\in[0.6,0.9]$, but stability selection is sensitive to $\tau$ even when restricted to this interval \citep{binco,big_empirical}. Nevertheless, $\tau$ must be specified in most cases because one usually has little \textit{a priori} knowledge to inform the choice of $q_*$. Finally, while one can in principle use a smaller $\lmin$, it is unclear what value to choose and doing so would invalidate the E(FP) control guarantee, making it hard to interpret the results.

\begin{figure*}
\includegraphics[width=\textwidth, height=0.30\textheight]{./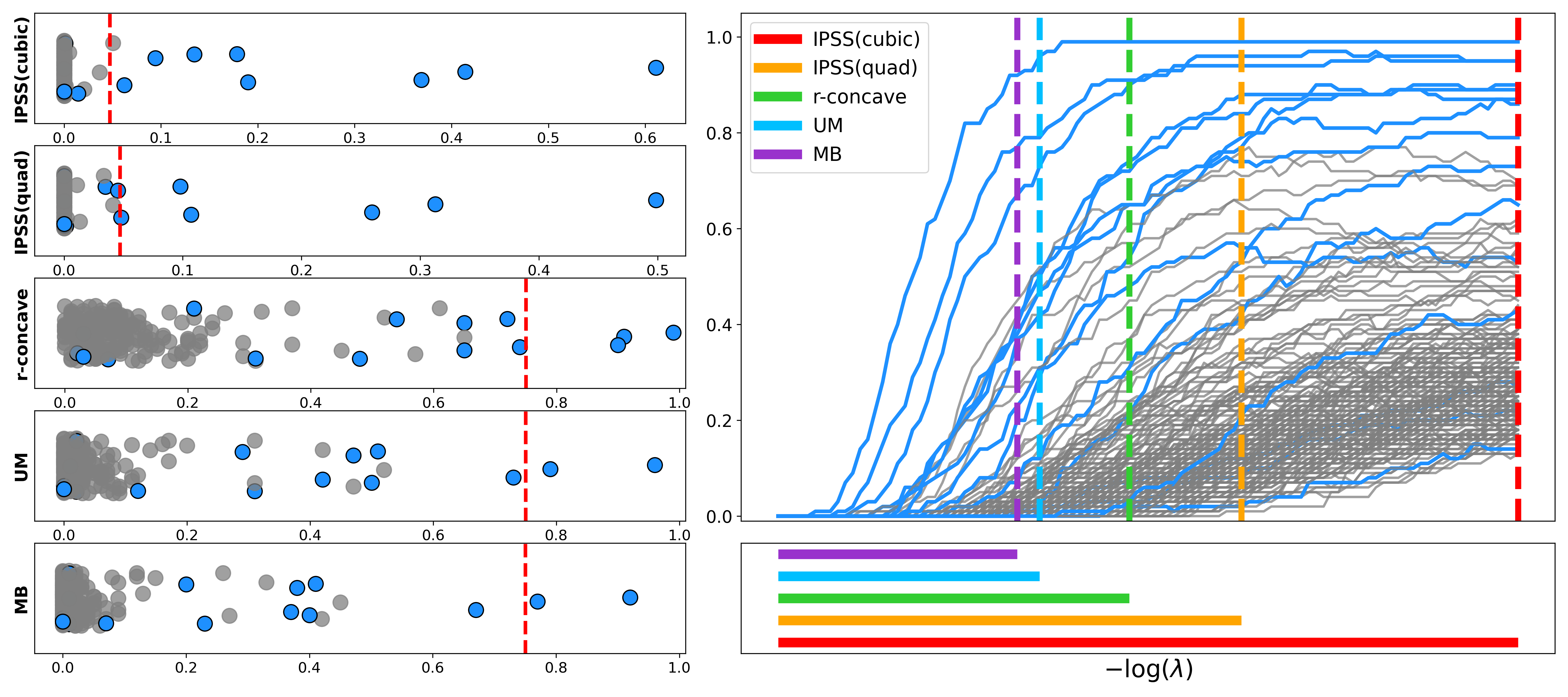}%
\caption{Linear regression with independent design as in \cref{sec:intro} with $n=150$, $p=200$, $\mathrm{SNR}=1$, and $s=15$ true features. (\textit{Left}) The horizontal axis is the score for each feature $j$, that is, $\int_\Lambda f(\hat{\pi}_j(\lambda))\mu(d\lambda)$ for IPSS and $\max_{\lambda\in\Lambda} \hat{\pi}_j(\lambda)$ for the others. The vertical axis is random jitter for visualization.  True features are shown in blue, and red vertical lines show the threshold $\tau$ separating selected and unselected features for each method. We use $\mathrm{E(FP)}_*=1$ for all methods, and for stability selection, we use $\tau = 0.75$. MB, UM, $r$-concave, IPSS(quad), and IPSS(cubic) identify 2, 2, 3, 6, and 8 true positives, respectively. All methods have 0 false positives except IPSS(cubic), which has 1, agreeing with the target E(FP). (\textit{Right}) Estimated stability paths $\hat{\pi}_j(\lambda)$. Vertical dashed lines show $-\log(\lmin)$ and horizontal lines below the plot show the intervals $[-\log(\lmax), -\log(\lmin)]$ for each method.
}
\label{fig:stability_paths}
\end{figure*}


\subsection{Related work}\label{sec:related_work}


Stability selection was introduced by \cite{mb} and refined by \cite{shah}. These remain the preeminent works on stability selection and are the most commonly implemented versions to date. \cite{topk} provide the only other work we are aware of that aims to improve the selection criterion in \cref{eq:ss} and the $\mathrm{E(FP)}$ bound. They propose \textit{top-$k$ stability selection}, which  averages over the $k$ largest selection probabilities for each feature. The special case of $k=1$ is stability selection. While \cite{topk} provide theory for top-$k$ stability selection, their improvement upon the $\mathrm{E(FP)}$ upper bound of \cite{mb} is considerably weaker than the improved bound provided by IPSS; compare \citet[Theorem 3.1]{topk} to our \cref{thrm:main}. Moreover, introducing $k$ increases the number of parameters, and results are sensitive to the choice of $k$ \citep{topk}.

It is far more common for stability selection to be modified on an \textit{ad hoc} basis to mitigate its sensitivity to parameters and overly conservative results. An example is the TIGRESS method of \cite{tigress}, which uses stability selection to infer gene regulatory networks. To reduce sensitivity to the stability selection parameters, they use a selection criterion that averages over selection probabilities. It turns out that this is the special case of IPSS with $f(x)=x$ (the function $w_1$ in \cref{sup_sec:other_functions}), whose analysis is relegated to the supplement because its bound on E(FP) is not nearly as strong as those in \cref{thrm:main,thrm:simplified}. Another example is in the work of \cite{diffeq1}, where stability selection is used to learn differential equations. There the authors use a selection criterion based only on selection probabilities at the smallest regularization parameter. Finally, a common approach is to combine stability selection with other methods. Examples include stability selection with boosting \citep{hofner} and grouping features prior to applying stability selection, which has been done in genome-wide association studies \citep{gwas}. IPSS can be used instead of stability selection in such methods at no additional cost. 


\section{Integrated path stability selection}\label{sec:ipss}


In this section, we introduce IPSS (\cref{sec:ipss_criterion}) and describe its parameters (\cref{sec:parameters}), computational details (\cref{sec:computation}), and how it can be used to control the false discovery rate (\cref{sec:fdr}).


\subsection{The IPSS criterion}\label{sec:ipss_criterion}


IPSS uses the same selection probabilities, $\hat{\pi}_j$, as stability selection. Once these are computed using \cref{alg:ss}, the set of features selected by IPSS is
\begin{align}\label{eq:ipss}
	\hat{S}_{\mathrm{IPSS},f} &= \Big\{j : \mathsmaller{\int}_\Lambda f(\hat{\pi}_j(\lambda))\mu(d\lambda) \geq \tau\Big\}
\end{align}
where the probability measure $\mu$, interval $\Lambda = [\lmin,\lmax]\subseteq(0,\infty)$, function $f:[0,1]\to\mathbb{R}$, and threshold $\tau$ are detailed in \cref{sec:parameters} below. Unlike the relatively coarse maximum criterion used by stability selection, the integral in \cref{eq:ipss} incorporates information about the stability paths over a wide range of $\lambda$ values. In \cref{fig:stability_paths}, for example, IPSS captures the fact that the stability paths of the true features rise at different rates, with some overtaking many of the false features more gradually than others, a point missed by MB, UM, and $r$-concave.


\subsection{Parameters}\label{sec:parameters}


The role of $f$ in \cref{eq:ipss} is to transform the selection probabilities $\hat{\pi}_j$ to improve performance. Importantly, not all functions $f$ can be used: To implement IPSS with a specific $f$, one must prove a corresponding bound on E(FP). Different functions yield different bounds, and tighter bounds often yield more true positives at the same target E(FP). Thus, functions with tighter bounds are generally more desirable. In \cref{thrm:main}, we establish upper bounds on E(FP) for the class of functions
\begin{align}\label{eq:half}
	h_m(x) &= (2x - 1)^m \1(x\geq 0.5),
\end{align}
where $m\in\mathbb{N}$ (the $h$ stands for ``half", since $h_m$ is nonzero on half the unit interval). We focus on this class---specifically IPSS with $f=h_2$ and $f=h_3$, denoted IPSS(quad) and IPSS(cubic), respectively---because they yield the tightest bounds on E(FP) among the many functions that we consider (\cref{sup_sec:other_functions}). Simulation and real-data results in \cref{sec:simulations}, \cref{sec:applications}, and the Supplement show that both methods routinely identify more true positives than all of the stability selection methods, consistent with the relative strengths of the theoretical bounds discussed in \cref{sec:compare}. 
Furthermore, IPSS(cubic) typically identifies at least as many true positives as IPSS(quad). 
The lone disadvantage of IPSS(cubic) relative to IPSS(quad) and stability selection is that its tighter E(FP) bound makes it more sensitive to violations of \cref{cond:1} (the only assumption besides iid samples that is required by \cref{thrm:main}), and thus more prone to overshooting the target E(FP). However, this primarily occurs in a subset of the logistic regression experiments (\cref{fig:compare_linear_class_200,fig:compare_linear_class_1000}), and even then the target E(FP) is only exceeded by $\leq 2$ false positives when $p=200$, or $\leq 4$ false positives when $p=1000$. For comparison, cross-validation produces $\geq 17$ false positives when $p=200$ and $\geq 50$ false positives when $p=1000$ on the same data.

For IPSS, the interval $\Lambda=[\lmin,\lmax]$ is defined as follows. The upper endpoint $\lmax$ is the same as in stability selection and equally inconsequential. The lower endpoint $\lmin$ is based on a bound of the form
$\evipss \leq \mathcal{I}(\lambda,\lmax) / \tau$ that depends on $f$,
where $\mathcal{I}(\lambda,\lmax)$ is an integral over $[\lambda,\lmax]$ such as in \cref{eq:quadratic,eq:cubic}. Specifically, we define
\begin{align}\label{eq:lipss}
    \lmin &= \inf\big\{\lambda\in(0,\lmax) : \mathcal{I}(\lambda,\lmax) \leq C\big\}
\end{align}
for a fixed cutoff, $C$. We always use $C=0.05$, but extensive sensitivity analyses in \cref{sup_sec:sensitivity_to_C} show that similar results are obtained for any choice of $C$ over a wide range of settings, and that both IPSS methods outperform all versions of stability selection regardless of this choice. Further details about the construction of $\Lambda$ and the evaluation of \cref{eq:lipss} are in \cref{sup_sec:lipss}.

The probability measure $\mu$ in \cref{eq:ipss} weights different values of $\lambda$. While \cref{thrm:main,thrm:simplified} hold for any choice $\mu$, we focus on the family of probability measures $\mu_\alpha(d\lambda)\propto \lambda^{-\alpha}d\lambda$ parametrized by $\alpha\in\mathbb{R}$, as detailed in \cref{sup_sec:riemann}. The values $\alpha=0$ and $\alpha=1$ correspond to averaging over $\Lambda$ on linear and log scales, respectively. Experiments in \cref{sup_sec:sensitivity} show that larger values of $\alpha$ generally yield more true positives, but are more likely to violate the E(FP) bound. This is because larger values of $\alpha$ place greater weight on smaller regularization values, where \cref{cond:1} is more likely to be violated; see \cref{sup_sec:condition}. Nevertheless, both IPSS(quad) and IPSS(cubic) identify more true positives than all of the stability selection methods while still controlling E(FP) across a wide range of $\alpha$ values (\cref{sup_sec:sensitivity_to_mu}). As a general default, $\alpha=1$ works well. Other choices of $\alpha$ based on known quantities---such as the base estimator or the number of features---can lead to even better performance; see \cref{sup_sec:sensitivity_to_mu}.

The threshold $\tau$ is determined by specifying a target E(FP), denoted $\mathrm{E(FP)}_*$, and replacing the inequality in \cref{eq:ipss_bound} with an equality, representing a worst-case scenario under the assumptions of \cref{thrm:main}. This gives $\tau = \mathcal{I}(\Lambda)/\mathrm{E(FP)}_*$ and \cref{eq:ipss} becomes
\begingroup
\small
\begin{align}\label{eq:ipss_efp}
	\hat{S}_{\mathrm{IPSS}} &= \left\{j : \mathsmaller{\int}_\Lambda f(\hat{\pi}_j(\lambda))\mu(d\lambda) \geq \frac{\mathcal{I}(\Lambda)}{\mathrm{E(FP)}_*}\right\}
		= \Big\{j : \mathrm{E(FP)}_* \geq \mathtt{efp}(j)\Big\}
\end{align}
\endgroup
where $\mathcal{I}(\Lambda) = \mathcal{I}(\lmin,\lmax)$ and, for each $j\in\{1,\dots,p\}$, the \textit{efp score} of $j$ is defined as
\begin{align}\label{eq:efp}
	\efp(j) &= \min\left\{\frac{\mathcal{I}(\Lambda)}{\mathsmaller{\int}_\Lambda f(\hat{\pi}_j(\lambda))\mu(d\lambda)},\ p\right\}.
\end{align}
Under the assumptions of \cref{thrm:main}, $\efp(j)$ is the smallest bound on E(FP) if $j$ is to be selected. The minimum in \cref{eq:efp} accounts for $\int_\Lambda f(\hat{\pi}_j(\lambda))\mu(d\lambda)$ being 0, in which case $j$ is never selected since E(FP) is at most the number of features, $p$.

\cref{eq:ipss_efp,eq:efp} help to understand why IPSS does not depend strongly on $\mu$ and $C$ (which determines $\Lambda$).  The key quantity is $\mathcal{I}(\Lambda)/\int_\Lambda f(\hat\pi_j(\lambda))\mu(d\lambda)$, in which the numerator and denominator are expectations with respect to $\mu$ over $\Lambda$. IPSS depends primarily on the integrands of these functions, which are determined by the function $f$, rather than the probability measure $\mu$ and its support $\Lambda$.


\subsection{Computation}\label{sec:computation}


\cref{alg:ipss} provides a step-by-step procedure for IPSS. Numerous works have noted that the number of subsampling iterations $B$ is inconsequential provided it is sufficiently large \citep{shah}; $B=50$ is a common choice, and the one used throughout this work. The grid in Step~\ref{item:grid} is used to accurately and efficiently approximate all integrals in the algorithm by simple Riemann sums (\cref{prop:riemann}). We use $r=25$ grid points. Like many of the other parameters, $r$ is inconsequential provided it is sufficiently large; in our experience, values greater than $15$ suffice. This is because the functions $h_m$, the stability paths, the integrand in the upper bound $\mathcal{I}(\Lambda)$, and the measures $\mu_\alpha$ are all very numerically stable. The bounds in Step~\ref{item:compute_I} are in \cref{thrm:simplified} for IPSS(quad) and IPSS(cubic). We find no discernible difference in computation time between IPSS and MB. This is because IPSS and MB both compute the $\hat{\pi}_j$ using \cref{alg:ss}, which is much more expensive than evaluation of either \cref{eq:ss} or \cref{eq:ipss}. For more on the computational requirements of \cref{alg:ss}, see \citet[Section 2.6]{mb}. 

\begin{algorithm}
\caption{(Integrated path stability selection)}\label{alg:ipss}
\begin{algorithmic}[1]
\Require{Data $\bm{z}_1,\ldots,\bm{z}_n$, selection algorithm $\hat{S}$, number of iterations $B$, function $f$, probability measure $\mu$, target $\evtarget$, integral cutoff value $C$, and number of grid points $r$.}

\item Compute $\lmin$ and $\lmax$ as described above and in \cref{sup_sec:lipss}.

\item\label{item:grid} Partition $\Lambda=[\lmin,\lmax]$ into $r$ grid points, typically on a log scale.

\item\label{item:compute_I} Compute $\mathcal{I}(\lmin,\lmax)$ using the relevant upper bound on E(FP) and \cref{prop:riemann}. 

\item Estimate selection probabilities via \cref{alg:ss} with $\bm{z}_1,\dots,\bm{z}_n$, $\hat{S}$, $\Lambda$, and $B$.

\item Approximate $\int_{\Lambda} f(\hat\pi_j(\lambda))\mu(d\lambda)$ using \cref{prop:riemann}, then compute $\efp(j)$ using \cref{eq:efp}.

\Ensure{Selected features $\hat{S}_{\mathrm{IPSS},f}=\{j : \efp(j)\leq\mathrm{E(FP)}_*\}$.}
\end{algorithmic}
\end{algorithm}


\subsection{False discovery rate}\label{sec:fdr}


IPSS and other forms of stability selection focus on E(FP) because it is interpretable and theoretically tractable. Another quantity of interest, the \textit{false discovery rate} (FDR), is the expected ratio between the number of false positives and the total number of features selected, $\mathrm{FDR} = \mathrm{E(FP/(TP + FP))}$. When $p$ is large, the FDR is well-approximated by $\mathrm{E(FP)}/\mathrm{E(TP + FP)}$ \citep{storey}. Thus, making the additional approximation $\lvert\hat{S}_{\mathrm{IPSS}}\rvert\approx \mathrm{E(TP + FP)}$, we have $\mathrm{FDR}\approx\mathrm{E(FP)}/\lvert\hat{S}_{\mathrm{IPSS}}\rvert$. Relabeling features by their efp scores so that $\efp(1)\leq\efp(2)\leq\dots\leq\efp(p)$, the set of features $\{1,\dots,j\}$ has an approximate FDR that is bounded above by $\efp(j)/j$ for each $j\in\{1,\dots,p\}$. Hence, instead of specifying $\mathrm{E(FP)}_*$, one can either (i) specify a target FDR, say $\mathrm{FDR}_*$, and choose the largest $j$ such that $\mathtt{efp}(j)/j \leq \mathrm{FDR}_*$, or (ii) choose $j$ to minimize $\mathtt{efp}(j)/j$. The resulting set of selected features is then $\{1,\dots,j\}$, that is, the features with the $j$ smallest efp scores.


\section{Theory}\label{sec:theory}


In this section, we present our theoretical results (\cref{sec:main_result}) and compare them to those of \cite{mb} and \cite{shah} (\cref{sec:compare}).
Our main result, \cref{thrm:main}, establishes a bound on E(FP) for IPSS with the functions $h_m$ defined in \cref{eq:half}. \cref{thrm:simplified} gives simplified formulas for this bound that we use in practice.
All proofs are in \cref{sup_sec:proofs}. Additional results for other choices of $f$ and their proofs are in \cref{sup_sec:other_functions}.

\subsection{Preliminaries}\label{sec:preliminaries}
It is assumed that the random vectors $\bm{Z}_1,\dots,\bm{Z}_n$, the random subsets $A_1,\dots,A_{2 B}$, and any randomness in the feature selection algorithm $\hat{S}$ are defined on a common probability space $(\Omega,\mathcal{F},\mathbb{P})$. Furthermore, $\mathrm{E}$ always denotes expectation with respect to $\mathbb{P}$. 
Let $\Lambda$ be a Borel measurable subset of $(0,\infty)$ equipped with the Borel sigma-algebra, let $\mu$ be a probability measure on $\Lambda$, and 
assume $\hat{S}_\lambda(\bm{Z}_A)$ is measurable as a function on $\Lambda\times\Omega$ for all $A\subseteq\{1,\dots,n\}$.


\subsection{Main results}\label{sec:main_result}


The following condition is used in \cref{thrm:main}. Recall that $S$ is the unknown subset of true features, and $S^c = \{1,\ldots,p\}\setminus S$ is its complement. Let $q(\lambda) = \mathrm{E}\lvert\hat{S}_\lambda(\bm{Z}_{1:\nhalf})\rvert$ denote the expected number of features selected by $\hat{S}_\lambda$ on half the data.

\begin{condition}\label{cond:1}
We say \textnormal{\cref{cond:1} holds for $m$} if for all $\lambda\in\Lambda$,
\begin{align}\label{eq:simultaneous}
    \max_{j\in S^c}\,\mathbb{P}\bigg(j\in\bigcap_{b=1}^m \big(\hat{S}_\lambda(\bm{Z}_{A_{2b-1}})\cap \hat{S}_\lambda(\bm{Z}_{A_{2b}})\big)\bigg) &\leq (q(\lambda)/p)^{2m}.
\end{align}
\end{condition}
\noindent\cref{eq:simultaneous}, discussed in greater detail below, says the probability that any non-true feature $j$ is selected by both $\hat{S}_\lambda(\bm{Z}_{A_{2 b-1}})$ and $\hat{S}_\lambda(\bm{Z}_{A_{2 b}})$ in $m$ resampling iterations is no greater than the $2m$-th power of the expected proportion of features selected by $\hat{S}_\lambda$ using half the data.

\begin{theorem}\label{thrm:main}
Let $\tau\in(0,1]$ and $m\in\mathbb{N}$. Define $\hat{S}_{\mathrm{IPSS},h_m}$ as in \cref{eq:ipss,eq:half}. If \cref{cond:1} holds for all $m'\in\{1,\ldots,m\}$, then
\begin{align}\label{eq:ipss_bound}
    \mathrm{E(FP)} &= \mathrm{E}\lvert\hat{S}_{\mathrm{IPSS},h_m}\cap S^c\rvert 
        \leq \frac{p}{\tau B^m}\sum_{k_1+\cdots+k_B=m}\frac{m!}{k_1!k_2!\cdots k_B!}\int_\Lambda (q(\lambda)/p)^{2\sum_b \1(k_b\neq 0)}\mu(d\lambda)
\end{align}
where $B$ is the number of subsampling steps in \cref{alg:ss} and the sum is over all nonnegative integers $k_1,\ldots,k_B$ such that $k_1 + \cdots + k_B = m$.
\end{theorem}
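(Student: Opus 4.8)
The plan is to bound $\mathrm{E(FP)}$ by summing over $j\in S^c$ the probability that $j$ is selected, controlling each such probability by Markov's inequality, and then reducing everything to the quantity bounded by \cref{cond:1}. Since $\mathrm{E(FP)}=\sum_{j\in S^c}\mathbb{P}\big(\int_\Lambda h_m(\hat{\pi}_j(\lambda))\mu(d\lambda)\geq\tau\big)$ and the integrand $h_m\geq 0$, Markov's inequality gives $\mathbb{P}\big(\int_\Lambda h_m(\hat{\pi}_j)\,d\mu\geq\tau\big)\leq\tau^{-1}\,\mathrm{E}\int_\Lambda h_m(\hat{\pi}_j)\,d\mu$. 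Applying Tonelli (justified by the measurability assumed in \cref{sec:preliminaries}) to interchange the expectation with the integral over $\Lambda$, and bounding $|S^c|\leq p$, it suffices to show that for each fixed $\lambda$ and each $j\in S^c$, $\mathrm{E}[h_m(\hat{\pi}_j(\lambda))]$ is at most the multinomial sum in \cref{eq:ipss_bound} without the $p/\tau$ prefactor.

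The crux is a clean rewriting of $h_m$ combined with the complementary-pairs structure of \cref{alg:ss}. I would first observe that $h_m(x)=(2x-1)^m\1(x\geq 0.5)=\big((2x-1)_+\big)^m$, where $(\cdot)_+$ denotes positive part. Setting $U_b=\1(j\in\hat{S}_\lambda(\bm{Z}_{A_{2b-1}}))$ and $V_b=\1(j\in\hat{S}_\lambda(\bm{Z}_{A_{2b}}))$ gives $2\hat{\pi}_j(\lambda)-1=\frac{1}{B}\sum_{b=1}^B(U_b+V_b-1)$. Because the positive part is subadditive and positively homogeneous, and because $(U_b+V_b-1)_+=U_bV_b$ for $\{0,1\}$-valued $U_b,V_b$, I would deduce $(2\hat{\pi}_j-1)_+\leq\frac{1}{B}\sum_{b=1}^B U_bV_b$, hence $h_m(\hat{\pi}_j)\leq\big(\frac{1}{B}\sum_b U_bV_b\big)^m$. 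Expanding this $m$-th power by the multinomial theorem and using $(U_bV_b)^{k_b}=U_bV_b$ whenever $k_b\geq 1$ (as $U_bV_b$ is an indicator), each monomial collapses to $\prod_{b:k_b\neq 0}U_bV_b$, i.e. the indicator that $j\in\hat{S}_\lambda(\bm{Z}_{A_{2b-1}})\cap\hat{S}_\lambda(\bm{Z}_{A_{2b}})$ for every $b$ with $k_b\neq 0$.

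It then remains to take expectations and invoke \cref{cond:1}. Because the complementary pairs $(A_{2b-1},A_{2b})$ are drawn i.i.d. across $b$ and the algorithm's internal randomness is independent across $b$, the selection events $\{j\in\hat{S}_\lambda(\bm{Z}_{A_{2b-1}})\cap\hat{S}_\lambda(\bm{Z}_{A_{2b}})\}$ are conditionally i.i.d. given $\bm{Z}_{1:n}$, hence exchangeable in $b$; thus the expectation of each monomial depends only on $\ell:=\sum_b\1(k_b\neq 0)$ and equals the probability for the first $\ell$ pairs, which \cref{cond:1} (applied with $m'=\ell\in\{1,\ldots,m\}$) bounds by $(q(\lambda)/p)^{2\ell}$. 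Substituting this bound for every monomial reproduces exactly the multinomial sum in \cref{eq:ipss_bound}. The main obstacle I anticipate is making the exchangeability step rigorous: conditioning on $\bm{Z}_{1:n}$ to argue that the probability of $j$ being selected in any $\ell$ designated pairs coincides with the first-$\ell$-pairs probability bounded in \cref{cond:1}, since the condition is stated only for consecutively indexed pairs while the multinomial expansion produces arbitrary index sets. The remaining steps (Markov, Tonelli, the identity $h_m(x)=((2x-1)_+)^m$, and the multinomial bookkeeping) are routine once this correspondence is in place.
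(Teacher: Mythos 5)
Your proposal is correct and follows essentially the same route as the paper: both reduce $h_m(\hat{\pi}_j)$ to $\widetilde{\pi}_j^{\,m}$ via the complementary-pairs inequality $2\hat{\pi}_j-1\leq\widetilde{\pi}_j$ (your positive-part/subadditivity derivation is equivalent to the paper's use of $\hat{\pi}_j\leq\tfrac{1}{2}(1+\widetilde{\pi}_j)$ together with the identity $h_m(\tfrac{1}{2}(1+x))=x^m$), then apply Markov's inequality, Tonelli, the multinomial expansion with indicator collapse, and \cref{cond:1}. The exchangeability point you flag is handled only implicitly in the paper's auxiliary lemma, and your justification via the i.i.d.\ draws of the pairs $(A_{2b-1},A_{2b})$ across iterations is exactly the right way to pass from arbitrary index sets to the consecutively indexed pairs appearing in \cref{cond:1}.
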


\cref{eq:ipss_bound} bounds the expected number of false positives when using IPSS with $h_m$. The following theorem shows that the bound simplifies considerably for certain choices of $m$.

\begin{theorem}\label{thrm:simplified}
Let $\tau\in(0,1]$. If \cref{cond:1} holds for $m=1$, then IPSS with $h_1$ satisfies
\begin{align}\label{eq:linear}
	\mathrm{E(FP)} &\leq \frac{1}{\tau }\int_\Lambda \frac{q(\lambda)^2}{p}\mu(d\lambda);
\end{align}
if \cref{cond:1} holds for $m\in\{1,2\}$, then IPSS with $h_2$ satisfies
\begin{align}\label{eq:quadratic}
	\mathrm{E(FP)} &\leq \frac{1}{\tau}\int_\Lambda \bigg(\frac{q(\lambda)^2}{Bp} + \frac{(B-1)q(\lambda)^4}{Bp^3}\bigg)\mu(d\lambda);
\end{align}
and if \cref{cond:1} holds for $m\in\{1,2,3\}$, then IPSS with $h_3$ satisfies
\begin{align}\label{eq:cubic}
	\mathrm{E(FP)} &\leq \frac{1}{\tau}\int_\Lambda \bigg(\frac{q(\lambda)^2}{B^2p} + \frac{3(B-1)q(\lambda)^4}{B^2p^3} + \frac{(B-1)(B-2)q(\lambda)^6}{B^2p^5}\bigg)\mu(d\lambda).
\end{align}
\end{theorem}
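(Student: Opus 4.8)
The plan is to derive all three displays as special cases of \cref{thrm:main} by evaluating the combinatorial sum in \cref{eq:ipss_bound} explicitly for $m=1$, $m=2$, and $m=3$. Each assertion of \cref{thrm:simplified} assumes \cref{cond:1} for precisely the set $\{1,\dots,m\}$ needed to invoke \cref{thrm:main} at that value of $m$, so the hypotheses line up exactly and no further probabilistic argument is required; everything reduces to simplifying the right-hand side of \cref{eq:ipss_bound}.

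The one structural observation that drives the simplification is that the integrand $(q(\lambda)/p)^{2\sum_b \1(k_b\neq 0)}$ depends on the composition $(k_1,\dots,k_B)$ only through $j := \sum_{b=1}^B \1(k_b\neq 0)$, the number of nonzero parts. I would therefore regroup the sum over $\{k_1+\cdots+k_B=m\}$ by first summing over $j\in\{1,\dots,m\}$ and then over compositions with exactly $j$ nonzero parts. Inside the inner sum the integral factors out, leaving a purely combinatorial weight
\[
W_{m,j} = \sum_{\substack{k_1+\cdots+k_B=m\\ \#\{b:\,k_b\neq 0\}=j}} \frac{m!}{k_1!\cdots k_B!}
\]
multiplying $\int_\Lambda (q(\lambda)/p)^{2j}\mu(d\lambda)$, so that \cref{eq:ipss_bound} becomes $\frac{p}{\tau B^m}\sum_{j=1}^m W_{m,j}\int_\Lambda (q(\lambda)/p)^{2j}\mu(d\lambda)$.

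To evaluate $W_{m,j}$, I would choose which $j$ of the $B$ slots are nonzero ($\binom{B}{j}$ ways) and then sum the multinomial coefficients over strictly positive compositions of $m$ into those $j$ parts; this latter sum counts the surjections from an $m$-element set onto a $j$-element set, giving $j!\,S(m,j)$, where $S(m,j)$ is the Stirling number of the second kind. Hence $W_{m,j} = \binom{B}{j}\,j!\,S(m,j) = (B)_j\,S(m,j)$, with $(B)_j = B(B-1)\cdots(B-j+1)$ the falling factorial. For the small values at hand these are immediate: $S(1,1)=1$; $S(2,1)=S(2,2)=1$; and $S(3,1)=1$, $S(3,2)=3$, $S(3,3)=1$.

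The rest is bookkeeping. Substituting $W_{m,j}=(B)_j\,S(m,j)$ and absorbing the explicit factor $p$ into the $p^{-2j}$ from the integrand yields a coefficient $\frac{(B)_j\,S(m,j)}{\tau B^m}$ on $\int_\Lambda q(\lambda)^{2j}/p^{2j-1}\,\mu(d\lambda)$ for each $j$; collecting the terms $j\le m$ gives \cref{eq:linear} for $m=1$, \cref{eq:quadratic} for $m=2$, and \cref{eq:cubic} for $m=3$ after cancelling common factors of $B$. There is no genuine obstacle, since the argument is entirely deterministic once \cref{thrm:main} is in hand; the only things to watch are the two bookkeeping points, namely correctly identifying $W_{m,j}$ with the surjection count $(B)_j\,S(m,j)$ and carefully tracking the powers of $p$ and the falling factorials $(B)_j$ against $B^m$.
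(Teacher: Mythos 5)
Your proposal is correct and follows essentially the same route as the paper: both invoke \cref{thrm:main} and then group the compositions $k_1+\cdots+k_B=m$ by the number $j$ of nonzero parts, with the weights you call $W_{m,j}=\binom{B}{j}\,j!\,S(m,j)$ matching the paper's hand enumeration ($B$, $B(B-1)$, $3B(B-1)$, $B(B-1)(B-2)$, etc.) exactly. The only difference is cosmetic: the paper counts the cases for $m=1,2,3$ directly, whereas you package the count as a surjection/Stirling-number identity valid for general $m$.
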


\noindent 
Taking the limit as $B\to\infty$, \cref{eq:quadratic,eq:cubic} become
\begin{align}
\label{eq:asymptotic}
	\limsup_{B\to\infty} \mathrm{E(FP)} &\leq \frac{1}{\tau p^3}\int_\Lambda q(\lambda)^4\mu(d\lambda)
		\quad\text{and}\quad
		\limsup_{B\to\infty} \mathrm{E(FP)}
		\leq \frac{1}{\tau p^5}\int_\Lambda q(\lambda)^6\mu(d\lambda).
\end{align}
Although we do not use these asymptotic bounds, the pattern from $h_1$ (\cref{eq:linear}) to $h_2$ to $h_3$ (\cref{eq:asymptotic}) provides insight into the relationships between $\mathrm{E(FP)}$, $p$, and $h_m$.

\cref{cond:1} holds for $m=1$ whenever $\max_{j\in S^c} \pi_j(\lambda) \leq q(\lambda)/p$ for all $\lambda\in\Lambda$ since $\bm{Z}_1,\ldots,\bm{Z}_n$ are i.i.d.\ and independent of $A_1,\ldots,A_{2 B}$, and thus for any $j\in S^c$,
\begin{align}\label{eq:pi_squared}
    \mathbb{P}\big(j\in \hat{S}_\lambda(\bm{Z}_{A_{2b-1}})\cap \hat{S}_\lambda(\bm{Z}_{A_{2b}})\big) &= 
    \mathrm{E}\Big(\mathrm{E}\big(\1(j\in \hat{S}_\lambda(\bm{Z}_{A_{2 b-1}}))\1(j\in\hat{S}_\lambda(\bm{Z}_{A_{2 b}}))\;\big\vert\; A_{2 b},A_{2 b-1} \big)\Big) \notag \\
        &= \mathrm{E}\big(\pi_j(\lambda)\pi_j(\lambda)\big) 
        = \pi_j(\lambda)^2
        \leq (q(\lambda)/p)^2.
\end{align}
In turn, the $\max_{j\in S^c} \pi_j(\lambda) \leq q(\lambda)/p$ condition is implied by the exchangeability and not-worse-than-random-guessing conditions used by \cite{mb} and \cite{shah} in the stability selection analogues of \cref{thrm:main}, detailed in \cref{sec:compare}.  To be precise, \cite{shah} do not require these conditions in their theory, but they are always assumed when implementing their versions of stability selection in practice. An empirical study and further details about \cref{cond:1} for the practically relevant cases of $m\in\{1,2,3\}$ are in \cref{sup_sec:condition}.


\subsection{Comparison to stability selection}\label{sec:compare}


The analogue of \cref{eq:ipss_bound} for stability selection (\cref{eq:ss}) under the exchangeability and not-worse-than-random-guessing conditions of \cite{mb} is
\begin{align}\label{eq:ss_bound_2}
	\evmb &\leq \frac{q_\Lambda^2}{(2\tau - 1)p},
\end{align}
where $q_\Lambda = \mathrm{E}\big\lvert\textstyle{\bigcup}_{\lambda\in\Lambda}\hat{S}_{\lambda}(\bm{Z}_{1:\nhalf})\big\rvert$.
Under the additional assumptions that (a) $q(\lambda)^2/p\leq 1/\sqrt{3}$ for all $\lambda\in\Lambda$ and (b) the distributions of the simultaneous selection probabilities (defined in \cref{sup_sec:proofs}) are unimodal, \cite{shah} establish the stronger bound
\begin{align}\label{eq:um}
    \evum &\leq \frac{C(\tau,B)\, q_\Lambda^2}{p}
\end{align} 
for stability selection where, for $\tau\in\big\{\frac{1}{2}+1/B,\, \frac{1}{2} + 3/(2 B),\, \frac{1}{2} + 2/B,\dots, 1\big\}$,
\begin{align*}
	C(\tau,B) &=
		\begin{cases}
		\displaystyle\frac{1}{2(2\tau-1-\frac{1}{2 B})} & \text{if }  \tau\in\Big(\min\!\Big\{\tfrac{1}{2}+\frac{q_\Lambda^2}{p^2},\;\frac{1}{2} + \frac{1}{2 B} + \frac{3 q_\Lambda^2}{4 p^2}\Big\},\;\; 3/4\Big], \\[14pt]
		\displaystyle\frac{4(1 - \tau + \frac{1}{2 B})}{1 + \frac{1}{B}} & \text{if } \tau\in(3/4,\, 1].
		\end{cases}
\end{align*}
There are several reasons the IPSS bounds in \cref{thrm:simplified} are significantly tighter than the Meinshausen and B\"uhlmann (MB) and unimodal (UM) bounds in \cref{eq:ss_bound_2,eq:um}. First, the IPSS bounds hold for all $\tau\in(0,1]$, whereas the MB and UM bounds are restricted to $\tau\in(0.5,1]$ since they go to $\infty$ as $\tau \to 0.5$. Second, all of the terms in the integrands of \cref{eq:quadratic,eq:cubic} are typically orders of magnitude smaller than $q^2/p$ in both \cref{eq:ss_bound_2,eq:um}. Indeed, since $\max\{q(\lambda) : \lambda\in\Lambda\}\leq q_\Lambda$ and $q(\lambda)$ is much smaller than $p$ over a wide range of $\lambda$ values in sparse or even moderately sparse settings, $q(\lambda)^4/p^3$ and $q(\lambda)^6/p^5$ are typically much smaller than $q_\Lambda^2/p$, and this difference becomes more pronounced as $p$ grows. 
Additionally, the lower order terms in \cref{eq:quadratic,eq:cubic} are $O(1/B)$ or $O(1/B^2)$, so with our typical choice of $B = 50$, the contribution of these terms is reduced even further, tending to 0 as $B\to\infty$ (\cref{eq:asymptotic}).
By contrast, the MB bound has no $B$ dependence, and the UM bound depends only weakly on $B$.

\cite{shah} also derive an upper bound based on assumptions of $r$-concavity. They argue that if the estimated selection probabilities are $-1/4$-concave and the simultaneous selection probabilities (\cref{sup_sec:proofs}) are $-1/2$-concave for every feature in $S^c$, then
\begin{align}\label{eq:r-concave}
	\mathrm{E(FP)} &\leq \min\Big\{D(q_\Lambda^2/p^2, 2\tau - 1, B, -1/2),\; D(q_\Lambda/p, \tau, 100, -1/4)\Big\}\,p
\end{align}
where $D(\eta, \tau, B, r)$ is the maximum of $\mathbb{P}(X\geq \tau)$ over all $r$-concave random variables $X$ that are supported on $\{0, 1/B, 2/B, \dots, 1\}$ and satisfy $\mathbb{E}(X)\leq \eta$ \citep{shah}. While tighter than the MB and UM bounds, the function $D$---and hence, the upper bound in \cref{eq:r-concave}---does not have a closed form and must be approximated with an additional algorithm. This lack of a closed-form expression and the constrained maximization over all $r$-concave random variables makes it difficult to compare the $r$-concave bound with \cref{eq:ipss_bound,eq:ss_bound_2,eq:um} analytically. However, the results in \cref{fig:bounds} indicate that our bounds for IPSS(quad) and IPSS(cubic) are tighter than all of the stability selection bounds---including $r$-concave---especially for target E(FP) values less than 5, which is the most practically relevant range. Furthermore, our empirical results show that IPSS(quad) and IPSS(cubic) consistently identify more true positives than $r$-concave while maintaining E(FP) control across many diverse settings. Thus, while $r$-concave is less conservative than other traditional approaches, it requires stronger assumptions, relies on an additional algorithm due to lack of a closed form, and routinely underperforms IPSS.

\begin{figure*}
\makebox[\textwidth]{\includegraphics[width=\textwidth, height=.32\textheight]{./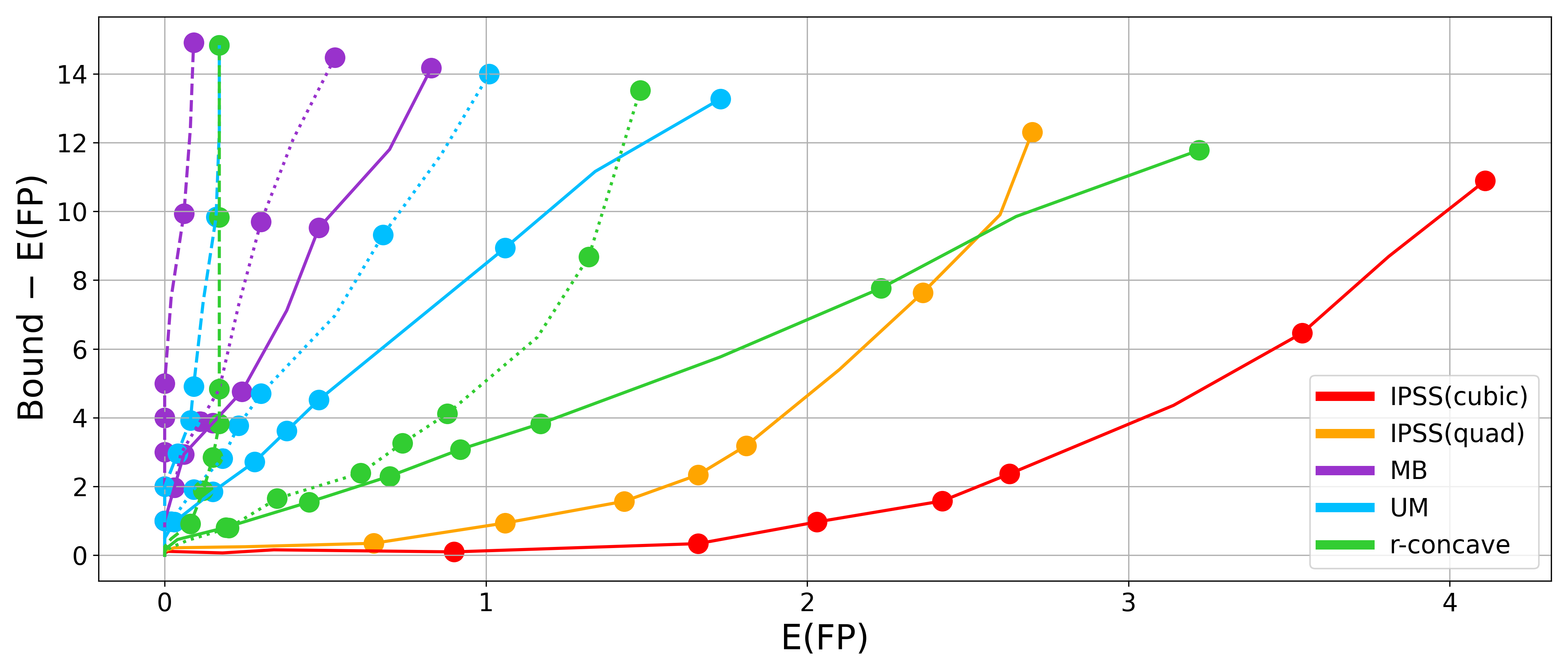}}%
\caption{\textit{Tightness of the E(FP) bounds}. The difference between the theoretical E(FP) bound for each method and its actual E(FP), averaged over $100$ data sets simulated from an independent linear regression model with normal residuals and $p=200$ features, as described in \cref{sec:simulations}, versus actual E(FP). Since all methods replace the inequalities in their E(FP) bounds with equalities, a perfectly calibrated method would generate a horizontal line at 0, that is, $\mathrm{E(FP)}=\mathrm{Bound}$. Dots show the results for each method when $\mathrm{E(FP)}_*$ equals, from left to right, 1, 2, 3, 4, 5, 10, and 15. IPSS(quad) and especially IPSS(cubic) are much closer to their theoretical bounds, particularly when $\mathrm{E(FP)}_*\leq 5$. Solid, dotted, and dashed lines for the stability selection methods correspond to $\tau=0.6$, 0.75, and 0.9, respectively.
}
\label{fig:bounds}
\end{figure*}


\section{Simulations}\label{sec:simulations}


We present results from linear and logistic regression simulations for a variety of feature distributions and base estimators. The performance of IPSS is compared to the stability selection methods of \cite{mb} and \cite{shah}, as well as cross-validation.

\textit{Setup.}
Data are simulated from a linear regression model with normal residuals:
\begin{align*}
    Y_i &= \bm{X}_i^\mathtt{T}\bm{\beta}^* + \epsilon_i,
        \quad
        \epsilon_i\sim \mathcal{N}(0,\sigma^2),
\end{align*}
from a linear model with residuals from a Student's $t$ distribution with $2$ degrees of freedom:
\begin{align*}
    Y_i &= \bm{X}_i^\mathtt{T}\bm{\beta}^* + \epsilon_i,
        \quad
        \epsilon_i\sim t(2),
\end{align*}
and from a binary logistic regression model:
\begin{align*}
    Y_i &\sim \mathrm{Bernoulli}(p_i),
        \quad
        p_i = \frac{\exp(\gamma \bm{X}_i^\mathtt{T}\bm{\beta}^*)}{1 + \exp(\gamma \bm{X}_i^\mathtt{T}\bm{\beta}^*)},
\end{align*}
for $i \in\{1,\ldots,n\}$.
For each simulated data set, the coefficient vector $\bm{\beta}^*$ has $s$ nonzero entries $\beta^*_j\sim\mathrm{Uniform}([-1,-0.5]\cup[0.5,1])$ located at randomly chosen coordinates $j\in\{1,\dots,p\}$, and the remaining $p-s$ entries are set to $0$. We simulate features from the following designs:
\begin{itemize}
\item \textit{Independent}: $\bm{X}_i\sim \mathcal{N}(0,I_p)$ for all $i$, where $I_p$ is the $p\times p$ identity matrix.

\item \textit{Toeplitz:} $\bm{X}_i\sim\mathcal{N}(0,\Sigma)$ where $\Sigma_{j k}=\rho^{\lvert j-k\rvert}$. We consider two cases: $\rho=0.5$ and $\rho=0.9$.

\item \textit{RNA-seq:} $n$ samples and $p$ features are drawn uniformly at random from RNA-sequencing measurements of $6426$ genes from $569$ ovarian cancer patients \citep{linkedomics}.
\end{itemize}
For each $j\in\{1,\dots,p\}$, we standardize the observed features $(x_{1 j},\ldots,x_{n j})$ to have sample mean $0$ and sample variance $1$ before applying $\hat{S}_\lambda$. For linear regression, the observed response $y=(y_1,\ldots,y_n)$ is centered to have sample mean $0$. For linear regression with normal residuals, $\sigma^2$ is chosen to satisfy a specified empirical signal-to-noise ratio (SNR), defined by $\mathrm{SNR} = \sum_{i=1}^n (\bm{x}_i^\mathtt{T}\bm{\beta}^*)^2/(n\sigma^2)$. For logistic regression, $\gamma>0$ determines the strength of the signal. 

The three models (linear regression with normal and Student's $t$ residuals, and logistic regression), four feature designs (independent, Toeplitz with $\rho=0.5$ and $0.9$, and RNA-seq), and two feature dimensions ($p=200$ and $1000$) yield a total of $24$ simulation settings. For linear regression, we perform experiments with lasso, MCP, SCAD, and the adaptive lasso as the base estimators, and for logistic regression the base estimator is $\ell_1$-regularized logistic regression. For IPSS, the parameter $C$ is always set to $0.05$ and $\alpha$ is set to the default values described in \cref{sup_sec:sensitivity_to_mu}. Each experiment consists of $100$ trials, where one trial consists of generating data as above with $n$, $s$, and signal strength chosen according to \cref{tab:simulation_parameters}, estimating the selection probabilities via \cref{alg:ss} with $B=50$ subsamples, and choosing features according to each criterion.

\begin{table}[h!]
\centering
\setlength{\tabcolsep}{4pt} 
\begin{tabular}{|c|c|c|c|c|}
\hline
$p$ & $n$ & $s$ & SNR (regression) & $\gamma$ (classification) \\
\hline
200 & $\mathrm{Uniform}\{50,\dots,200\}$ & $\mathrm{Uniform}\{5,\dots,20\}$ & $\mathrm{Uniform}(1/3,3)$ & $\mathrm{Uniform}(1/2,2)$ \\
\hline
1000 & $\mathrm{Uniform}\{100,\dots,500\}$ & $\mathrm{Uniform}\{10,\dots,40\}$ & $\mathrm{Uniform}(1/3,3)$ & $\mathrm{Uniform}(1/2,2)$ \\
\hline
\end{tabular}
\caption{\textit{Simulation parameters}. The number of samples $n$, number of true features $s$, and the signal strength parameters, SNR and $\gamma$, are randomly selected prior to each trial as above, ensuring that our experiments cover a wide range of settings.}
\label{tab:simulation_parameters}
\end{table}

\textit{Results}. We quantify performance in terms of true positives (TP) and false positives (FP), where a selected feature is a \textit{true positive} if its corresponding $\bm{\beta}^*$ entry is nonzero, and is a \textit{false positive} otherwise. The dashed black line in each FP plot shows the target value of $\mathrm{E(FP)}$. A tight bound on $\mathrm{E(FP)}$ should lead to curves lying close to this line, since all methods replace the inequalities in their respective $\mathrm{E(FP)}$ bounds with equalities to calibrate the parameters. For example, if the target $\mathrm{E(FP)}$ is $2$, a perfectly calibrated algorithm would produce an average FP of $2$.

\begin{figure*}
\includegraphics[width=\textwidth, height=.35\textheight]{./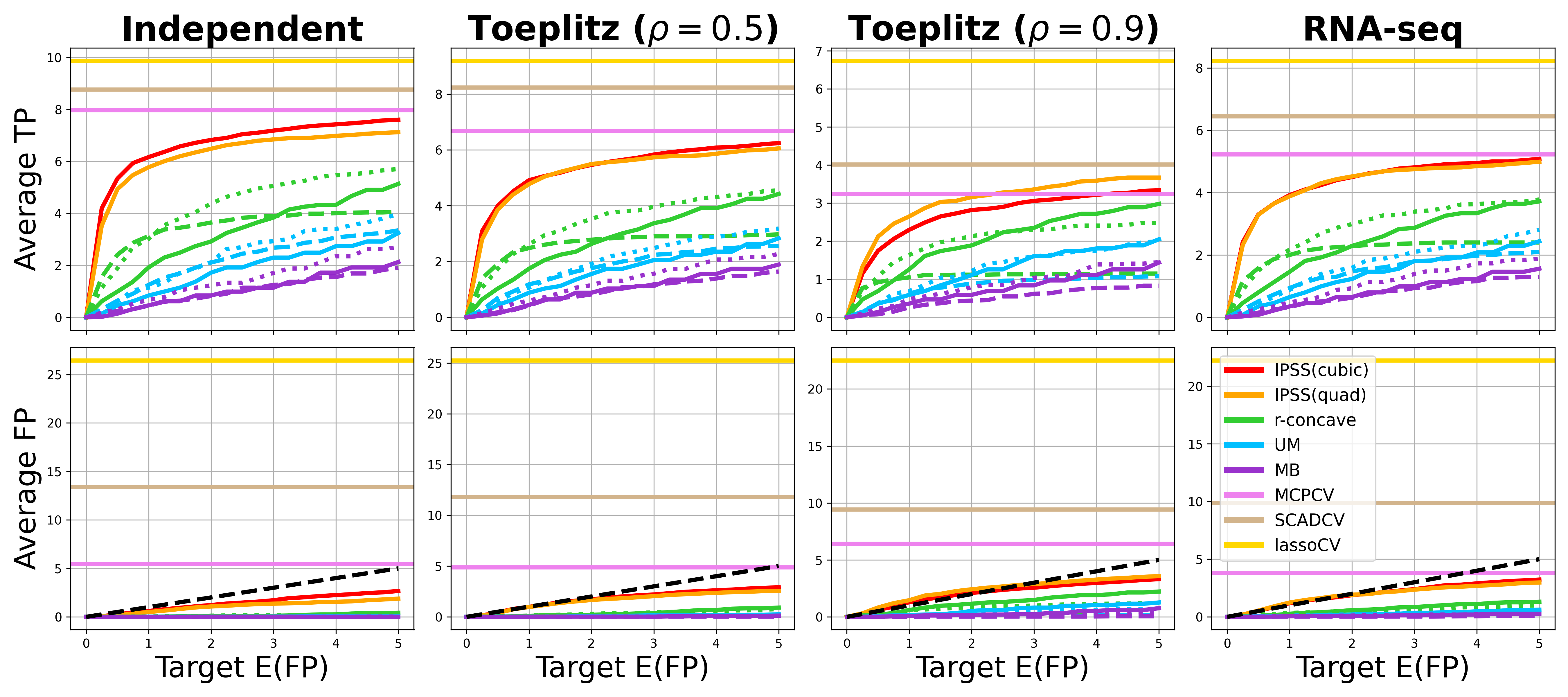}%
\caption{\textit{Linear regression, normal residuals $(p=200)$}. The solid, dotted, and dashed lines for the stability selection methods represent $\tau=0.6$, $0.75$, and $0.9$, respectively. Cross-validation results (horizontal lines) are independent of Target E(FP). The dashed black line represents perfect E(FP) control.}
\label{fig:compare_linear_reg_200}
\end{figure*}

\begin{figure*}
\includegraphics[width=\textwidth, height=.35\textheight]{./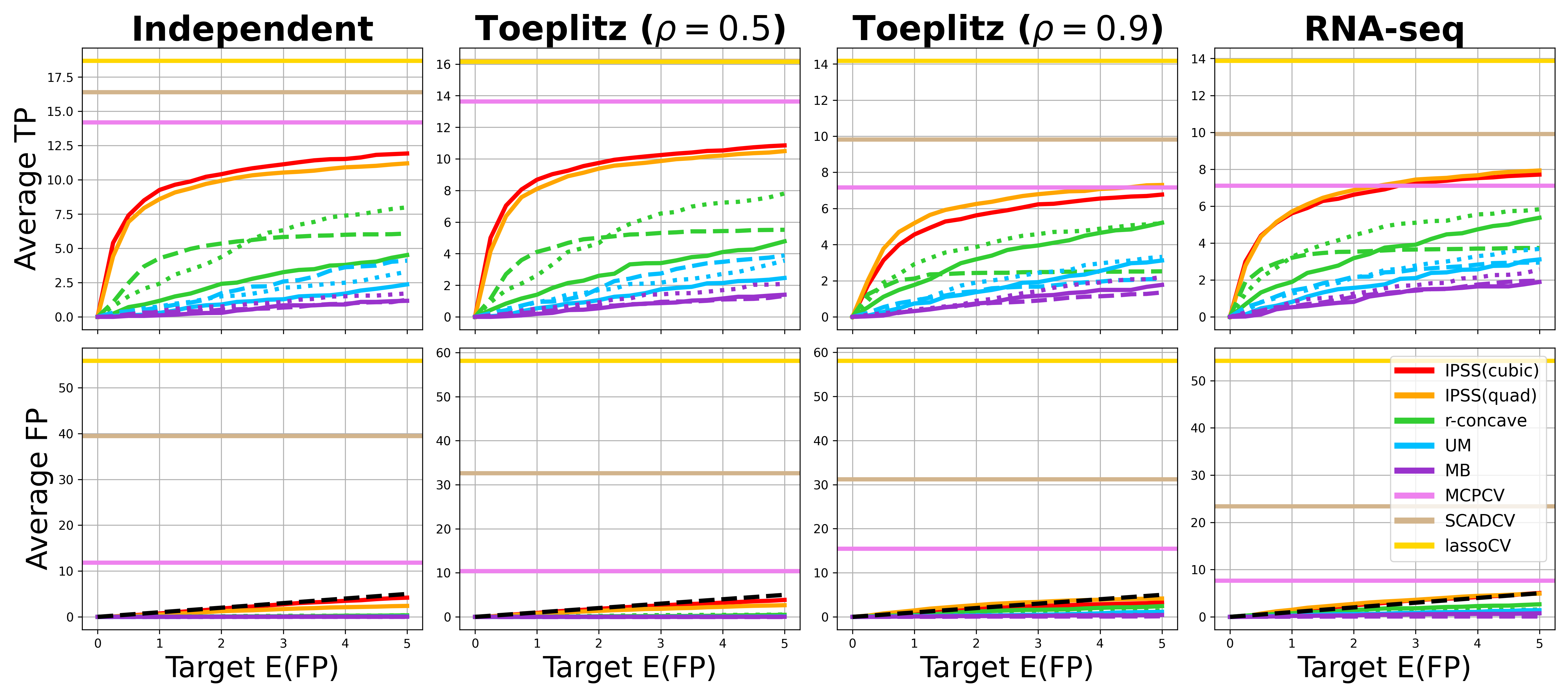}%
\caption{\textit{Linear regression, normal residuals $(p=1000)$}. See  \cref{fig:compare_linear_reg_200} for details.}
\label{fig:compare_linear_reg_1000}
\end{figure*}

\cref{fig:compare_linear_reg_200,fig:compare_linear_reg_1000} show results for linear regression with normal residuals and lasso as the base estimator when $p=200$ and $1000$. Results are shown for IPSS and the stability selection methods as well as lasso, MCP, and SCAD with their regularization parameters by cross-validation (lassoCV, MCPCV, and SCADCV). The remaining results are in \cref{sup_sec:simulations}. \cref{fig:compare_linear_reg_200_df2,fig:compare_linear_reg_1000_df2} show results for linear regression with Student's $t$ residuals and lasso as the base estimator, \cref{fig:compare_linear_reg_200_mcp,fig:compare_linear_reg_1000_mcp} show results for linear regression with normal residuals and MCP as the base estimator, \cref{fig:compare_linear_reg_200_scad,fig:compare_linear_reg_1000_scad} show results for linear regression with normal residuals and SCAD as the base estimator, \cref{fig:compare_linear_reg_200_adaptive,fig:compare_linear_reg_1000_adaptive} show results for linear regression with normal residuals and the adaptive lasso as the base estimator, and \cref{fig:compare_linear_class_200,fig:compare_linear_class_1000} show results for logistic regression with $\ell_1$-regularized logistic regression as the base estimator.

In all 48 experiments, IPSS(quad) achieves more accurate E(FP) control and identifies more true positives than all of the stability selection methods, often substantially so. IPSS(cubic) also achieves more accurate E(FP) control and identifies more true positives than the stability selection methods in almost every experiment, though---as discussed in \cref{sec:parameters}---its empirical E(FP) slightly exceeds the target E(FP) in a few cases, notably logistic regression with Toeplitz ($\rho=0.9$) and RNA-seq designs. This is because violations of \cref{cond:1} are more likely for $\ell_1$-regularized logistic regression than for lasso, MCP, SCAD, or the adaptive lasso. One possible explanation for this---which agrees with our empirical observations---is that $\ell_1$-regularized logistic regression can encounter computational difficulties at small regularization values when $p\gg n$, leading to suboptimal solutions and hence poorly estimated selection probabilities \citep{logistic_regression}. Among the cross-validation methods, lassoCV and SCADCV have a high TP, but also exceedingly high FP. MCPCV does better in terms of limiting the number of false positives, but not as well as IPSS and stability selection. In some experiments, IPSS even identifies more true positives than MCPCV while selecting fewer false positives.

\cref{fig:compare_linear_reg_200,fig:compare_linear_reg_1000} and the results in \cref{sup_sec:simulations} indicate that IPSS provides a better balance between true and false positives across a wide range of settings and base estimators than stability selection and cross-validation. That is, while stability selection has low FP at the expense of low TP, and cross-validation has high TP at the expense of high FP, IPSS stays at or below the target E(FP) while achieving a TP that can approach or even surpass the TP of cross-validation.


\section{Applications}\label{sec:applications}



\subsection{Prostate cancer}\label{sec:prostate}


We applied IPSS and the stability selection methods to reverse-phase protein array (RPPA) measurements of $p=125$ proteins in $n=351$ prostate cancer patients \citep{linkedomics}. The response is \textit{tumor purity}---the proportion of cancerous cells in a tissue sample---and the goal is to identify the genes that are most related to it. Since tumor purity takes values in $[0,1]$, we use lasso as our base selection algorithm. The target $\mathrm{E(FP)}$ is $1$ for all methods, and for MB, UM, and $r$-concave we set $\tau=0.75$. \cref{fig:prostate_rppa} shows the results. 
MB, UM, $r$-concave, IPSS(quad), and IPSS(cubic) select $4$, $4$, $5$, $8$, and $10$ proteins, respectively.
Although it is difficult to know which features should be selected on real data such as this, a literature search presented in \cref{sup_sec:applications} indicates that all $10$ proteins identified by IPSS play a nontrivial role in prostate cancer. 
As in \cref{fig:stability_paths}, the stability selection methods miss important information about the stability paths that IPSS successfully captures. For example, the selection probabilities for PKC, BAK1, and PTEN are essentially $0$ for many $\lambda$ values before abruptly rising above many of the other paths.

\begin{figure*}[!ht]
\makebox[\textwidth]{\includegraphics[width=\textwidth, height=.30\textheight]{./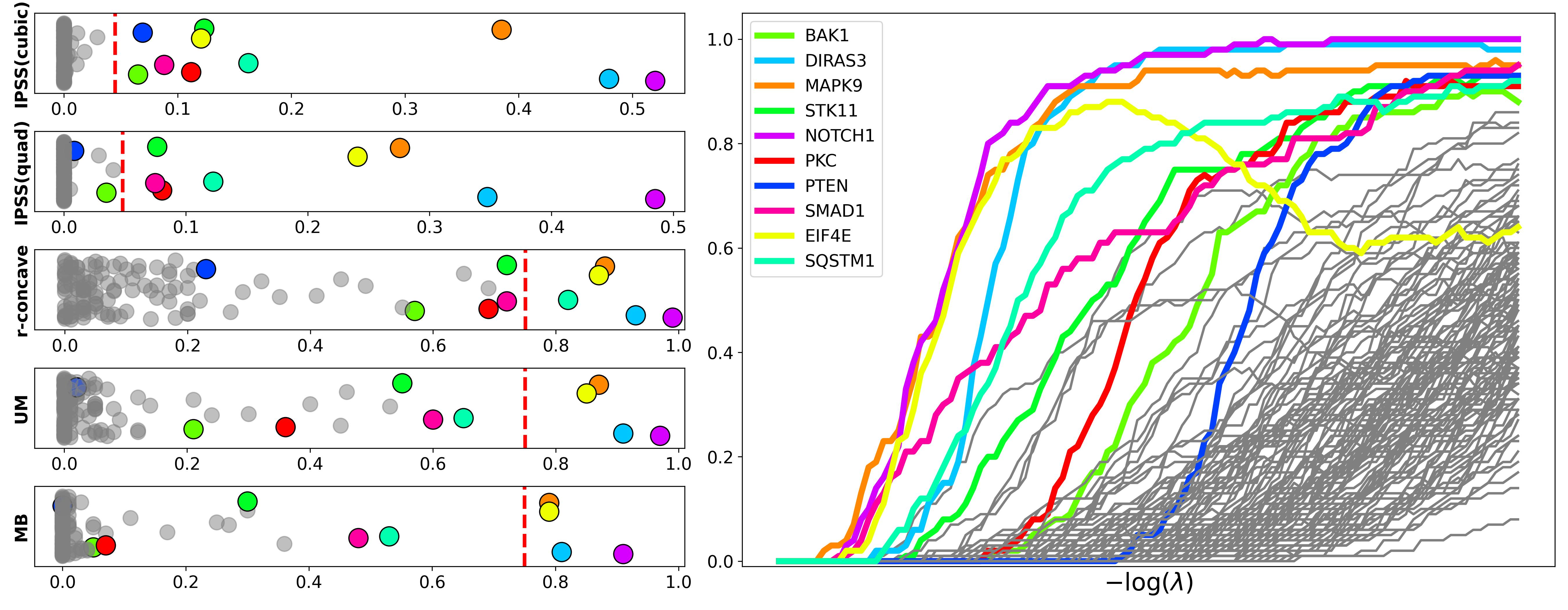}}%
\caption{\textit{Prostate cancer results}. (\textit{Left}) Feature scores and thresholds (vertical red lines) separating selected and unselected genes for each method. Scores are one-dimensional and correspond to the horizontal axes. Every method's set of selected proteins is a subset of the proteins selected by IPSS(cubic), shown in color in all plots; the remaining proteins are in gray.  (\textit{Right}) Estimated stability paths for each protein. The horizontal axis is on a log scale. 
}
\label{fig:prostate_rppa}
\end{figure*}


\subsection{Colon cancer}\label{sec:colon}


We applied IPSS and stability selection to the expression levels of $p=1908$ genes in $n=62$ tissue samples, $40$ cancerous and $22$ normal \citep{colon}. The goal is to identify genes whose expression levels differ between the cancerous and normal samples. Since the response is binary, we use $\ell_1$-regularized logistic regression as the base estimator. The target E(FP) is $1/2$ for all methods, and the threshold for MB, UM, and $r$-concave is $\tau=0.75$. Expression levels are log-transformed and standardized as in \citet{shah}, who also study these data. 
\cref{fig:colon,fig:colon_heatmap} show the results. MB, UM, $r$-concave, IPSS(quad), and IPSS(cubic) select $1$, $2$, $7$, $11$, and $16$ genes, respectively. In addition to \cref{fig:colon_heatmap}, a literature search reported in \cref{sup_sec:applications} supports the claim that the genes identified by IPSS are related to colon cancer.  

\begin{figure*}[!ht]
\makebox[\textwidth]{\includegraphics[width=\textwidth, height=.27\textheight]{./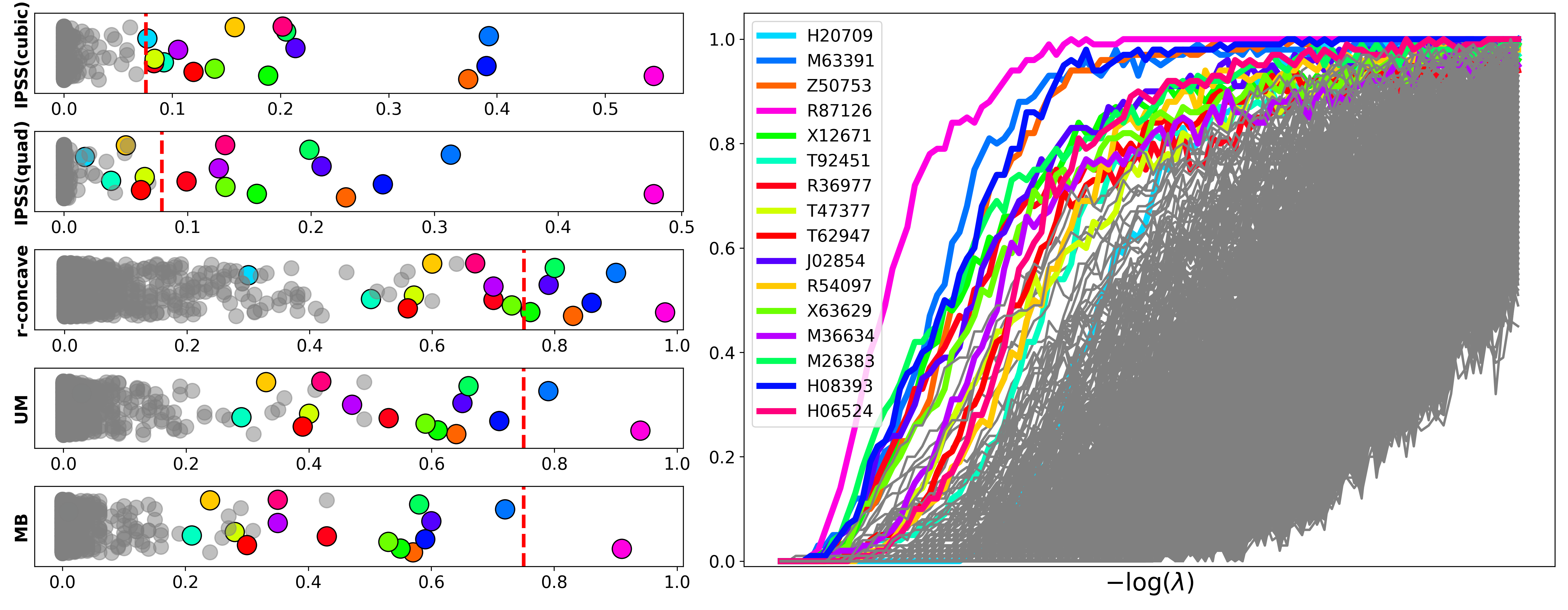}}%
\caption{\textit{Colon cancer results}. (\textit{Left}) Feature scores and thresholds (vertical red lines) separating selected and unselected genes. Every method's set of selected genes is a subset of those selected by IPSS(cubic), shown in color in all plots; the remaining genes are in gray. (\textit{Right}) Estimated stability paths for each gene.}
\label{fig:colon}
\end{figure*}

\begin{figure*}[!ht]
\makebox[\textwidth]{\includegraphics[width=0.7\textwidth]{./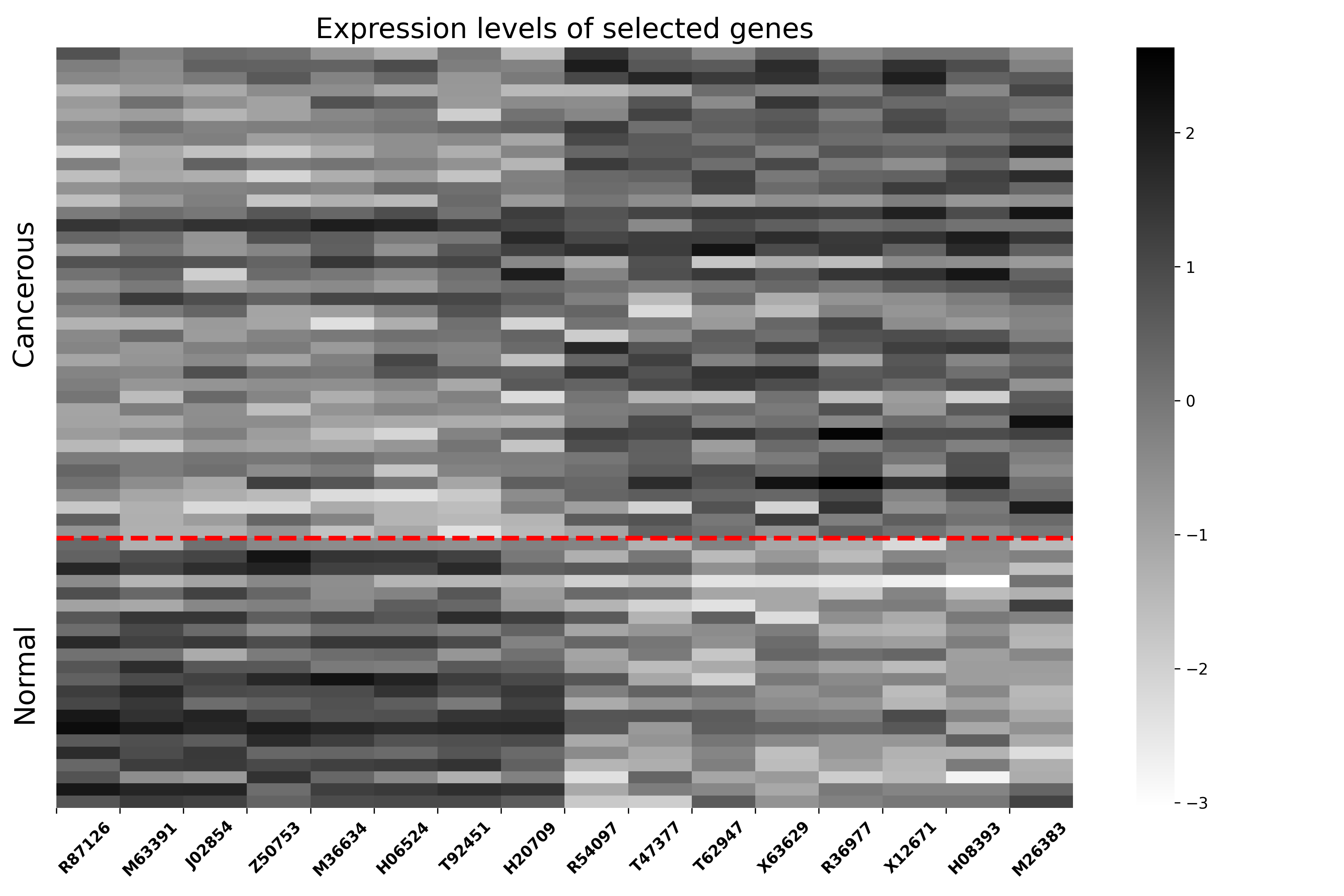}}%
\caption{\textit{Expression level heatmap for the 16 genes selected by IPSS(cubic)}. Each of the 62 rows corresponds to one tissue sample. The first 40 rows are cancerous, and the latter 22 are normal; the dashed red line separates the two classes. Each column corresponds to a gene selected by IPSS(cubic), or equivalently, the union of genes selected by each method since each gene selected by the other methods was also selected by IPSS(cubic). For each gene, there is a clear distinction between expression levels for cancerous versus normal samples.}
\label{fig:colon_heatmap}
\end{figure*}


\section{Discussion}\label{sec:discussion}


IPSS has several attractive properties. It has stronger theoretical guarantees than stability selection and significantly better performance on a wide range of simulated and real data. It also uses the same selection probabilities as stability selection and therefore has the same computational cost. In this work, we focused on the functions $f = h_2$ and $f = h_3$ due to their favorable theoretical properties and empirical performance. However, it is possible that other functions will lead to better results; investigating this point is an interesting line of future work. We also focused on the family of probability measures $\mu_\alpha\propto \lambda^{-\alpha}d\lambda$ and provided guidance on choosing $\alpha$. Like $f$, it is possible that other means of selecting $\alpha$ or other choices of $\mu$ might lead to further improvements in performance. Notably, \cref{thrm:main,thrm:simplified} and the theorems in \cref{sup_sec:other_functions} hold for arbitrary probability measures $\mu$, providing considerable flexibility in this choice.

Another interesting direction is to apply IPSS with other base estimators, such as graphical lasso and elastic net \citep{glasso, elastic_net}. In the case of elastic net, there are two regularization parameters, and while our methodology appears to carry over to this setting (now with $\Lambda\subseteq (0,\infty)^2$ and $(\lambda_1,\lambda_2)\mapsto\hat{S}_{\lambda_1,\lambda_2}$), it remains to work out the details and investigate the performance of IPSS in the context of multiple regularization parameters. Finally, we noted in \cref{sec:related_work} that stability selection is often used in conjunction with other statistical methods. Given that IPSS yields better results than stability selection at no additional cost and with less tuning, it would be interesting to study these joint methods with IPSS in place of stability selection.


\section*{Data and code availability}\label{sec:data}


All datasets and code from this work are available at \url{https://github.com/omelikechi/ipss_jasa}. Original datasets can be accessed at the following links: ovarian cancer (\url{https://www.linkedomics.org/data_download/TCGA-OV/}); prostate cancer (\url{https://www.linkedomics.org/data_download/TCGA-PRAD}); and colon cancer (\url{http://genomics-pubs.princeton.edu/oncology/affydata/index.html}).


\section*{Acknowledgments}\label{sec:acks}


O.M. thanks David Dunson and Steven Winter for initial discussions that took place under funding from Merck \& Co.\ and the National Institutes of Health (NIH) grant R01ES035625.
J.W.M.\ was supported in part by the National Institutes of Health (NIH) grant R01CA240299.


\section*{Disclosure statement}\label{sec:disclosure}


The authors report there are no competing interests to declare.


\bibliographystyle{plainnat}
\bibliography{refs}


\clearpage

\setcounter{page}{1}
\setcounter{section}{0}
\setcounter{table}{0}
\setcounter{figure}{0}
\renewcommand{\theHsection}{SIsection.\arabic{section}}
\renewcommand{\theHtable}{SItable.\arabic{table}}
\renewcommand{\theHfigure}{SIfigure.\arabic{figure}}
\renewcommand{\thepage}{S\arabic{page}}  
\renewcommand{\thesection}{S\arabic{section}}   
\renewcommand{\thetable}{S\arabic{table}}   
\renewcommand{\thefigure}{S\arabic{figure}}
\renewcommand{\thealgorithm}{S\arabic{algorithm}}

\begin{center}
{\Large\textbf{{\LARGE S}UPPLEMENTARY MATERIAL}}
\end{center}

We provide further details on the implementation of IPSS (\cref{sup_sec:implementation}), prove \cref{thrm:main,thrm:simplified} (\cref{sup_sec:proofs}), state and prove results for IPSS with functions that are not considered in the main text (\cref{sup_sec:other_functions}), describe using IPSS with other base estimators (\cref{sup_sec:other_estimators}), and provide an empirical study of \cref{cond:1} (\cref{sup_sec:condition}), additional results from the prostate and colon cancer applications (\cref{sup_sec:applications}) and simulation experiments (\cref{sup_sec:simulations}), and a sensitivity analysis of the IPSS parameters, $C$ and $\alpha$ (\cref{sup_sec:sensitivity}).


\section{Algorithmic details}\label{sup_sec:implementation}


We elaborate on the construction of the interval $\Lambda$ for IPSS (\cref{sup_sec:lipss}) and describe a simple way to approximate the integrals in the IPSS criterion (\cref{eq:ipss}) for a wide range of functions $f$ and measures $\mu$ (\cref{sup_sec:riemann}).


\subsection{Constructing $\bf{\Lambda}$}\label{sup_sec:lipss}


Features are always standardized to have mean $0$ and standard deviation $1$ in both linear and logistic regression, and the response is always centered to have mean $0$ in linear regression. For lasso, the upper endpoint $\lmax$ of $\Lambda=[\lmin,\lmax]$ is set to twice the maximum correlation between the features and response; that is, $\lmax = 2\max_j \lvert \frac{1}{n}\sum_{i=1}^n x_{ij}y_i\rvert$. For $\ell_1$-regularized logistic regression, $\lmax=10/\max_j \lvert \frac{1}{n}\sum_{i=1}^n x_{ij}\widetilde{y}_i\rvert$ where $\widetilde{y}_i=y_i - \bar{y}(1-\bar{y})$ and $\bar{y}=\tfrac{1}{n}\sum_{i=1}^n y_i$. Both constructions ensure all features have selection probability close to zero at $\lmax$. For other regularized feature selection algorithms, $\lmax$ can be chosen via grid search.

Next, we define $\lmin$ by starting at $\lambda=\lmax$ and decreasing $\lambda$ over a grid of $100$ points that are evenly spaced on a log scale between $\lmax/10^{10}$ and $\lmax$, stopping once $\hat{S}_\lambda(\bm{Z}_{1:n})$ selects more than $p/2$ features. 
We then set $\lambda_0$ to be the smallest value of $\lambda$ in this grid such that fewer than $p/2$ features were selected. The idea is that we wish to quickly identify a reasonable lower endpoint that is not so close to $\lmax$ that the selection paths are cut off prematurely,  
but not so small that nearly all of the $p$ features are selected, since exceedingly small regularization values can cause unnecessary computational difficulties. Our experience suggests that $3p/4$ is an effective value for the maximum number of features selected and that results are insensitive to this choice. 

Finally, $\lmin$ is chosen as follows. Given the integral cutoff value $C>0$, recall that
\begin{align*}
    \lmin &= \inf\big\{\lambda\in (0,\lmax) : \mathcal{I}(\lambda,\lmax) \leq C\big\}
\end{align*}
according to \cref{eq:lipss} in the main text, where $\mathcal{I}$ is an integral over $[\lambda,\lmax]$. In practice, to approximate the infimum, we partition $\Lambda=[\lambda_0,\lmax]$ into a grid of $r$ points evenly spaced on a log scale.
We successively add the integrand of $\mathcal{I}$ evaluated at each value of $\lambda$ in the grid, starting from $\lmax$ and stopping either when $\lambda=\lambda_0$ or when the sum surpasses $C$, whichever comes first. In the latter case, $\lmin$ is the smallest $\lambda$ in the grid such that the corresponding Riemann sum in \cref{prop:riemann} is at most $C$.


\subsection{Numerical approximation of the IPSS integral}\label{sup_sec:riemann}


To implement IPSS, one must approximate the integral in \cref{eq:ipss}. We use a simple numerical approximation based on a Riemann sum, described in \cref{prop:riemann}. Consider the class of probability measures $\mu_\alpha(d\lambda)=z^{-1}_\alpha\lambda^{-\alpha}d\lambda$ on $\Lambda=[\lmin,\lmax]\subseteq (0,\infty)$, where $\alpha\in\mathbb{R}$ and the normalizing constant $z_\alpha$ is
\begin{align*}
	z_\alpha &= \int_{\Lambda}\lambda^{-\alpha}d\lambda
		=
		\begin{cases}
			\log\big(\lmax/\lmin\big) & \text{if}\ \alpha = 1 \\[8pt]
			\frac{1}{1-\alpha}\big(\lmax^{1-\alpha}-\lmin^{1-\alpha}\big) & \text{if}\ \alpha\neq 1.
		\end{cases}
\end{align*}
The parameter $\alpha$ controls the scale on which the $\lambda$ values are weighted. For instance, $\alpha=0$ and $\alpha=1$ correspond to linear and log scales, respectively.

\begin{prop}\label{prop:riemann}
Fix $\alpha\in\mathbb{R}$, let $\Lambda=[\lmin,\lmax]\subseteq(0,\infty)$, and define $\lambda_{k,r}=\lmin^{1-k/r}\lmax^{k/r}$ for all $r\in\mathbb{N}$, $k\in\{0,1,\dots,r\}$. For any Riemann integrable $g:\Lambda\to\mathbb{R}$,
\begin{align}\label{eq:riemann}
	\lim_{r\to\infty} \frac{1 - (\lmin/\lmax)^{1/r}}{z_\alpha}\sum_{k=1}^r \lambda_{k,r}^{1-\alpha}g(\lambda_{k,r}) &= \int_\Lambda g(\lambda)\mu_\alpha(d\lambda).
\end{align}
\end{prop}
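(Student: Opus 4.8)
The plan is to recognize the expression inside the limit as a right Riemann sum for $\int_{\lmin}^{\lmax} g(\lambda)\lambda^{-\alpha}\,d\lambda$ over the log-uniform partition of $[\lmin,\lmax]$, and then invoke the fact that for a Riemann integrable integrand the Riemann sums along any sequence of tagged partitions whose mesh tends to zero converge to the integral. Throughout I assume $0<\lmin<\lmax$, so that $z_\alpha>0$ and $\mu_\alpha$ is well defined.

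First I would note that for each fixed $r$ the points $\lmin=\lambda_{0,r}<\lambda_{1,r}<\cdots<\lambda_{r,r}=\lmax$ partition $[\lmin,\lmax]$, since $\log\lambda_{k,r}=(1-k/r)\log\lmin+(k/r)\log\lmax$ increases linearly in $k$. Because consecutive points satisfy $\lambda_{k,r}=\lambda_{k-1,r}(\lmax/\lmin)^{1/r}$, the width of the $k$-th subinterval is
\[
  \Delta_{k,r}:=\lambda_{k,r}-\lambda_{k-1,r}=\lambda_{k,r}\bigl(1-(\lmin/\lmax)^{1/r}\bigr).
\]
Writing $G(\lambda):=g(\lambda)\lambda^{-\alpha}$ and using right-endpoint tags then gives
\[
  \sum_{k=1}^r G(\lambda_{k,r})\,\Delta_{k,r}=\bigl(1-(\lmin/\lmax)^{1/r}\bigr)\sum_{k=1}^r \lambda_{k,r}^{1-\alpha}g(\lambda_{k,r}),
\]
which is precisely $z_\alpha$ times the $r$-th term inside the limit on the left-hand side of \cref{eq:riemann}. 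Hence it suffices to show $\sum_{k=1}^r G(\lambda_{k,r})\Delta_{k,r}\to\int_{\lmin}^{\lmax}G(\lambda)\,d\lambda$; dividing by $z_\alpha$ and recalling $\mu_\alpha(d\lambda)=z_\alpha^{-1}\lambda^{-\alpha}\,d\lambda$ then yields the claim.

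To conclude I would verify the two ingredients for Riemann-sum convergence. The integrand $G$ is Riemann integrable on $[\lmin,\lmax]$: the map $\lambda\mapsto\lambda^{-\alpha}$ is continuous on the compact interval $[\lmin,\lmax]\subset(0,\infty)$, hence Riemann integrable, and the product of two Riemann integrable functions is Riemann integrable. The mesh of the partition tends to zero: since $\lambda_{k,r}\leq\lmax$, we have $\max_k\Delta_{k,r}\leq\lmax\bigl(1-(\lmin/\lmax)^{1/r}\bigr)$, and $(\lmin/\lmax)^{1/r}\to 1$ as $r\to\infty$ because $\lmin/\lmax\in(0,1)$.

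I do not anticipate a genuine obstacle; the only points requiring care are the algebraic identity $\Delta_{k,r}=\lambda_{k,r}\bigl(1-(\lmin/\lmax)^{1/r}\bigr)$, which matches the prefactor in \cref{eq:riemann} to a bona fide subinterval width, and the elementary limit $(\lmin/\lmax)^{1/r}\to1$ that controls the mesh. As an alternative to quoting the tagged-partition characterization, one could substitute $u=\log\lambda$, under which the grid becomes uniform with spacing $\tfrac{1}{r}\log(\lmax/\lmin)$ and the sum reduces to a textbook right Riemann sum for $\int_{\log\lmin}^{\log\lmax}g(e^u)e^{(1-\alpha)u}\,du$; the two routes are equivalent.
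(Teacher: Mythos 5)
Your proposal is correct and follows essentially the same route as the paper's proof: identify the prefactor with the subinterval width via $\lambda_{k,r}-\lambda_{k-1,r}=\bigl(1-(\lmin/\lmax)^{1/r}\bigr)\lambda_{k,r}$, check that the mesh tends to zero, and invoke convergence of Riemann sums for the integrand $\lambda^{-\alpha}g(\lambda)$. Your added remark justifying Riemann integrability of the product $g(\lambda)\lambda^{-\alpha}$ is a small but welcome piece of extra care that the paper leaves implicit.
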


\begin{proof}
Fix $m\in\mathbb{N}$. The sequence $\Lambda_r=(\lambda_{0,r},\lambda_{1,r},\dots,\lambda_{r,r})$ partitions $\Lambda$. Furthermore,
\begin{align}\label{eq:riemann-proof-1}
	\lambda_{k,r}-\lambda_{k-1,r} &= \Big(1 - \big(\lmin/\lmax\big)^{1/r}\Big)\lambda_{k,r}
\end{align}
for all $r$ and $k\in\{1,\dots,r\}$, and hence,
\begin{align}\label{eq:riemann-proof-2}
\max_{k}\big\lvert\lambda_{k,r}-\lambda_{k-1,r}\big\rvert \leq \Big\lvert 1 - \big(\lmin/\lmax\big)^{1/r}\Big\rvert |\lmax| \longrightarrow 0
\end{align}
as $r\to\infty$. Therefore, \cref{eq:riemann-proof-1,eq:riemann-proof-2} imply that
\begin{align*}
	 \frac{1 - (\lmin/\lmax)^{1/r}}{z_\alpha}\sum_{k=1}^r \lambda_{k,r}^{1-\alpha}g(\lambda_{k,r}) &= \frac{1}{z_\alpha}\sum_{k=1}^r \lambda_{k,r}^{-\alpha}g(\lambda_{k,r})(\lambda_{k,r}-\lambda_{k-1,r})
		\longrightarrow \int_\Lambda g(\lambda)\mu_\alpha(d\lambda)
\end{align*}
as $r\to\infty$ since $\lambda^{-\alpha} g(\lambda)$ is Riemann integrable on $\Lambda$.
\end{proof}

If $\alpha=1$ then $\lambda^{1-\alpha}_{k,r}=1$ for all $k$ and $r$. Thus, when using $f = h_m$ and $\mu = \mu_1$ as in the main text, computation of the IPSS criterion (\cref{eq:ipss}) amounts to 
\begin{align}\label{eq:compute}
\int_\Lambda f(\hat{\pi}_j(\lambda))\mu(d\lambda) =
	\int_\Lambda h_m(\hat{\pi}_j(\lambda))\mu_1(d\lambda) &\approx \frac{1 - (\lmin/\lmax)^{1/r}}{z_1}\sum_{k=1}^r h_m(\hat{\pi}_j(\lambda_{k,r}))
\end{align}
by applying \cref{prop:riemann} with $\alpha=1$ and $g(\lambda) = h_m(\hat{\pi}_j(\lambda))$. 
Furthermore, $h_m$ is a simple function of the estimated selection probabilities, making \cref{eq:compute} very inexpensive to compute once the estimated selection probabilities are obtained.

The use of \cref{prop:riemann} for IPSS is justified by \cref{prop:riemann-integrable}, which gives general conditions for functions of the form $\lambda\mapsto f(\hat{\pi}_j(\lambda))$ to be Riemann integrable when $\hat{S}_\lambda = \{j : \hat{\beta}_j(\lambda)\neq 0\}$ for some $\hat{\beta}_j:\Lambda\to\mathbb{R}$. 
In particular,  $h_m\circ\hat{\pi}_j$ is Riemann integrable when using lasso since $h_m$ is continuous and lasso regularization paths $\lambda\mapsto\hat{\beta}_j(\lambda)$ are continuous whenever no two features are perfectly correlated \citep{lasso_paths}. More generally, it follows from the proof of \cref{prop:riemann-integrable} that Riemann integrability of $h_m\circ\hat{\pi}_j$ is implied by Riemann integrability of the indicators $\1(j\in\hat{S}_\lambda)$.

\begin{prop}\label{prop:riemann-integrable}
Let $\Lambda$ be a closed interval in $\mathbb{R}$. Suppose $\hat{S}_\lambda$ is given by $\hat{S}_\lambda = \{j : \hat{\beta}_j(\lambda)\neq 0\}$ and that $\lambda\mapsto\hat{\beta}_j(\lambda)$ is continuous on $\Lambda$ for all $j\in\{1,\dots,p\}$. Then for every continuous function $f:[0,1]\to\mathbb{R}$, the composition $\lambda\mapsto f(\hat{\pi}_j(\lambda))$ is Riemann integrable.
\end{prop}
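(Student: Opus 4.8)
The plan is to strip off the layers of the composition one at a time, reducing everything to the Riemann integrability of a single $0/1$ indicator, and then to attack that via Lebesgue's criterion. First I would use the fact that $\hat\pi_j(\lambda)=\frac{1}{2B}\sum_{b=1}^{2B}\1(j\in\hat{S}_\lambda(\bm{z}_{A_b}))$ is a finite average of indicators, where by hypothesis each subsample estimator has the form $\hat{S}_\lambda(\bm{z}_{A_b})=\{j:\hat\beta^{(b)}_j(\lambda)\neq 0\}$ with $\lambda\mapsto\hat\beta^{(b)}_j(\lambda)$ continuous on the compact interval $\Lambda$. Since finite linear combinations of Riemann integrable functions are Riemann integrable, it suffices to show that each $g_b(\lambda):=\1(\hat\beta^{(b)}_j(\lambda)\neq 0)$ is Riemann integrable; then $\hat\pi_j$ is Riemann integrable, and because $f:[0,1]\to\mathbb{R}$ is continuous, the standard theorem that a continuous function composed with a Riemann integrable function is Riemann integrable gives the claim for $f\circ\hat\pi_j$. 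This is exactly the reduction noted after the statement, namely that Riemann integrability of the indicators $\1(j\in\hat{S}_\lambda)$ implies that of $h_m\circ\hat\pi_j$.

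Next I would treat a single indicator $g_b$ using Lebesgue's criterion: a bounded function on a compact interval is Riemann integrable if and only if its set of discontinuities is Lebesgue null. Let $Z_b=\{\lambda\in\Lambda:\hat\beta^{(b)}_j(\lambda)=0\}$; continuity of $\hat\beta^{(b)}_j$ makes $Z_b$ closed and $Z_b^c$ open, so $g_b=\1_{Z_b^c}$ is locally constant on $Z_b^c$ and on $\mathrm{int}(Z_b)$. Hence the discontinuity set of $g_b$ is contained in the boundary $\partial Z_b$, and the whole proposition reduces to the assertion that $\partial Z_b$ is Lebesgue null.

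The hard part will be this last step, and it is where I expect the genuine content to lie. For a general continuous path, continuity alone does not force $\partial Z_b$ to be null: if $\hat\beta^{(b)}_j=\mathrm{dist}(\cdot,K)$ for a fat Cantor set $K\subseteq\Lambda$, then $Z_b=K$ is nowhere dense yet of positive measure, so $\partial Z_b=K$ is not null and $g_b$ fails to be Riemann integrable. The argument must therefore use more than bare continuity of the path. In the regime of interest this extra structure is available: for lasso the paths $\lambda\mapsto\hat\beta_j(\lambda)$ are piecewise linear with only finitely many breakpoints on $\Lambda$, so each $Z_b$ is a finite union of points and closed subintervals and $\partial Z_b$ is finite, hence null; the same conclusion holds whenever each $\hat\beta^{(b)}_j$ is piecewise monotone, i.e.\ touches or crosses zero only finitely often. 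I would thus either impose such a piecewise-monotonicity hypothesis or verify it for the base estimators in question, after which Lebesgue's criterion closes the argument.
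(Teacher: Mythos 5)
Your reduction is exactly the paper's: establish Riemann integrability of each indicator $\lambda\mapsto\1(j\in\hat{S}_\lambda)$, pass to the finite average $\hat{\pi}_j$, and then compose with the continuous $f$ (the paper does this last step by hand, showing the discontinuity set of $f\circ\hat{\pi}_j$ is contained in that of $\hat{\pi}_j$, rather than quoting the continuous-after-Riemann-integrable composition theorem; that difference is cosmetic). Where you diverge is at the one step that carries all the content, and there you are right to balk: the paper writes $U=\hat{\beta}_j^{-1}(\mathbb{R}\setminus\{0\})=\bigcup_{k}(a_k,b_k)$ and asserts that the discontinuities of $\lambda\mapsto\1(\lambda\in U)$ are contained in the countable set $\bigcup_k\{a_k,b_k\}$. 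That assertion is false in general: the discontinuity set of the indicator of an open set is its topological boundary $\partial U$, which is not exhausted by the interval endpoints and need not be Lebesgue null. Your example $\hat{\beta}_j(\lambda)=\mathrm{dist}(\lambda,K)$ with $K$ a fat Cantor set is a genuine counterexample to the proposition as stated: $U=\Lambda\setminus K$ is open and dense, $\partial U=K$ has positive measure, so $\1(\lambda\in U)$ fails Lebesgue's criterion, and taking every subsample to produce this path and $f$ the identity makes $f\circ\hat{\pi}_j$ non-Riemann-integrable. Mere continuity of $\lambda\mapsto\hat{\beta}_j(\lambda)$ is therefore not a sufficient hypothesis, and the paper's own proof has a gap at precisely the step you flagged.

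Your proposed repair is the correct one: what is actually needed is that the zero set $Z=\hat{\beta}_j^{-1}(\{0\})$ have Lebesgue-null boundary, which follows from any hypothesis forcing finitely many transitions between zero and nonzero --- piecewise monotonicity, real analyticity with $\hat{\beta}_j\not\equiv 0$, or, most relevantly, the piecewise linearity of lasso paths with finitely many knots, in which case $Z$ is a finite union of points and closed intervals and $\partial Z$ is finite. So your proposal does not prove the proposition as literally stated, but that is because the literal statement is not provable; you have located the missing hypothesis, you have a counterexample witnessing that it is genuinely missing, and your argument closes completely once ``continuous'' is strengthened to ``continuous with $\partial\{\lambda:\hat{\beta}_j(\lambda)=0\}$ null'' (or the finitely-many-sign-changes condition you suggest), which holds for the base estimators the paper actually uses.
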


\begin{proof}
Fix $j\in\{1,\dots,p\}$. We first prove $\lambda\mapsto\1(j\in\hat{S}_\lambda)$ is Riemann integrable. Since a bounded function on a closed interval is Riemann integrable if and only if it is continuous almost everywhere, it suffices to show $\lambda\mapsto\1(j\in\hat{S}_\lambda)$ is continuous almost everywhere. To this end, we have
\begin{align*}
    \1(j\in\hat{S}_\lambda) &= \1(\hat{\beta}_j(\lambda)\neq 0)
        = \1\!\big(\lambda\in\hat{\beta}_j^{-1}(\mathbb{R}\setminus\{0\})\big).
\end{align*}
Since $\hat{\beta}_j$ is continuous, $U=\hat{\beta}_j^{-1}(\mathbb{R}\setminus\{0\})$ is an open subset of $\mathbb{R}$. A classical result from analysis states that every open subset of $\mathbb{R}$ is a countable union of disjoint open intervals. Letting $U=\bigcup_{k=1}^\infty (a_k,b_k)$ be such a union, we have
\begin{align*}
    \1\!\big(\lambda\in\hat{\beta}_j^{-1}(\mathbb{R}\setminus\{0\})\big) &= \1(\lambda\in U)
        = \1\!\big(\lambda\in{\textstyle\bigcup_{k=1}^\infty} (a_k,b_k)\big).
\end{align*}
Hence, $\1(j\in\hat{S}_\lambda)=\1(\lambda\in\bigcup_{k=1}^\infty (a_k,b_k))$, and it is clear from the latter expression that the set of discontinuities of $\1(j\in\hat{S}_\lambda)$ is contained in $\bigcup_{k=1}^\infty\{a_k,b_k\}$, which is countable and therefore has Lebesgue measure zero. Thus $\lambda\mapsto\1(j\in\hat{S}_\lambda)$ is Riemann integrable, and it immediately follows that $\hat{\pi}_j$ is Riemann integrable since it is a linear combination of indicators of the form $\1(j\in\hat{S}_\lambda)$. 

To conclude the proof, we show that $f\circ\hat{\pi}_j:\Lambda\to\mathbb{R}$ is bounded and continuous almost everywhere (and hence Riemann integrable). Boundedness holds because $f$ is continuous on the compact set $[0,1]$. For the continuous almost everywhere claim, let $A$ and $B$ be the sets of discontinuities of $\hat{\pi}_j$ and $f\circ\hat{\pi}_j$, respectively. We know from the first part of the proof that $A$ has measure zero. Thus, it suffices to show $B\subseteq A$ or, equivalently, $A^c\subseteq B^c$. Fix $\lambda\in A^c$ and let $\lambda_k$ be any sequence in $\Lambda$ converging to $\lambda$. Then $y_k=\hat{\pi}_j(\lambda_k)\to\hat{\pi}_j(\lambda)=y$ since $\hat{\pi}_j$ is continuous at $\lambda$. And since $f$ is continuous, $f(\hat{\pi}_j(\lambda_k))=f(y_k)\to f(y)=f(\hat{\pi}_j(\lambda))$. Thus $\lambda\in B^c$ and hence $A^c\subseteq B^c$.
\end{proof}


\section{Proofs}\label{sup_sec:proofs}


We begin by introducing some notation and preliminary results. Fix $B\geq 2$ throughout.  For $\lambda>0$ and $j\in\{1,\dots,p\}$, define the \textit{simultaneous selection probability}
\begin{align*}
	\widetilde{\pi}_j(\lambda) &= \frac{1}{B}\sum_{b=1}^B \1(j\in \hat{S}_\lambda(\bm{Z}_{A_{2 b-1}}))\1(j\in\hat{S}_\lambda(\bm{Z}_{A_{2 b}}))
		= \frac{1}{B}\sum_{b=1}^B U_{j b}(\lambda)
\end{align*}
where $U_{j b}(\lambda)=\1(j\in \hat{S}_\lambda(\bm{Z}_{A_{2 b-1}}))\1(j\in\hat{S}_\lambda(\bm{Z}_{A_{2 b}}))$. Here, $\hat{S}_\lambda$ and $A_{2 b - 1},A_{2 b}$ are defined as in \cref{sec:setup}. We use the following result (\cref{eq:shah1}), established in the proof of Lemma 1a of \citet{shah}: For all $\lambda>0$ and $j\in\{1,\dots,p\}$,
\begin{align}
& \hat{\pi}_j(\lambda) \leq \tfrac{1}{2}(1+\widetilde{\pi}_j(\lambda)). \label{eq:shah1}
\end{align}
This holds since
\begin{align*}
    0 \leq \frac{1}{B}\sum_{b=1}^B \Big(1 - \1(j\in \hat{S}_\lambda(\bm{Z}_{A_{2 b-1}}))\Big)\Big(1 - \1(j\in\hat{S}_\lambda(\bm{Z}_{A_{2 b}}))\Big) = 1 - 2\hat{\pi}_j(\lambda) + \widetilde{\pi}_j(\lambda).
\end{align*}

The following lemma is used in the proof of \cref{thrm:main}; it is essentially an application of the multinomial theorem. 
For readability, let us define $\Delta_m = \big\{k\in\mathbb{Z}^B : k_1,\ldots,k_B\geq 0,\, \sum_{b=1}^B k_b = m\big\}$,
denote the multinomial coefficients by
\begin{align*}
	\binom{m}{k_1,\dots,k_B} &= \frac{m!}{k_1!k_2!\cdots k_B!},
\end{align*}
and define $N_k = \sum_{b=1}^B \1(k_b \neq 0)$ for $k\in\Delta_m$. Note that the assumed measurability of 
$\hat{S}_\lambda(\bm{Z}_A):\Lambda\times\Omega\to 2^{\{1,\dots,p\}}$ implies measurability of $\widetilde{\pi}_j$, $\hat{\pi}_j$, and any continuous functions thereof.

\begin{lemma}\label{lem:second_moment}
Fix $m\in\mathbb{N}$. If \cref{cond:1} holds for all $m'\in\{1,\dots,m\}$, then for all $\lambda>0$,
\begin{align*}
	\max_{j\in S^c}\,\mathrm{E}\big(\widetilde{\pi}_j(\lambda)^m\big) &\leq \frac{1}{B^m}\sum_{k\in\Delta_m}\binom{m}{k_1,\dots,k_B}(q(\lambda)/p)^{2 N_k}.
\end{align*}
\end{lemma}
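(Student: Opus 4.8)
The plan is to expand $\widetilde{\pi}_j(\lambda)^m$ with the multinomial theorem and then control each resulting term via \cref{cond:1}. First I would write
\[
\widetilde{\pi}_j(\lambda)^m = \frac{1}{B^m}\Big(\sum_{b=1}^B U_{jb}(\lambda)\Big)^m = \frac{1}{B^m}\sum_{k\in\Delta_m}\binom{m}{k_1,\dots,k_B}\prod_{b=1}^B U_{jb}(\lambda)^{k_b},
\]
which is just the multinomial theorem over the index set $\Delta_m$. The crucial simplification is that each $U_{jb}(\lambda)=\1(j\in\hat{S}_\lambda(\bm{Z}_{A_{2b-1}}))\1(j\in\hat{S}_\lambda(\bm{Z}_{A_{2b}}))$ is an indicator and hence idempotent, so $U_{jb}(\lambda)^{k_b}=U_{jb}(\lambda)$ when $k_b\geq 1$ and $U_{jb}(\lambda)^{0}=1$. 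Consequently $\prod_{b=1}^B U_{jb}(\lambda)^{k_b}=\prod_{b:\,k_b\neq 0}U_{jb}(\lambda)$, a product of exactly $N_k$ distinct indicators.

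Next I would take expectations term by term and exploit exchangeability. Since the complementary pairs $(A_{2b-1},A_{2b})$ are drawn independently and identically across $b=1,\dots,B$ (as is any randomness internal to $\hat{S}$), the sequence $\big(U_{j1}(\lambda),\dots,U_{jB}(\lambda)\big)$ is exchangeable for each fixed $j$ and $\lambda$. Therefore $\mathrm{E}\big(\prod_{b\in T}U_{jb}(\lambda)\big)$ depends on $T$ only through $\lvert T\rvert$, and for any $T$ with $\lvert T\rvert=N_k$ it equals
\[
\mathrm{E}\Big(\prod_{b=1}^{N_k}U_{jb}(\lambda)\Big) = \mathbb{P}\bigg(j\in\bigcap_{b=1}^{N_k}\big(\hat{S}_\lambda(\bm{Z}_{A_{2b-1}})\cap\hat{S}_\lambda(\bm{Z}_{A_{2b}})\big)\bigg).
\]

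For $j\in S^c$, \cref{cond:1} applied with $m'=N_k$ bounds this probability by $(q(\lambda)/p)^{2N_k}$. This is legitimate for every term because $N_k\geq 1$ (the constraint $\sum_b k_b=m\geq 1$ forces at least one nonzero $k_b$) and $N_k\leq m$, so the assumption that \cref{cond:1} holds for all $m'\in\{1,\dots,m\}$ covers each $N_k$ that arises. Substituting these bounds into the multinomial expansion and taking the maximum over $j\in S^c$ gives the stated inequality.

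The step I expect to require the most care is the exchangeability reduction: one must verify that $\mathrm{E}\big(\prod_{b\in T}U_{jb}(\lambda)\big)$ genuinely collapses to the ``first $N_k$ iterations'' probability that \cref{cond:1} is phrased in terms of, which hinges on the i.i.d.\ draws of the complementary pairs across the $B$ iterations. The remaining work---the idempotency observation and the accounting of multinomial coefficients---is routine.
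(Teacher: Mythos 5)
Your proposal is correct and follows essentially the same route as the paper's proof: multinomial expansion of $\widetilde{\pi}_j(\lambda)^m$, the idempotency observation $U_{jb}^{k_b}=U_{jb}^{\1(k_b\neq 0)}$, and term-by-term application of \cref{cond:1} with $m'=N_k\in\{1,\dots,m\}$. The only difference is that you explicitly justify, via exchangeability of the $U_{jb}$ across iterations, why the condition (stated for the first $m'$ iterations) applies to an arbitrary index set of size $N_k$ --- a step the paper passes over silently.
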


\begin{proof}
Fix $j\in S^c$ and define $0^0=1$. By the multinomial theorem,
\begin{align*}
	\widetilde{\pi}_j(\lambda)^m &= \frac{1}{B^m}\bigg(\sum_{b=1}^B U_{j b}(\lambda)\bigg)^m
		= \frac{1}{B^m}\sum_{k\in\Delta_m}\binom{m}{k_1,\dots,k_B}\prod_{b=1}^B U_{j b}(\lambda)^{k_b} \\
		&= \frac{1}{B^m}\sum_{k\in\Delta_m}\binom{m}{k_1,\dots,k_B}\prod_{b=1}^B U_{j b}(\lambda)^{\1(k_b\neq 0)}.
\end{align*}
The last equality holds because $U_{j b}(\lambda)\in\{0,1\}$, and hence, $U_{j b}(\lambda)^{k_b} = U_{j b}(\lambda)$ whenever $k_b>0$. Next, since \cref{cond:1} holds for all $m'\leq m$ and since every $k\in\Delta_m$ has at most $m$ nonzero integers,
\begin{align*}
    \mathbb{E}\prod_{b=1}^B U_{j b}(\lambda)^{\1(k_b\neq 0)} &= \mathbb{P}\bigg(j\in\bigcap_{b : k_b\neq 0} \big(\hat{S}_\lambda(\bm{Z}_{A_{2b-1}})\cap \hat{S}_\lambda(\bm{Z}_{A_{2b}})\big)\bigg)
        \leq (q(\lambda)/p)^{2N_k}.
\end{align*}
Therefore,
\begin{align*}
	\mathrm{E}\big(\widetilde{\pi}_j(\lambda)^m\big) &= \frac{1}{B^m}\sum_{k\in\Delta_m}\binom{m}{k_1,\dots,k_B}\mathrm{E}\prod_{b=1}^B U_{j b}(\lambda)^{\1(k_b\neq 0)} \\
		&\leq \frac{1}{B^m}\sum_{k\in\Delta_m}\binom{m}{k_1,\dots,k_B}(q(\lambda)/p)^{2 N_k}. \qedhere
\end{align*}
\end{proof}

\begin{proof}[\bf Proof of \texorpdfstring{\cref{thrm:main}}]

Fix $\tau\in(0,1]$ and $m\in\mathbb{N}$. Suppressing $\lambda$ and $\Lambda$ to declutter the notation,
\begin{align}
	\mathbb{P}\big(j\in\hat{S}_{\mathrm{IPSS},h_m}\big) &= \mathbb{P}\left(\int h_m(\hat{\pi}_j) d\mu\geq\tau\right)
		\leq \mathbb{P}\left(\int h_m\big(\tfrac{1}{2}(1+\widetilde{\pi}_j)\big) d\mu\geq\tau\right)
		= \mathbb{P}\left(\int\widetilde{\pi}_j^m d\mu\geq\tau\right) \notag\\
		&\leq \frac{1}{\tau}\int\mathrm{E}(\widetilde{\pi}_j^m) d\mu
		\leq \frac{1}{\tau B^m}\sum_{k\in\Delta_m}\binom{m}{k_1,\dots,k_B}\int(q/p)^{2 N_k}d\mu \label{eq:p_bound}
\end{align}
for all $j\in S^c$.
The first inequality holds by \cref{eq:shah1}, namely $\hat\pi_j\leq \frac{1}{2}(1+\widetilde{\pi}_j)$, and the fact that $h_m$ is monotonically increasing. The second equality holds since by the definition of $h_m$ in \cref{eq:half}, $h_m\big(\frac{1}{2}(1+x)\big) = x^m$ for $x\in[0,1]$. The second inequality is Markov's inequality followed by exchanging the order of integration and expectation, which is justified by Tonelli's theorem since $\widetilde{\pi}_j(\lambda)^m$ is nonnegative and measurable. The last inequality holds by \cref{lem:second_moment}. Thus,
\begin{align*}
	\mathrm{E}\lvert\hat{S}_{\mathrm{IPSS},h_m}\cap S^c\rvert &= \mathrm{E}\sum_{j=1}^p\1(j\in \hat{S}_{\mathrm{IPSS},h_m})\1(j\in S^c)
		= \sum_{j=1}^p\mathbb{P}\big(j\in\hat{S}_{\mathrm{IPSS},h_m}\big)\1(j\in S^c) \\
		&\leq \frac{1}{\tau B^m}\sum_{k\in\Delta_m}\binom{m}{k_1,\dots,k_B}\int(q/p)^{2 N_k}d\mu \sum_{j=1}^p\1(j\in S^c) \\
		&\leq \frac{p}{\tau B^m}\sum_{k\in\Delta_m}\binom{m}{k_1,\dots,k_B}\int(q/p)^{2 N_k}d\mu. \qedhere
\end{align*}
\end{proof}

\begin{proof}[\bf Proof of \texorpdfstring{\cref{thrm:simplified}}]
In the notation of this section, \cref{eq:ipss_bound} becomes
\begin{align*}
	\mathrm{E(FP)} \leq \frac{p}{\tau B^m}\sum_{k\in\Delta_m}\binom{m}{k_1,\ldots,k_B}\int (q/p)^{2 N_k}d\mu
\end{align*}
where $q = q(\lambda)$.
When $m=1$, each $k\in\Delta_1$ has a single nonzero entry $k_b = 1$; thus, $|\Delta_1| = B$ and $N_k = 1$ for all $k\in\Delta_1$. Therefore,
\begin{align*}
\mathrm{E(FP)} \leq
    \frac{p}{\tau B}\sum_{k\in\Delta_1}\binom{1}{k_1,\dots,k_B}\int (q/p)^{2 N_k} d\mu &= \frac{p}{\tau B}\sum_{b=1}^B\int (q/p)^2 d\mu
		= \frac{1}{\tau}\int\frac{q^2}{p} d\mu.
\end{align*}
When $m=2$, there are $B$ elements $k\in\Delta_2$ with $N_k = 1$ (each of which has $k_b = 2$ for exactly one $b$), and $\binom{B}{2}$ elements $k\in\Delta_2$ with $N_k = 2$ (each of which has $k_b = k_{b'} = 1$ for some $b\neq b'$). Therefore,
\begin{align*}
    \mathrm{E(FP)}\leq \frac{p}{\tau B^2}\sum_{k\in\Delta_2}\binom{2}{k_1,\dots,k_B}\int (q/p)^{2 N_k} d\mu &= \frac{1}{\tau B^2}\int\bigg(\frac{Bq^2}{p} + \frac{B(B-1)q^4}{p^3}\bigg) d\mu.
\end{align*}
When $m=3$, there are $B$ elements $k\in\Delta_3$ with $N_k = 1$ (each of which has $k_b = 3$ for one $b$), there are $2\binom{B}{2}$ elements $k\in\Delta_3$ with $N_k = 2$ (having $k_b = 2$ and $k_{b'}=1$ for some $b\neq b'$), and $\binom{B}{3}$ elements $k\in\Delta_3$ with $N_k = 3$ (having $k_b = k_{b'} = k_{b''} = 1$ for some distinct $b,b',b''$). Therefore,
\begin{align*}
\mathrm{E(FP)} &\leq
	\frac{p}{\tau B^3}\sum_{k\in\Delta_3}\binom{3}{k_1,\dots,k_B}\int (q/p)^{2 N_k} d\mu \\
 &= \frac{1}{\tau B^3}\int\bigg(\frac{Bq^2}{p} + \frac{3B(B-1)q^4}{p^3} + \frac{B(B-1)(B-2)q^6}{p^5}\bigg) d\mu. \qedhere
\end{align*}
\end{proof}


\section{IPSS with other functions}\label{sup_sec:other_functions}


In this section, we provide results about IPSS for other choices of $f$. Throughout, we fix $\Lambda\subseteq(0,\infty)$ and a probability measure $\mu$ on $\Lambda$. 

Our first result says that if a function $g$ dominates $f$, then the expected number of false positives selected by IPSS with $f$ is at most the expected number of false positives selected by IPSS with $g$. Similarly, the expected number of false negatives for IPSS with $g$ is at most the expected number of false negatives IPSS with $f$. We do not use \cref{lem:compare} in this work, but it gives intuition for how IPSS depends on the choice of $f$, and could lead to useful bounds on functions not considered here.

\begin{lemma}\label{lem:compare}
Let $f$ and $g$ be functions from $[0,1]$ to $\mathbb{R}$. If $f\leq g$ then for all $\tau\in(0,1]$,
\begin{align*}
	\mathrm{E}\lvert\hat{S}_{\mathrm{IPSS},f}\cap S^c\rvert \leq \mathrm{E}\lvert\hat{S}_{\mathrm{IPSS},g}\cap S^c\rvert \qquad\text{and}\qquad
	\mathrm{E}\lvert\hat{S}_{\mathrm{IPSS},g}^c\cap S\rvert \leq \mathrm{E}\lvert\hat{S}_{\mathrm{IPSS},f}^c\cap S\rvert.
\end{align*}
\end{lemma}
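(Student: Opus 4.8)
The plan is to reduce both inequalities to a single pathwise set inclusion, namely $\hat{S}_{\mathrm{IPSS},f}\subseteq\hat{S}_{\mathrm{IPSS},g}$ holding for every outcome $\omega\in\Omega$, and then pass to expectations. First I would fix $\omega$ and observe that, since $\mu$ is a nonnegative measure and $f\leq g$ pointwise on $[0,1]$, monotonicity of the integral gives
\[
\int_\Lambda f(\hat{\pi}_j(\lambda))\,\mu(d\lambda) \leq \int_\Lambda g(\hat{\pi}_j(\lambda))\,\mu(d\lambda)
\]
for each feature $j$. Writing $T_f(j)$ and $T_g(j)$ for the left- and right-hand sides, this says $T_f(j)\leq T_g(j)$ for all $j$ and all $\omega$.

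Next I would invoke the threshold definition in \cref{eq:ipss}: $j\in\hat{S}_{\mathrm{IPSS},f}$ exactly when $T_f(j)\geq\tau$. Because $T_f(j)\leq T_g(j)$, the event $T_f(j)\geq\tau$ implies $T_g(j)\geq\tau$, so $j\in\hat{S}_{\mathrm{IPSS},f}$ implies $j\in\hat{S}_{\mathrm{IPSS},g}$. This yields the inclusion $\hat{S}_{\mathrm{IPSS},f}\subseteq\hat{S}_{\mathrm{IPSS},g}$, pointwise in $\omega$.

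From this single inclusion both claims follow by elementary set operations. Intersecting both sides with $S^c$ gives $\hat{S}_{\mathrm{IPSS},f}\cap S^c\subseteq\hat{S}_{\mathrm{IPSS},g}\cap S^c$, while taking complements reverses the inclusion to $\hat{S}_{\mathrm{IPSS},g}^c\subseteq\hat{S}_{\mathrm{IPSS},f}^c$, which after intersecting with $S$ becomes $\hat{S}_{\mathrm{IPSS},g}^c\cap S\subseteq\hat{S}_{\mathrm{IPSS},f}^c\cap S$. Since cardinality is monotone under inclusion, I obtain the corresponding inequalities between the cardinalities of these sets pointwise, and then take expectations, using monotonicity of $\mathrm{E}$, to conclude both displayed inequalities.

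The argument is essentially immediate, so there is no real analytic obstacle; the only points requiring care are bookkeeping. I would make sure the set inclusion is asserted for every $\omega$ (not merely in distribution), since the false-positive and false-negative counts are pathwise functionals of the random stability paths, and I would appeal to the measurability noted in \cref{sec:preliminaries}---that $\hat{\pi}_j$, hence $f(\hat{\pi}_j)$, $g(\hat{\pi}_j)$, and their integrals, are measurable---so that the indicators $\1(j\in\hat{S}_{\mathrm{IPSS},f})$ and the resulting cardinalities are genuine random variables whose expectations are well defined.
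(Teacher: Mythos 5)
Your proof is correct and follows essentially the same route as the paper: both rest on the pointwise inequality $\int_\Lambda f(\hat{\pi}_j)\,d\mu \leq \int_\Lambda g(\hat{\pi}_j)\,d\mu$ and the resulting inclusion of selection events, differing only in that you phrase this as a pathwise set inclusion before taking expectations while the paper compares $\mathbb{P}(j\in\hat{S}_{\mathrm{IPSS},f})$ with $\mathbb{P}(j\in\hat{S}_{\mathrm{IPSS},g})$ feature by feature and then sums. The two presentations are equivalent, and your attention to measurability and to asserting the inclusion for every $\omega$ is appropriate.
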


\begin{proof}
Fix $j\in\{1,\dots,p\}$ and $\tau\in(0,1]$. Since $f\leq g$,
\begin{align*}
    \mathbb{P}(j\in\hat{S}_{\mathrm{IPSS},f}) &= \mathbb{P}\left(\int f(\hat\pi_j)d\mu\geq\tau\right)
        \leq \mathbb{P}\left(\int g(\hat\pi_j)d\mu\geq\tau\right)
        = \mathbb{P}(j\in\hat{S}_{\mathrm{IPSS},g}).
\end{align*}
Therefore,
\begin{align*}
    \mathrm{E}\lvert\hat{S}_{\mathrm{IPSS},f}\cap S^c\rvert &= \sum_{j=1}^p \mathbb{P}(j\in\hat{S}_{\mathrm{IPSS},f})\1(j\in S^c)
        \leq \sum_{j=1}^p \mathbb{P}(j\in\hat{S}_{\mathrm{IPSS},g})\1(j\in S^c)
        = \mathbb{E}\lvert\hat{S}_{\mathrm{IPSS},g}\cap S^c\rvert.
\end{align*}
Similarly,
\begin{align*}
    \mathrm{E}\lvert\hat{S}_{\mathrm{IPSS},g}^c\cap S\rvert &= \sum_{j=1}^p \mathbb{P}(j\notin\hat{S}_{\mathrm{IPSS},g})\1(j\in S)
        = \sum_{j=1}^p (1 - \mathbb{P}(j\in\hat{S}_{\mathrm{IPSS},g}))\1(j\in S) \\
        &\leq \sum_{j=1}^p (1 - \mathbb{P}(j\in\hat{S}_{\mathrm{IPSS},f}))\1(j\in S)
        = \mathrm{E}\lvert\hat{S}_{\mathrm{IPSS},f}^c\cap S\rvert. \qedhere
\end{align*}
\end{proof}


\subsection{Functions defined by monomials}\label{sec:whole}


The functions $h_m$ defined in \cref{eq:half} are zero on $[0,0.5]$. 
Another natural class of functions $\{w_m:m\in\mathbb{N}\}$ on $[0,1]$ is defined by
\begin{align*}
    w_m(x) &= x^m.
\end{align*}
The $w$ stands for ``whole," indicating that these functions are positive on the whole unit interval. For example, TIGRESS \citep{tigress} uses a special case of IPSS with $f = w_1$; see \cref{sec:related_work}. \cref{thrm:whole} uses the following slight variation of \cref{cond:1}.

\begin{condition}\label{cond:2}
We say \textnormal{\cref{cond:2} holds for $m$} if for all $\lambda\in\Lambda$ and $\mathcal{B}\subseteq\{1,\dots,2B\}$ with $\lvert\mathcal{B}\rvert=m$,
\begin{align*}
    \max_{j\in S^c}\,\mathbb{P}\bigg(j\in\bigcap_{b\in\mathcal{B}} \hat{S}_\lambda(\bm{Z}_{A_b})\bigg) &\leq (q(\lambda)/p)^m.
\end{align*}
\end{condition}

\begin{theorem}\label{thrm:whole}
Let $\tau\in(0,1]$ and $m\in\mathbb{N}$. If \cref{cond:2} holds for all $m'\in\{1,\dots,m\}$, then
\begin{align*}
	\mathrm{E}\lvert\hat{S}_{\mathrm{IPSS},w_m}\cap S^c\rvert &\leq \frac{p}{\tau (2 B)^m}\sum_{k_1+\cdots+k_{2 B}=m}\binom{m}{k_1,\dots,k_{2 B}}\int_\Lambda (q(\lambda)/p)^{\sum_{b=1}^{2 B} \1(k_b \neq 0)} \mu(d\lambda)
\end{align*}
where $B$ is the number of subsampling steps in \cref{alg:ss} and the sum is over all nonnegative integers $k_1,\ldots,k_{2 B}$ such that $k_1 + \cdots + k_{2 B} = m$.
\end{theorem}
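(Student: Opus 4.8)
The plan is to mirror the proof of \cref{thrm:main}, but the argument is actually cleaner here because $w_m(x)=x^m$ is applied directly to $\hat{\pi}_j$. In particular, there is no need for the halving inequality $\hat{\pi}_j\leq\frac{1}{2}(1+\widetilde{\pi}_j)$ from \cref{eq:shah1}, and the complementary-pair structure plays no role; this is precisely why \cref{cond:2}, which controls simultaneous selection over arbitrary subcollections of all $2B$ subsamples $A_1,\dots,A_{2B}$, replaces \cref{cond:1}, which was tailored to the $B$ pairs $(A_{2b-1},A_{2b})$. First I would write $\hat{\pi}_j(\lambda)=\frac{1}{2B}\sum_{b=1}^{2B}V_{jb}(\lambda)$ with $V_{jb}(\lambda)=\1(j\in\hat{S}_\lambda(\bm{Z}_{A_b}))$, so that $\int_\Lambda w_m(\hat{\pi}_j)\mu(d\lambda)=\int_\Lambda\hat{\pi}_j^m\mu(d\lambda)$. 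Applying Markov's inequality and then swapping expectation and integration via Tonelli's theorem (justified since $\hat{\pi}_j^m\geq 0$ is measurable, exactly as in \cref{thrm:main}) gives, for each $j\in S^c$, the bound $\mathbb{P}(j\in\hat{S}_{\mathrm{IPSS},w_m})\leq\tfrac{1}{\tau}\int_\Lambda\mathrm{E}(\hat{\pi}_j(\lambda)^m)\,\mu(d\lambda)$.

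The heart of the argument is then the analogue of \cref{lem:second_moment}: a pointwise bound on $\mathrm{E}(\hat{\pi}_j(\lambda)^m)$ for $j\in S^c$. I would expand $\hat{\pi}_j^m$ by the multinomial theorem and use $V_{jb}\in\{0,1\}$ to collapse $V_{jb}^{k_b}=V_{jb}^{\1(k_b\neq 0)}$, obtaining
\[
  \hat{\pi}_j(\lambda)^m=\frac{1}{(2B)^m}\sum_{k_1+\cdots+k_{2B}=m}\binom{m}{k_1,\dots,k_{2B}}\prod_{b=1}^{2B}V_{jb}(\lambda)^{\1(k_b\neq 0)}.
\]
For each multi-index $k$, the product equals $\1\!\big(j\in\bigcap_{b:\,k_b\neq 0}\hat{S}_\lambda(\bm{Z}_{A_b})\big)$, and since $\lvert\{b:k_b\neq 0\}\rvert=N_k\leq m$, \cref{cond:2} applied at $m'=N_k$ bounds its expectation by $(q(\lambda)/p)^{N_k}$. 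Taking expectations term by term then yields $\mathrm{E}(\hat{\pi}_j(\lambda)^m)\leq(2B)^{-m}\sum_k\binom{m}{k_1,\dots,k_{2B}}(q(\lambda)/p)^{N_k}$.

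Finally I would substitute this into the Markov estimate, observe that the resulting bound is uniform over $j\in S^c$, and sum $\mathrm{E}\lvert\hat{S}_{\mathrm{IPSS},w_m}\cap S^c\rvert=\sum_{j}\mathbb{P}(j\in\hat{S}_{\mathrm{IPSS},w_m})\1(j\in S^c)$ using $\sum_j\1(j\in S^c)\leq p$ to arrive at the stated inequality. I do not expect a genuine obstacle, since this is essentially the $h_m$ proof with the halving step deleted; the only points needing care are the collapse $V_{jb}^{k_b}=V_{jb}^{\1(k_b\neq 0)}$ and correctly identifying the intersection as running over the support $\{b:k_b\neq 0\}$, so that \cref{cond:2} is invoked with exponent $N_k$ rather than $m$, together with the routine measurability justification for the Tonelli/Markov step.
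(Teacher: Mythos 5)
Your proposal is correct and follows essentially the same route as the paper's own proof: multinomial expansion of $\hat{\pi}_j^m$ over the $2B$ subsample indicators, collapsing exponents via $V_{jb}\in\{0,1\}$, invoking \cref{cond:2} with exponent $N_k$, and then Markov's inequality with Tonelli's theorem before summing over $j\in S^c$. Your observation that the halving inequality \cref{eq:shah1} is unnecessary here is also exactly how the paper's argument proceeds.
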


While \cref{thrm:main,thrm:whole} appear similar, there are two key differences. First, \cref{thrm:whole} has an additional factor of $2^m$ in the denominator, which is favorable in terms of tightness of the bound.  On the other hand, the integrand in \cref{thrm:whole} is $(q(\lambda)/p)^{N_k}$ rather than $(q(\lambda)/p)^{2 N_k}$ as in \cref{thrm:main}. Doubling the exponent makes the integral in \cref{thrm:main} significantly smaller than the one in \cref{thrm:whole}, since one typically has $q(\lambda)\ll p$ over much of $\Lambda$. This advantage outweighs the countervailing factor of $2^m$, which is why we recommend using $h_m$ in the main text.

\begin{proof}[Proof of \cref{thrm:whole}.]
Fix $\tau\in(0,1]$ and $m\in\mathbb{N}$. The proof is almost exactly the same as that of \cref{thrm:main}. Specifically, suppressing $\lambda$ and $\Lambda$ from the notation, the multinomial theorem gives
\begin{align*}
    \hat\pi_j^m &= \bigg(\frac{1}{2 B}\sum_{b=1}^{2 B}\1(j\in \hat{S}_\lambda(\bm{Z}_{A_b}))\bigg)^m
        = \frac{1}{(2 B)^m}\sum_{k_1+\cdots+k_{2 B}=m}\binom{m}{k_1,\dots,k_{2 B}}\prod_{b\,:\,k_b\neq 0} \1(j\in \hat{S}_\lambda(\bm{Z}_{A_b})).
\end{align*}
Since \cref{cond:2} holds for all $m'\leq m$ and since at most $m$ of $k_1,\ldots,k_{2 B}$ are nonzero,
\begin{align*}
    \mathrm{E}(\hat\pi_j^m) &= \frac{1}{(2 B)^m}\sum_{k_1+\cdots+k_{2 B}=m}\binom{m}{k_1,\dots,k_{2 B}}\mathrm{E}\prod_{b\,:\,k_b\neq 0} \1(j\in \hat{S}_\lambda(\bm{Z}_{A_b})) \\
        &\leq \frac{1}{(2 B)^m}\sum_{k_1+\cdots+k_{2 B}=m}\binom{m}{k_1,\dots,k_{2 B}}(q/p)^{N_k}
\end{align*}
where $N_k = \sum_{b=1}^{2 B} \1(k_b \neq 0)$. So for any $j\in\{1,\dots,p\}$, Markov's inequality gives
\begin{align*}
    \mathbb{P}(j\in\hat{S}_{\mathrm{IPSS},w_m}) &= \mathbb{P}\left(\int w_m(\hat\pi_j)d\mu\geq\tau\right)
        = \mathbb{P}\left(\int\hat\pi_j^md\mu\geq\tau\right)
        \leq\frac{1}{\tau}\int\mathrm{E}(\hat\pi_j^m)d\mu \\
        &\leq \frac{1}{\tau(2 B)^m}\sum_{k_1+\cdots+k_{2 B}=m}\binom{m}{k_1,\dots,k_{2 B}}\int(q/p)^{N_k}d\mu,
\end{align*}
where Tonelli's theorem justifies interchanging the order of expectation and integration.
Therefore,
\begin{align*}
    \mathbb{E}\lvert\hat{S}_{\mathrm{IPSS},w_m}\cap S^c\rvert &= \sum_{j=1}^p \mathbb{P}(j\in\hat{S}_{\mathrm{IPSS},w_m})\1(j\in S^c) \\
        &\leq \frac{1}{\tau(2 B)^m}\sum_{k_1+\cdots+k_{2 B}=m}\binom{m}{k_1,\dots,k_{2 B}}\int(q/p)^{N_k}d\mu \sum_{j=1}^p\1(j\in S^c) \\
        &\leq \frac{p}{\tau(2 B)^m}\sum_{k_1+\cdots+k_{2 B}=m}\binom{m}{k_1,\dots,k_{2 B}}\int(q/p)^{N_k}d\mu. \qedhere
\end{align*}
\end{proof}


\subsection{Some piecewise linear functions}\label{sec:piecewise}


For $\delta\in[0,1)$ define $f_\delta:[0,1]\to[0,1]$ by
\begin{align*}
	f_\delta(x) &= \left(\frac{x-\delta}{1-\delta}\right)\1(x\geq\delta).
\end{align*}
Note that $f_\delta$ is identically $0$ on $[0,\delta]$ and then rises linearly on $[\delta,1]$. \cref{thrm:linear} is a continuous analogue of Theorem 1 from \citet{mb} when $\delta=0$.

\begin{theorem}\label{thrm:linear}
Let $\delta\in[0,1)$, 
$\tau\in(0,1]$ such that $\tau > \tfrac{1-2\delta}{2(1-\delta)}$.
If \cref{cond:1} holds for $m=1$, then
\begin{align*}
	\mathrm{E}\lvert\hat{S}_{\mathrm{IPSS},f_\delta}\cap S^c\rvert &\leq \frac{1}{2(\tau-\tau\delta+\delta)-1}\int_\Lambda \frac{q(\lambda)^2}{p} \mu(d\lambda).
\end{align*}
\end{theorem}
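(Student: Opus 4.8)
The plan is to bound $\mathbb{P}(j\in\hat{S}_{\mathrm{IPSS},f_\delta})$ for each fixed $j\in S^c$ and then sum over $S^c$, following the template of the $m=1$ case of \cref{thrm:simplified}. First I would exploit that $f_\delta$ is nondecreasing together with the deterministic inequality $\hat{\pi}_j(\lambda)\leq\tfrac12(1+\widetilde{\pi}_j(\lambda))$ from \cref{eq:shah1} to pass from the stability paths $\hat{\pi}_j$ to the simultaneous selection probabilities $\widetilde{\pi}_j$, which are the objects \cref{cond:1} controls. This yields the event inclusion
\begin{align*}
\Big\{\int_\Lambda f_\delta(\hat{\pi}_j)\,d\mu\geq\tau\Big\}\subseteq\Big\{\int_\Lambda f_\delta\big(\tfrac12(1+\widetilde{\pi}_j)\big)\,d\mu\geq\tau\Big\}.
\end{align*}

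The crux is the explicit evaluation of $f_\delta\circ(\tfrac12(1+\cdot))$. A direct computation gives $f_\delta\big(\tfrac12(1+x)\big)=\frac{1-2\delta+x}{2(1-\delta)}$ on the range where $\tfrac12(1+x)\geq\delta$; since $\widetilde{\pi}_j(\lambda)\in[0,1]$ forces $\tfrac12(1+\widetilde{\pi}_j(\lambda))\in[\tfrac12,1]$, this affine identity applies throughout the relevant range. Substituting and rearranging turns the inner event into $\int_\Lambda\widetilde{\pi}_j\,d\mu\geq D$, where $D=2(\tau-\tau\delta+\delta)-1$, and the hypothesis $\tau>\frac{1-2\delta}{2(1-\delta)}$ is exactly what guarantees $D>0$, so the subsequent Markov step is not vacuous. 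As a sanity check, $\delta=0$ recovers the Meinshausen--B\"uhlmann threshold $2\tau-1$, while $\delta=\tfrac12$ gives $D=\tau$, consistent with $f_{1/2}=h_1$ and \cref{eq:linear}.

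With the event rewritten in terms of $\int_\Lambda\widetilde{\pi}_j\,d\mu$, I would apply Markov's inequality and interchange expectation and integration via Tonelli's theorem (justified since $\widetilde{\pi}_j\geq0$ is measurable), obtaining $\mathbb{P}(j\in\hat{S}_{\mathrm{IPSS},f_\delta})\leq\frac1D\int_\Lambda\mathrm{E}(\widetilde{\pi}_j)\,d\mu$. Invoking \cref{cond:1} for $m=1$ in the form $\mathrm{E}(\widetilde{\pi}_j(\lambda))\leq(q(\lambda)/p)^2$ (the computation in \cref{eq:pi_squared}, equivalently \cref{lem:second_moment} with $m=1$) bounds each integrand, and summing over the at most $p$ indices $j\in S^c$ supplies the factor of $p$ that reduces $(q/p)^2$ to $q^2/p$, giving the stated bound.

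The main obstacle I anticipate is the kink of $f_\delta$ at $x=\delta$: one must ensure that discarding the indicator $\1(\cdot\geq\delta)$ preserves the inequality in the correct direction. The clean resolution is the observation above that $\tfrac12(1+\widetilde{\pi}_j)$ never falls below $\tfrac12$, so that for $\delta\leq\tfrac12$ the indicator is identically one and the affine identity is exact; tracking this kink carefully is precisely what pins down the constant $D$ and ties the admissible range of $\tau$ to $\delta$.
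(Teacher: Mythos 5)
Your argument is essentially identical to the paper's proof: the same chain of \cref{eq:shah1} plus monotonicity of $f_\delta$, the affine identity $f_\delta\big(\tfrac12(1+x)\big)=\big(\tfrac12(1+x)-\delta\big)/(1-\delta)$, Markov's inequality with Tonelli, \cref{cond:1} for $m=1$ via \cref{eq:pi_squared}, and the final sum over $S^c$. The kink issue you flag is handled in the paper by the same observation (the identity $f_\delta(x)=(x-\delta)/(1-\delta)$ for $x\geq 1/2$), which, as you note, implicitly presumes $\delta\leq 1/2$ --- a restriction the paper's own proof shares.
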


\begin{proof}
Let $\delta\in[0,1)$ and $\tau\in(0,1]$ such that $\tau > (1-2\delta)/(2-2\delta)$. Fix $j\in S^c$. We have
\begin{align*}
    \mathbb{P}(j\in\hat{S}_{\mathrm{IPSS},f_\delta}) &= \mathbb{P}\left(\int f_\delta(\hat{\pi}_j)d\mu\geq\tau\right) 
	\leq \mathbb{P}\left(\frac{1}{1-\delta}\int \big(\tfrac{1}{2}(1+\widetilde{\pi}_j)-\delta\big)\!\ d\mu\geq\tau\right) \\
	&= \mathbb{P}\left(\int \widetilde{\pi}_j \,d\mu\geq 2(\tau-\delta\tau+\delta)-1\right)
	\leq \frac{1}{2(\tau-\delta\tau+\delta)-1}\int\mathrm{E}(\widetilde{\pi}_j)d\mu \\
	&\leq \frac{1}{2(\tau-\delta\tau+\delta)-1}\int (q/p)^2d\mu.
\end{align*}
The first inequality holds since 
$f_\delta(\hat{\pi}_j) \leq f_\delta\big(\frac{1}{2}(1 + \widetilde{\pi}_j)\big) = (\frac{1}{2}(1 + \widetilde{\pi}_j) - \delta) / (1 - \delta)$ by \cref{eq:shah1} and because $f_\delta$ is monotonically increasing with $f_\delta(x) = (x-\delta)/(1-\delta)$ for $x\geq 1/2$. The second inequality is by Markov's inequality and Tonelli's theorem. The final inequality follows from the definition of $\widetilde{\pi}_j$ and the assumption that \cref{cond:1} holds for $m=1$.
The assumption $\tau>(1-2\delta)/(2-2\delta)$ implies $2(\tau-\delta\tau+\delta)-1>0$, so the leading fraction is well-defined. Therefore,
\begin{align*}
	\mathrm{E}\lvert\hat{S}_{\mathrm{IPSS},f_\delta}\cap S^c\rvert &= \mathrm{E}\bigg(\sum_{j=1}^p\1(j\in \hat{S}_{\mathrm{IPSS},f_\delta})\1(j\in S^c)\bigg)
	= \sum_{j=1}^p \mathbb{P}(j\in\hat{S}_{\mathrm{IPSS},f_\delta})\1(j\in S^c) \\
	&\leq \frac{1}{2(\tau-\delta\tau+\delta)-1}\int \frac{q^2}{p^2}d\mu \sum_{j=1}^p\1(j\in S^c)d\mu
	\leq \frac{1}{2(\tau-\delta\tau+\delta)-1}\int \frac{q^2}{p}d\mu. \qedhere
\end{align*}
\end{proof}


\section{IPSS with other base estimators}\label{sup_sec:other_estimators}


IPSS can be used with any base estimator $\hat{S}_\lambda$ that performs feature selection and depends upon a parameter $\lambda\in\mathbb{R}$. Not all methods can be used, however. One such example is the debiased lasso, a popular approach for producing an unbiased estimator $\hat{\bm{\beta}}_u$ of the true coefficient vector $\bm{\beta}^*$ in the linear model $Y = X^{\mathtt{T}}\bm{\beta}^* + \epsilon$ \citep{debiased_lasso}. Given observations $(\bm{x}_i,y_i)$ of $n$ random vectors $(\bm{X}_i,Y_i)$ and a regularization value $\lambda>0$, this estimator takes the form
\begin{align*}
	\hat{\bm{\beta}}_u(\lambda) &= \hat{\bm{\beta}}(\lambda) + \frac{1}{n}M{\bf X}^{\mathtt{T}}({\bf Y} - {\bf X}\hat{\bm{\beta}}(\lambda))
\end{align*} 
where ${\bf X}\in\mathbb{R}^{n\times p}$ is the design matrix, ${\bf Y}=(y_1,\dots,y_n)^{\mathtt{T}}$, $\hat{\bm{\beta}}(\lambda)$ is the lasso solution (\cref{eq:lasso}), and $M\in\mathbb{R}^{p\times p}$ is the solution of a certain convex program \citep{debiased_lasso}. The problem with respect to IPSS is that every entry of the correction term $M\bf{X}^{\mathtt{T}}({\bf Y} - {\bf X}\hat{\bm{\beta}}(\lambda))$ is usually nonzero, even for features $X_j$ with $\beta^*_j=0$. Consequently, all entries of $\hat{\bm{\beta}}_u(\lambda)$ are nonzero and hence $\hat{S}_\lambda = \{j : \hat{\bm{\beta}}_u(\lambda) \neq 0\} = \{1,\dots,p\}$. Thus, unlike lasso, whose solutions typically include zeros, the debiased lasso does not inherently perform feature selection in the finite sample setting, making it incompatible with IPSS and stability selection.

A general class of estimators in the linear regression setting that perform feature selection and are therefore compatible with IPSS is
\begin{align}\label{eq:general_penalty}
	\hat{\bm{\beta}}_\varphi(\lambda) &= \argmin_{\bm{\beta}\in\mathbb{R}^p}\, \frac{1}{2}\sum_{i = 1}^n (y_i-\bm{x}_i^\mathtt{T}\bm{\beta})^2 + \sum_{j=1}^p \varphi(|\beta_j|, \lambda),
\end{align}
where $\varphi$ is a penalty function. We consider three estimators based on \cref{eq:general_penalty}. First, lasso is the case of $\varphi(x,\lambda)=\lambda x$ \citep{lasso}, which we implement using the \texttt{scikit-learn} Python package \citep{sklearn}. Second, the smoothly clipped absolute deviation penalty (SCAD), introduced by \cite{scad}, is
\begin{align*}
	\varphi_{\text{SCAD}}(x,\lambda) &=
	\begin{cases}
		\lambda |x| & \text{if}\ |x| \leq \lambda, \\
		\displaystyle\frac{2\gamma\lambda|x| - x^2 - \lambda^2}{2(\gamma - 1)} & \text{if}\ \lambda < |x| \leq \gamma\lambda, \\
		\displaystyle\frac{\lambda^2(\gamma + 1)}{2} & \text{if}\ |x| > \gamma\lambda.
	\end{cases}
\end{align*}
Following \citet{scad}, we use $\gamma = 3.7$ in all our experiments. Given $\gamma$, the solution $\hat{\bm{\beta}}_{\varphi_{\text{SCAD}}}(\lambda)$ to \cref{eq:general_penalty} depends only on $\lambda$ and can be used as a base estimator for IPSS.

Third, the minimax concave penalty, MCP \citep{mcp}, is another case of \cref{eq:general_penalty} where
\begin{align*}
	\varphi_{\text{MCP}}(x, \lambda) &=
	\begin{cases}
		\displaystyle\lambda|x| - \frac{x^2}{2\gamma} & \text{if}\ |x|\leq\gamma\lambda, \\
		\displaystyle\frac{\gamma\lambda^2}{2} & \text{if}\ |x| > \gamma\lambda.
	\end{cases}
\end{align*}
We use $\gamma=3$ in all our experiments, which is the default setting in the Python package \texttt{skglm} that we use to implement both MCP and SCAD \citep{skglm}. As with SCAD, once $\gamma$ is fixed, the resulting MCP estimator can be used with IPSS.

In addition to the three estimators based on \cref{eq:general_penalty}, we consider IPSS with the adaptive lasso as the base estimator \citep{adaptive_lasso}. In this case, the estimated coefficient vector is
\begin{align*}
	\hat{\bm{\beta}}_{\text{adap}} &= \argmin_{\bm{\beta}\in\mathbb{R}^p}\, \frac{1}{2}\sum_{i = 1}^n (y_i-\bm{x}_i^\mathtt{T}\bm{\beta})^2 + \lambda\sum_{j=1}^p \hat{w}_j|\beta_j|,
\end{align*}
where $\hat{w}_j$ are data-dependent weights that modulate the regularization applied to each coefficient. Following \cite{adaptive_lasso}, we set $\hat{w}_j = 1/\lvert\tilde{\beta}_j\rvert$ where $\tilde{\bm{\beta}}$ is the solution to the ridge regression problem \citep{ridge_regression}, computed using the default settings in \texttt{scikit-learn} \citep{sklearn}. We use ridge regression to compute $\tilde{\bm{\beta}}$ since ordinary least squares is not applicable when $p>n$, as is the case in our simulations.

Finally, the optimization problem for $\ell_1$-regularized logistic regression is
\begin{align}\label{eq:logistic_regression}
	\hat{\bm{\beta}}(\lambda) &= \argmin_{\bm{\beta}\in\mathbb{R}^p} -\sum_{i=1}^n \big[y_i\log(p_i) + (1-y_i)\log(1 - p_i)\big] + \lambda\sum_{j=1}^p|\beta_j|,
\end{align} 
where $p_i = 1/(1 + \exp(-\bm{x}_i^{\mathtt{T}}\bm{\beta}))$ \citep{logistic_regression}. Here, the response variables $Y_i$ take values in $\{0,1\}$. Like lasso, the $\ell_1$-penalty in \cref{eq:logistic_regression} typically shrinks some entries of $\hat{\bm{\beta}}(\lambda)$ to zero, making $\ell_1$-regularized logistic regression compatible with IPSS.


\section{Empirical study of the theoretical conditions}\label{sup_sec:condition}


Our theory is based on \cref{cond:1}, which is less stringent than the exchangeability and not-worse-than-random-guessing conditions of \cite{mb}; we refer to these as the MB conditions for short.
However, \cref{cond:1} does not hold in general. For example, these conditions tend to be violated when true and non-true features are highly correlated.
Nonetheless, even then, empirically we find that only a small proportion of non-true features violate \cref{cond:1} within the interval $[\lmin,\lmax]$; see \cref{fig:condition}.
This also illustrates how our construction of $\lmin$ prevents IPSS from integrating over small regularization values where \cref{cond:1} tends to fail.

Additionally, in \cref{fig:condition} we see that for any given value of $\lambda$, \cref{cond:1} tends to be more likely to hold as $m$ increases. This subtle point contributes to the improved performance of IPSS relative to stability selection. Recall that the MB conditions imply \cref{cond:1} when $m=1$. Hence, for any given $\lambda$, the more frequent violations of \cref{cond:1} when $m=1$ imply that the MB conditions are more frequently violated, compared to \cref{cond:1} for $m \in\{2,3\}$.
Furthermore, while \cref{eq:quadratic,eq:cubic} require \cref{cond:1} to hold for $m\in\{1,2\}$ and $m\in\{1,2,3\}$, respectively, the $1/B$ and $1/B^2$ terms in their integrands mitigate the higher proportion of failures of \cref{cond:1}. This partly explains why the target E(FP) is relatively well-approximated in all of our simulation results, even when \cref{cond:1} does not necessarily hold for all features.

\ifthenelse{\boolean{showfigures}}{
\begin{figure*}[!ht]
\makebox[\textwidth]{\includegraphics[width=\textwidth, height=.2\textheight]{./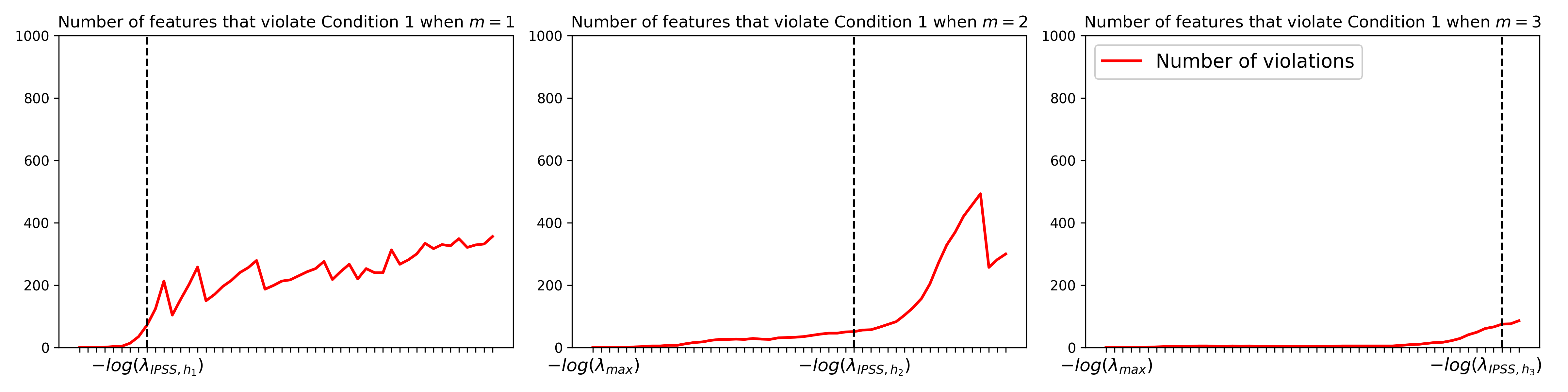}}%
\caption{\textit{Violations of \cref{cond:1} for highly correlated data}. (\textit{Left}) Number of features that violate \cref{cond:1} when $m=1$ for each value of $\lambda$. (\textit{Middle, Right}) Same as the left plot but for $m=2$ and $3$. The dashed lines show $-\log(\lambda_{\mathrm{IPSS},h_m})$, where $\lambda_{\mathrm{IPSS},h_m}$ is the $\lmin$ for IPSS with $h_m$. The horizontal axis, shown on a log scale, is the same in all three plots.
Data are simulated from a linear regression model with normal residuals and Toeplitz design with $\rho=0.9$ (\cref{sec:simulations}), with $n=200$, $p=1000$, $s=20$, and $\mathrm{SNR}=1$.
}
\label{fig:condition}
\end{figure*}
}{}


\section{Additional application results}\label{sup_sec:applications}


\cref{tab:prostate} lists all of the proteins that are identified by the IPSS and stability selection methods in our prostate cancer RPPA study in \cref{sec:prostate}. A \cmark\, indicates that the protein was selected by the corresponding method, and an \xmark\, indicates that it was not selected. The references next to each protein provide further evidence that these proteins are related to prostate cancer. \cref{tab:colon} is similar to \cref{tab:prostate}, but for the colon cancer study in \cref{sec:colon}. 

\begin{table}[!ht]
\centering
\resizebox{\textwidth}{!}{
\begin{tabular}{|l|c|c|c|c|c|}
\hline
\textbf{Protein} & \textbf{IPSS(cubic)} & \textbf{IPSS(quad)} & \textbf{r-concave} & \textbf{UM} & \textbf{MB} \\
\hline
BAK1 \citep{bak1}     &     \cmark       &     \xmark       &      \xmark      &     \xmark       &    \xmark        \\
\hline
DIRAS3 \citep{diras3} &     \cmark      &     \cmark       &     \cmark       &     \cmark       &    \cmark        \\
\hline
MAPK9 \citep{mapk9}   &     \cmark       &     \cmark       &     \cmark       &    \cmark        &     \cmark       \\
\hline
STK11 \citep{stk11}   &     \cmark       &     \cmark       &      \xmark      &      \xmark      &      \xmark      \\
\hline
NOTCH1 \citep{notch1} &     \cmark       &     \cmark       &     \cmark       &     \cmark       &    \cmark        \\
\hline
PKC \citep{pkc}       &     \cmark       &      \cmark      &      \xmark      &      \xmark      &    \xmark        \\
\hline
PTEN \citep{pten}     &     \cmark       &     \xmark       &      \xmark      &     \xmark       &     \xmark       \\
\hline
SMAD1 \citep{smad1}   &     \cmark       &     \cmark       &      \xmark      &     \xmark       &     \xmark       \\
\hline
EIF4E \citep{eif4e}   &     \cmark       &     \cmark      &     \cmark       &      \cmark      &     \cmark       \\
\hline
SQSTM1 \citep{sqstm1} &     \cmark       &     \cmark       &       \cmark     &     \xmark       &     \xmark       \\
\hline
\end{tabular}}
\caption{\textit{Selected proteins from the prostate cancer RPPA data set}.}
\label{tab:prostate}
\end{table}

\begin{table}[!ht]
\centering
\resizebox{\textwidth}{!}{
\begin{tabular}{|l|c|c|c|c|c|}
\hline
\textbf{Gene} & \textbf{IPSS(cubic)} & \textbf{IPSS(quad)} & \textbf{r-concave} & \textbf{UM} & \textbf{None} \\
\hline
R87126 \citep{h08393_j02854_MYL9_m26383_IL1_r87126_t47377_S100} & \cmark & \cmark & \cmark & \cmark & \cmark \\
\hline
M63391 \citep{m63391_desmin} & \cmark & \cmark & \cmark & \cmark & \xmark \\
\hline
H08393 \citep{h08393_j02854_MYL9_m26383_IL1_r87126_t47377_S100} & \cmark & \cmark & \cmark & \xmark & \xmark \\
\hline
Z50753 \citep{z50753_GUCA2B} & \cmark & \cmark & \cmark & \xmark & \xmark \\
\hline
J02854 \citep{h08393_j02854_MYL9_m26383_IL1_r87126_t47377_S100} & \cmark & \cmark & \cmark & \xmark & \xmark \\
\hline
M26383 \citep{h08393_j02854_MYL9_m26383_IL1_r87126_t47377_S100} & \cmark & \cmark & \cmark & \xmark & \xmark \\
\hline
H06524 \citep{h06524} & \cmark & \cmark & \xmark & \xmark & \xmark \\
\hline
X12671 \citep{x12671_HNRNPA1} & \cmark & \cmark & \cmark & \xmark & \xmark \\
\hline
R54097 \citep{r54097_EIF2S2} & \cmark & \xmark & \xmark & \xmark & \xmark \\
\hline
X63629 \citep{x63629_pcadherin} & \cmark & \cmark & \xmark & \xmark & \xmark \\
\hline
T62947 \citep{t62947} & \cmark & \xmark & \xmark & \xmark & \xmark \\
\hline
M36634 \citep{m36634_VIP} & \cmark & \cmark & \xmark & \xmark & \xmark \\
\hline
T92451 \citep{h20709_t92451} & \cmark & \xmark & \xmark & \xmark & \xmark \\
\hline
T47377 \citep{h08393_j02854_MYL9_m26383_IL1_r87126_t47377_S100} & \cmark & \xmark & \xmark & \xmark & \xmark \\
\hline
R36977 \citep{r36977_GTF3A} & \cmark & \cmark & \xmark & \xmark & \xmark \\
\hline
H20709 \citep{h20709_t92451} & \cmark & \xmark & \xmark & \xmark & \xmark \\
\hline
\end{tabular}}
\caption{\textit{Selected genes from the colon cancer genomics data set}. Genes are identified by their GenBank accession numbers; here is a list of some of their official symbols (in parentheses):
J02854 (MYL9), 
M26383 (IL-1), 
T47377 (S-100),
J02854 (MYL9),
M26383 (CXCL8), 
M36634 (VIP), 
M63391 (desmin),
R36977 (GTF3A),
R54097 (EIF2S2),
X12671 (hnRNP A1), 
X63629 (P-cadherin),
Z50753 (GUCA2B).}
\label{tab:colon}
\end{table}


\clearpage


\section{Additional simulation results}\label{sup_sec:simulations}


All results in this section are from the simulation experiments described in \cref{sec:simulations} of the main text. \cref{fig:compare_linear_reg_200_df2,fig:compare_linear_reg_1000_df2} show results for linear regression for Student's $t$ residuals with $2$ degrees of freedom, \cref{fig:compare_linear_reg_200_mcp,fig:compare_linear_reg_1000_mcp} show results for linear regression with normal residuals when the base estimator is MCP, \cref{fig:compare_linear_reg_200_scad,fig:compare_linear_reg_1000_scad} show results for linear regression with normal residuals when the base estimator is SCAD, \cref{fig:compare_linear_reg_200_adaptive,fig:compare_linear_reg_1000_adaptive} show results for linear regression with normal residuals when the base estimator is adaptive lasso, and \cref{fig:compare_linear_class_200,fig:compare_linear_class_1000} show results for logistic regression when the base estimator is $\ell_1$-regularized logistic regression. In all cases, the same base estimator and stability paths are used for each of the IPSS and stability selection methods, and the IPSS parameters are always set to their default values, as described in \cref{sup_sec:sensitivity_to_mu}.

\ifthenelse{\boolean{showfigures}}{
\begin{figure*}[!ht]
\includegraphics[width=0.85\textwidth, height=.275\textheight]{./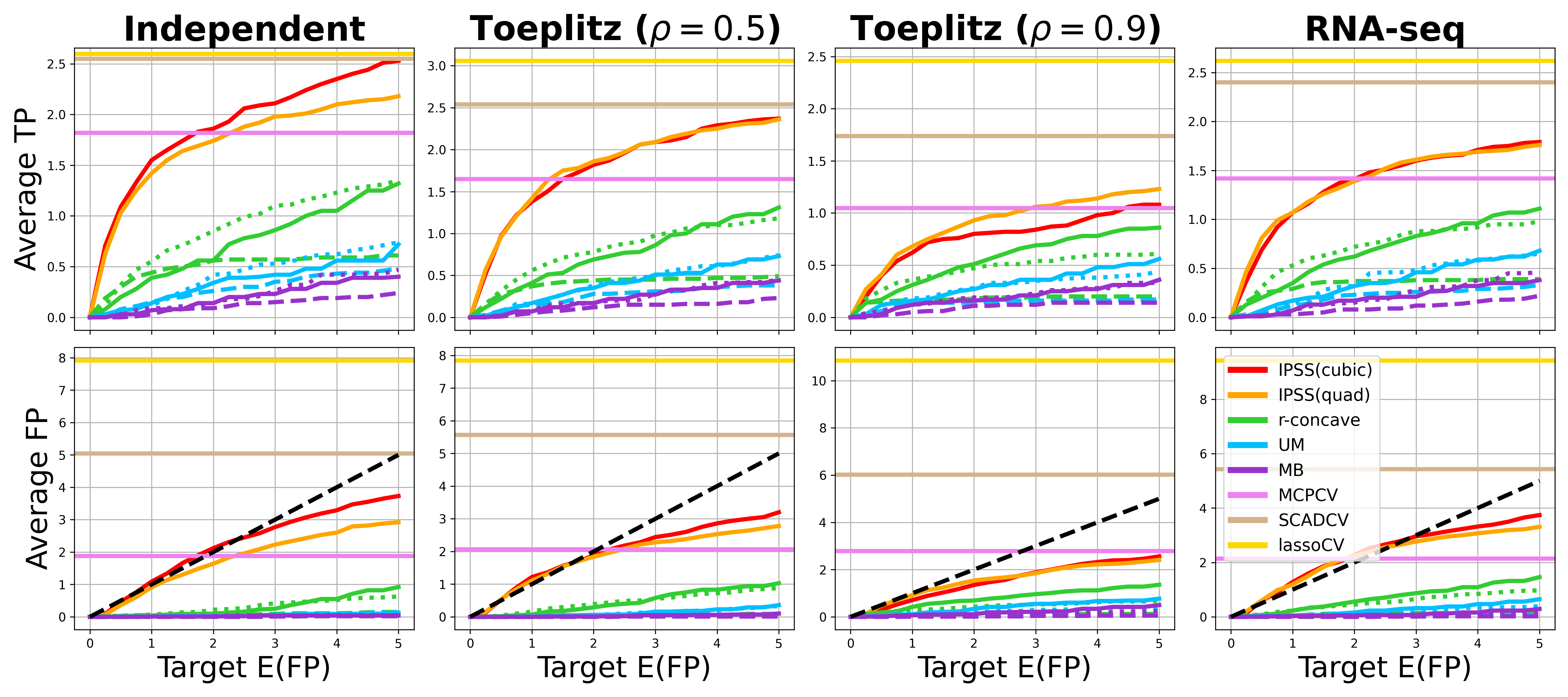}%
\caption{\textit{Linear regression, Student's $t$ residuals: Lasso $(p=200)$}. The solid, dotted, and dashed lines for the stability selection methods represent $\tau=0.6$, $0.75$, and $0.9$, respectively. Cross-validation results (horizontal lines) are independent of Target E(FP). The dashed black line represents perfect E(FP) control.}
\label{fig:compare_linear_reg_200_df2}
\end{figure*}

\begin{figure*}[!ht]
\includegraphics[width=0.85\textwidth, height=.275\textheight]{./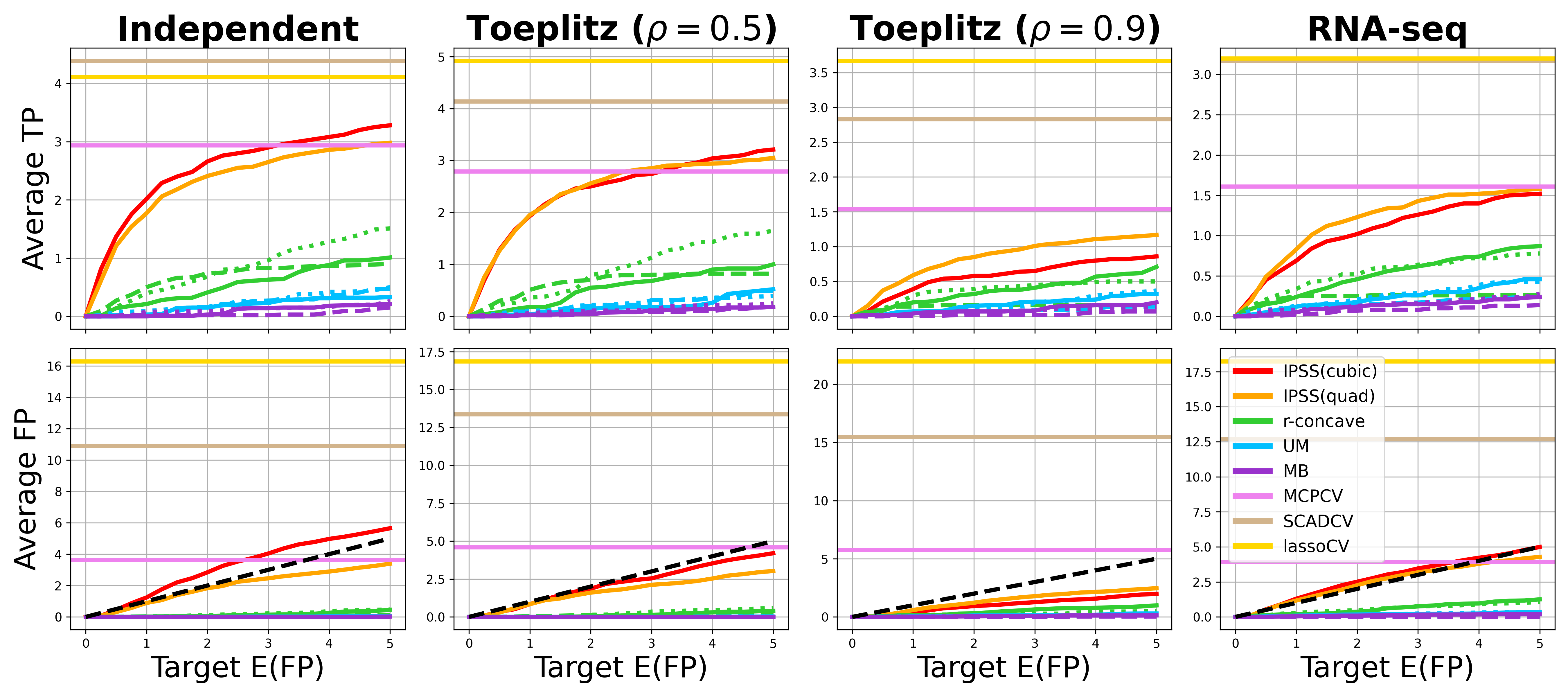}%
\caption{\textit{Linear regression, Student's $t$ residuals: Lasso $(p=1000)$}. See \cref{fig:compare_linear_reg_200_df2}.}
\label{fig:compare_linear_reg_1000_df2}
\end{figure*}
}{}

\ifthenelse{\boolean{showfigures}}{
\begin{figure*}[!ht]
\includegraphics[width=0.85\textwidth, height=.275\textheight]{./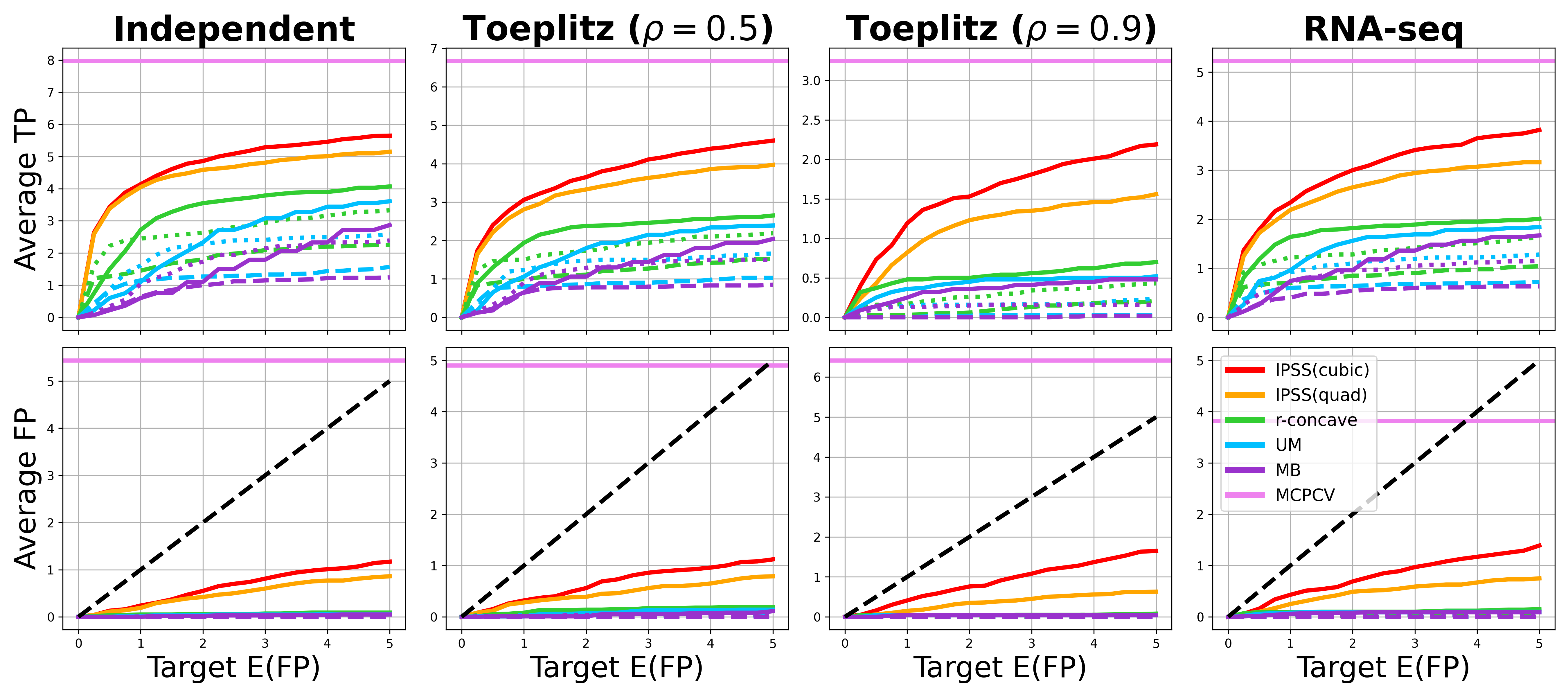}%
\caption{\textit{Linear regression, normal residuals: MCP $(p=200)$}. See \cref{fig:compare_linear_reg_200_df2}.}
\label{fig:compare_linear_reg_200_mcp}
\end{figure*}

\begin{figure*}[!ht]
\includegraphics[width=0.85\textwidth, height=.275\textheight]{./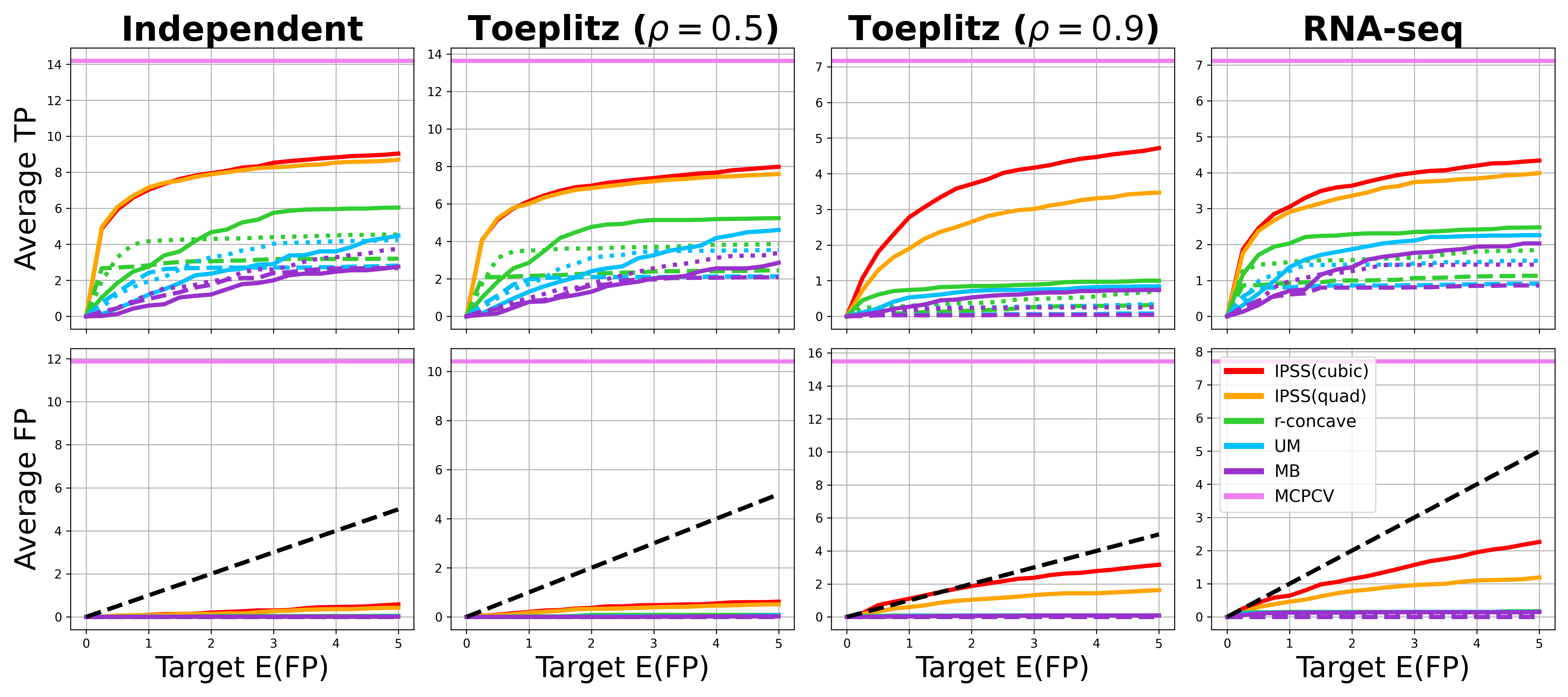}%
\caption{\textit{Linear regression, normal residuals: MCP $(p=1000)$}. See \cref{fig:compare_linear_reg_200_df2}.}
\label{fig:compare_linear_reg_1000_mcp}
\end{figure*}
}{}

\ifthenelse{\boolean{showfigures}}{
\begin{figure*}[!ht]
\includegraphics[width=0.85\textwidth, height=.275\textheight]{./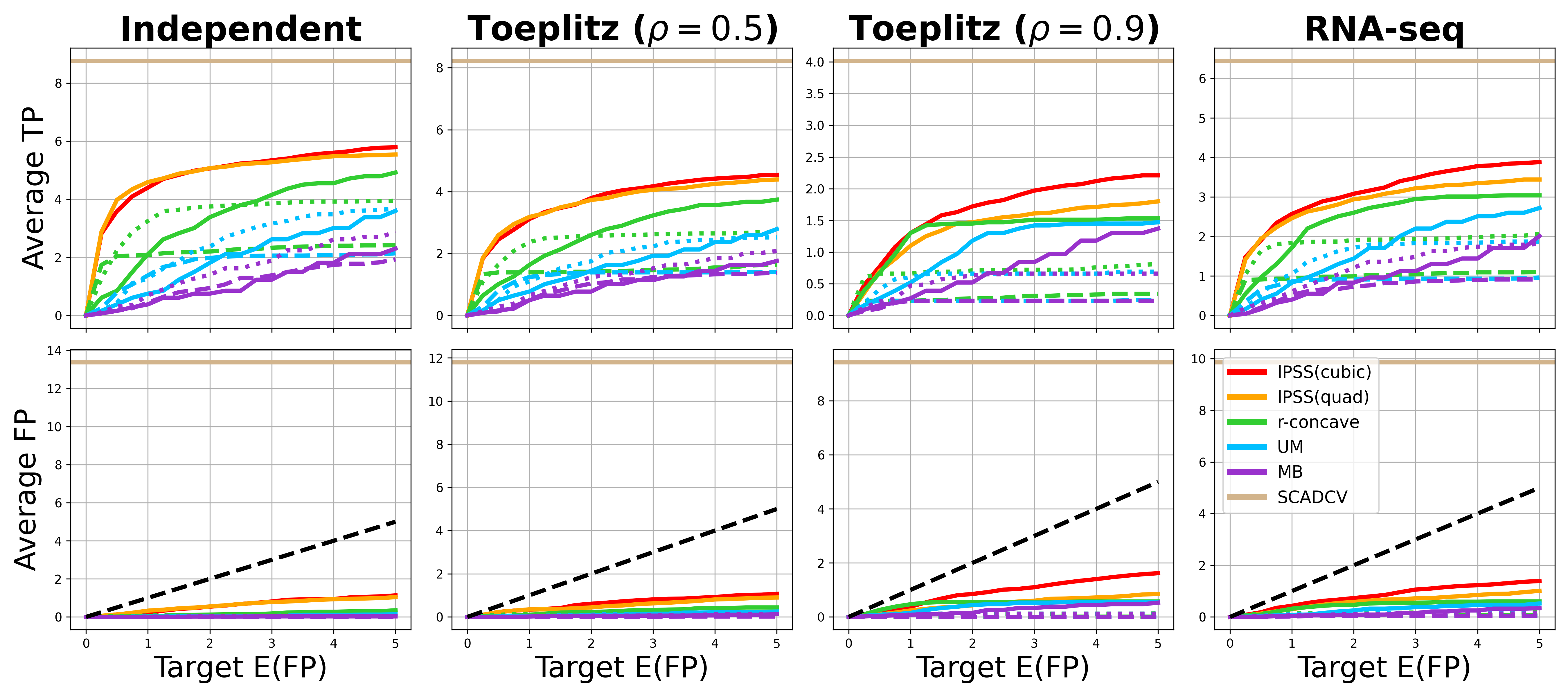}%
\caption{\textit{Linear regression, normal residuals: SCAD $(p=200)$}. See \cref{fig:compare_linear_reg_200_df2}.}
\label{fig:compare_linear_reg_200_scad}
\end{figure*}

\begin{figure*}[!ht]
\includegraphics[width=0.85\textwidth, height=.275\textheight]{./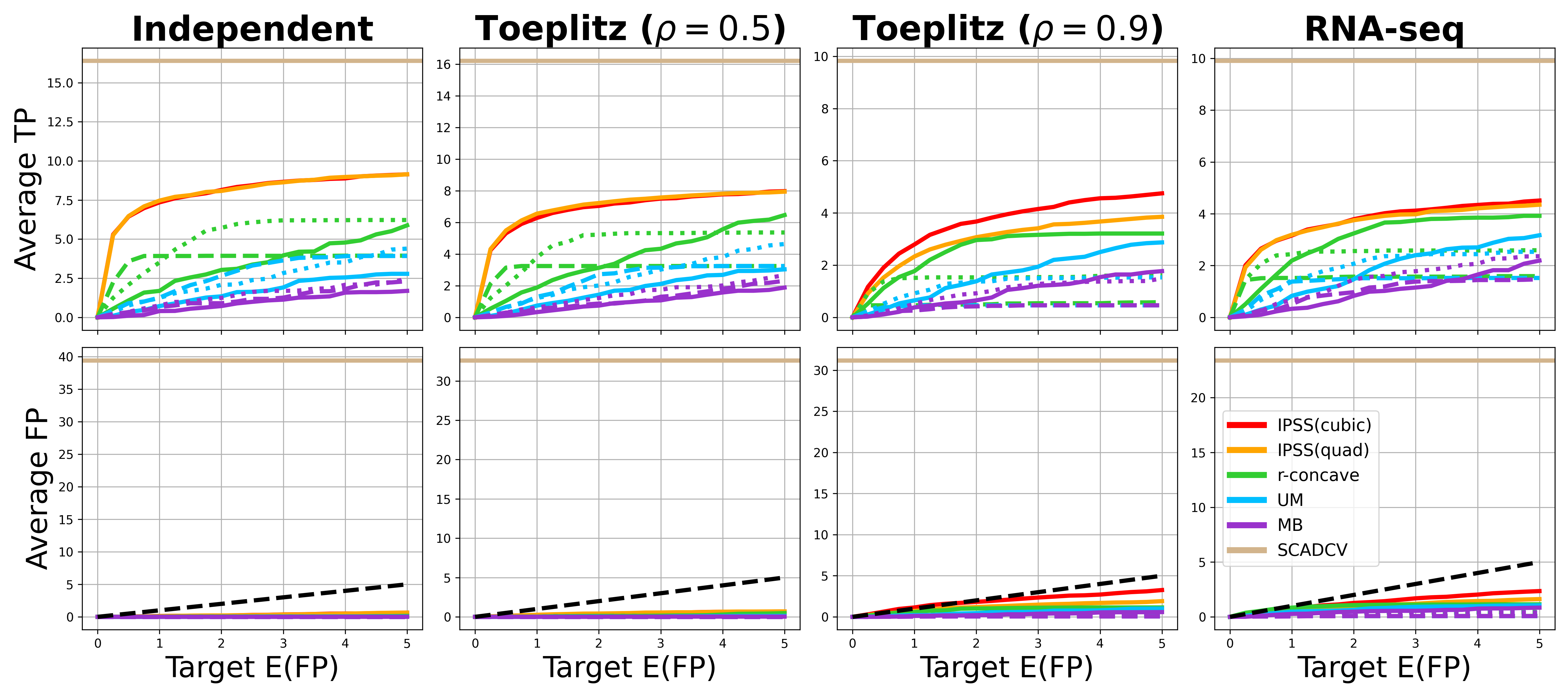}%
\caption{\textit{Linear regression, normal residuals: SCAD $(p=1000)$}. See \cref{fig:compare_linear_reg_200_df2}.}
\label{fig:compare_linear_reg_1000_scad}
\end{figure*}
}{}

\ifthenelse{\boolean{showfigures}}{
\begin{figure*}[!ht]
\includegraphics[width=0.85\textwidth, height=.275\textheight]{./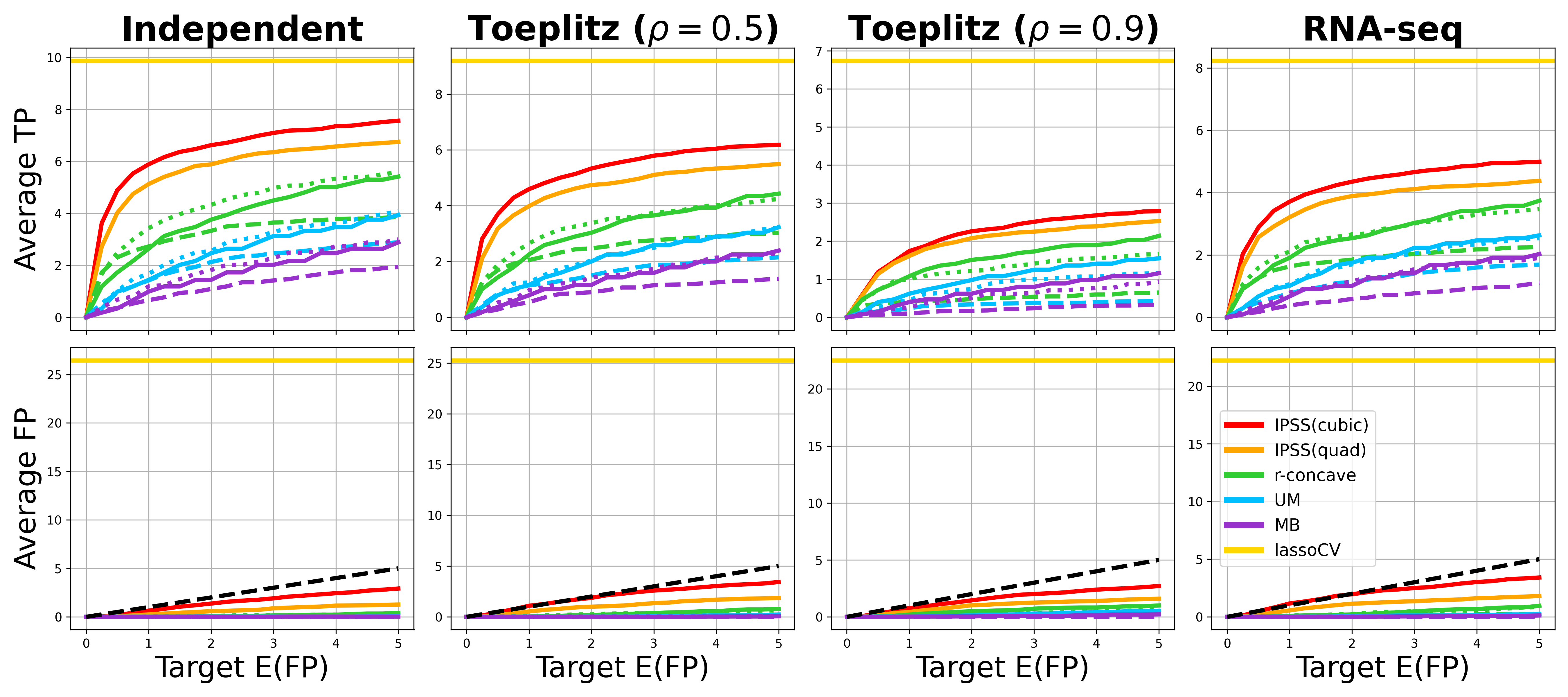}%
\caption{\textit{Linear regression, normal residuals: Adaptive lasso $(p=200)$}. See \cref{fig:compare_linear_reg_200_df2}.}
\label{fig:compare_linear_reg_200_adaptive}
\end{figure*}

\begin{figure*}[!ht]
\includegraphics[width=0.85\textwidth, height=.275\textheight]{./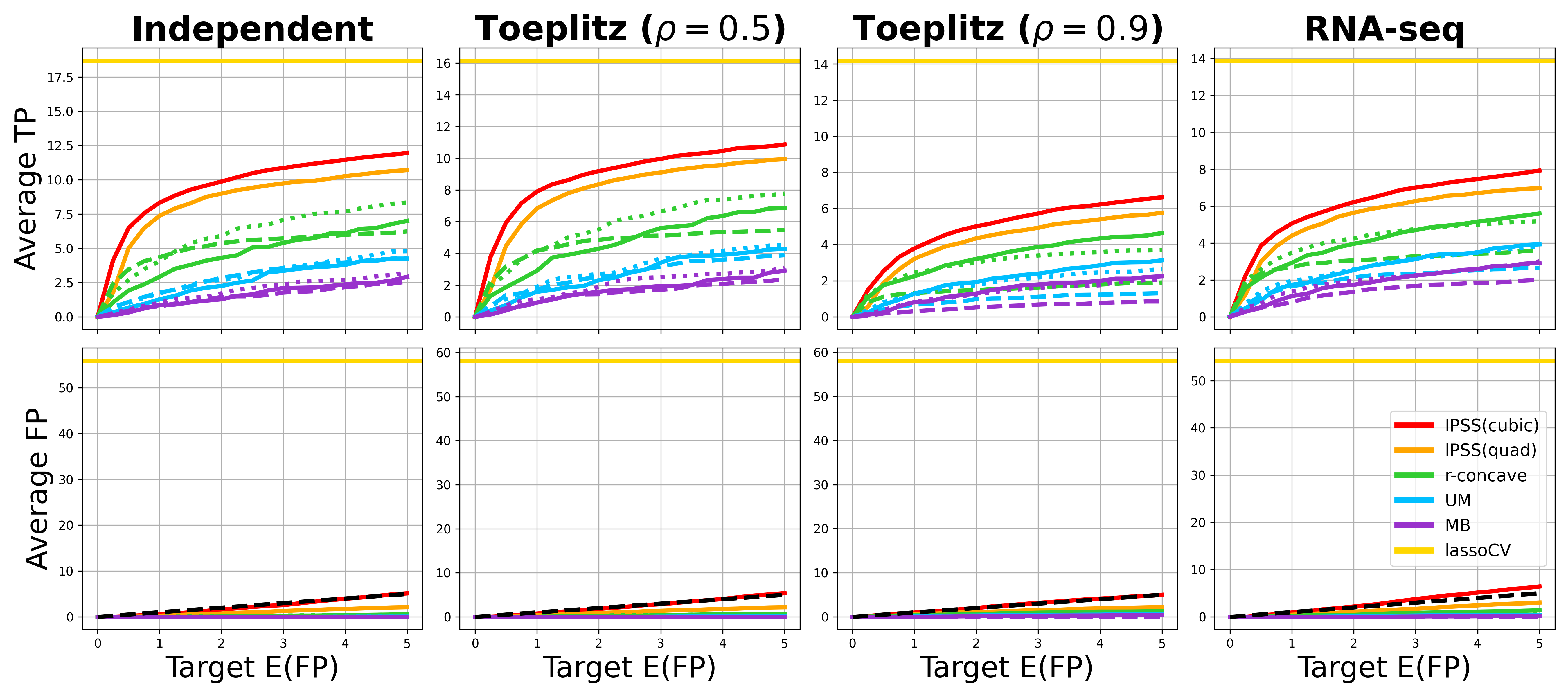}%
\caption{\textit{Linear regression, normal residuals: Adaptive lasso $(p=1000)$}. See \cref{fig:compare_linear_reg_200_df2}.}
\label{fig:compare_linear_reg_1000_adaptive}
\end{figure*}
}{}

\ifthenelse{\boolean{showfigures}}{
\begin{figure*}[!ht]
\includegraphics[width=0.85\textwidth, height=.275\textheight]{./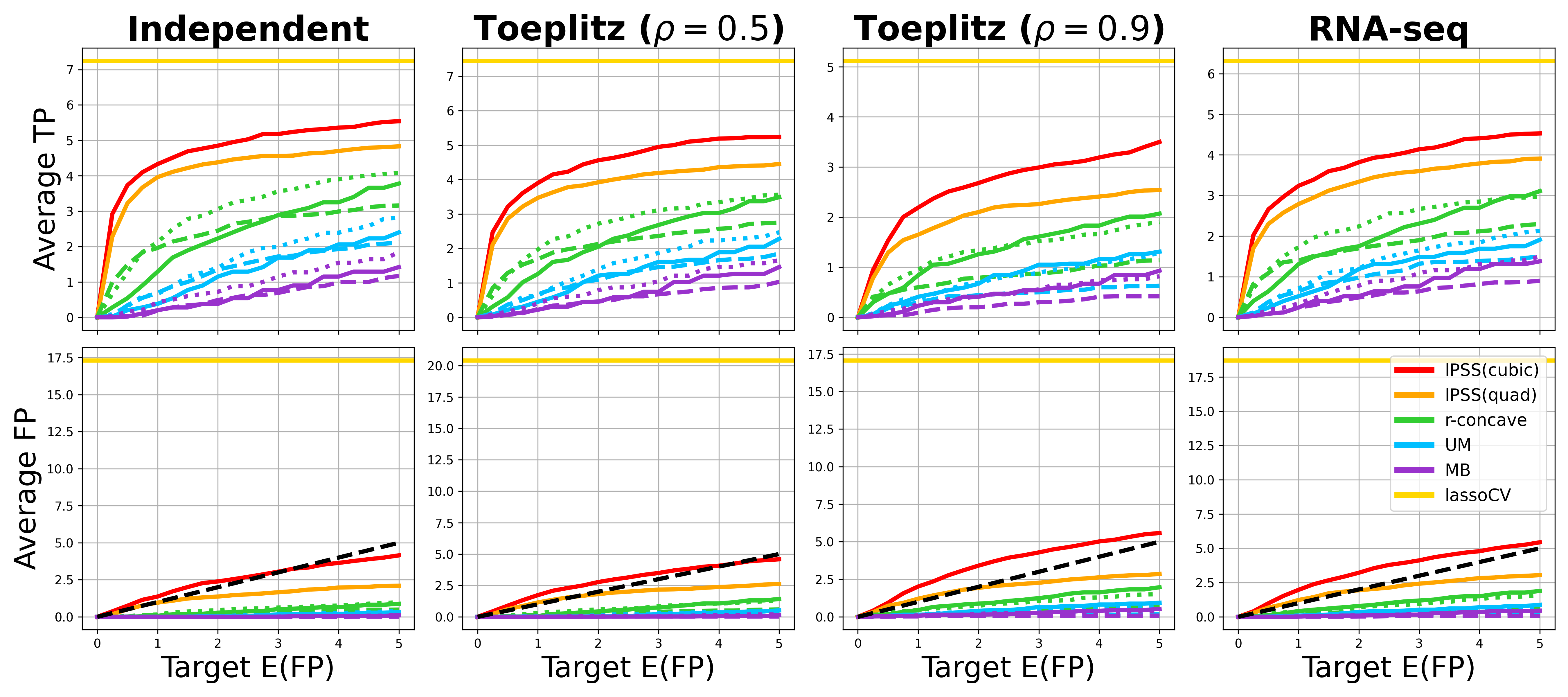}%
\caption{\textit{$\ell_1$-regularized logistic regression $(p=200)$}. See  \cref{fig:compare_linear_reg_200_df2}.}
\label{fig:compare_linear_class_200}
\end{figure*}

\begin{figure*}[!ht]
\includegraphics[width=0.85\textwidth, height=.275\textheight]{./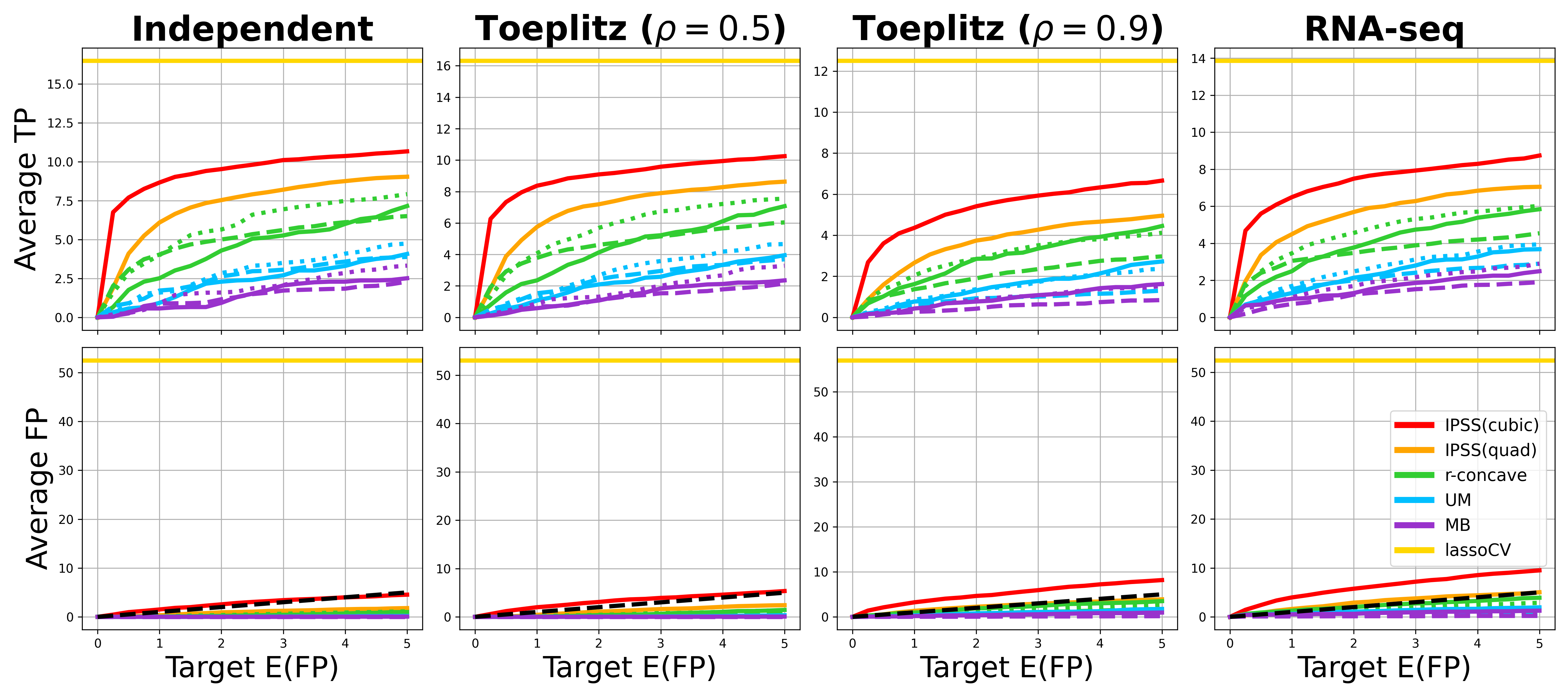}%
\caption{\textit{$\ell_1$-regularized logistic regression $(p=1000)$}. See  \cref{fig:compare_linear_reg_200_df2}.}
\label{fig:compare_linear_class_1000}
\end{figure*}
}{}

\clearpage


\section{Sensitivity to parameters}\label{sup_sec:sensitivity}


We assess the sensitivity of IPSS to the integral cutoff $C$ (\cref{sup_sec:sensitivity_to_C}) and to the parameter $\alpha$ that defines the probability measure $\mu_\alpha\propto\lambda^{-\alpha}d\lambda$ (\cref{sup_sec:sensitivity_to_mu}).


\subsection{Sensitivity to {\boldmath$C$}}\label{sup_sec:sensitivity_to_C}


We ran IPSS(quad) and IPSS(cubic) with $C\in\{0.025, 0.05, 0.75, 0.1\}$ for linear regression with normal residuals and logistic regression and for each feature design described in \cref{sec:simulations}. Results are reported for lasso, MCP, SCAD, and $\ell_1$-regularized logistic regression. The $\alpha$ parameter in $\mu_\alpha$ is always set to the default values described in \cref{sec:simulations}. In each plot, different colors correspond to different choices of $C$, and the solid and dashed lines correspond to IPSS(quad) and IPSS(cubic), respectively. The solid, dashed, and dotted gray curves---included for reference---correspond to MB, UM, and $r$-concave, respectively, each with $\tau=0.75$.

The results in Figures \ref{fig:cutoff_linear_reg_200} through \ref{fig:cutoff_linear_class_1000} show that IPSS is highly robust to $C$, with different choices of $C$ often yielding nearly indistinguishable solutions. This is because $C$ directly determines $\lmin$ and, for the range of cutoffs we consider, the corresponding $\lmin$ values are sufficiently small that the stability paths are no longer changing in a meaningful way at these points, especially relative to the changes that occur at much larger regularization values (\cref{fig:stability_paths}). Thus, perturbations to $C$ on the scale we consider do not meaningfully alter the stability paths and therefore the IPSS results.


\subsection{Sensitivity to {\boldmath$\alpha$}}\label{sup_sec:sensitivity_to_mu}


We ran IPSS(quad) and IPSS(cubic) for $\alpha\in\{0, 1/4, 1/2, 3/4, 1, 5/4\}$ for linear regression with normal residuals and logistic regression and for each feature design described in \cref{sec:simulations}. Results are reported for lasso, MCP, SCAD, and $\ell_1$-regularized logistic regression. The cutoff $C$ is always set to $0.05$. In each plot, different colors correspond to different choices of $\alpha$, and the solid and dashed lines correspond to IPSS(quad) and IPSS(cubic), respectively. The solid, dashed, and dotted gray curves---included for reference---correspond to MB, UM, and $r$-concave, respectively, each with $\tau=0.75$.

The results in Figures \ref{fig:mu_linear_reg_200} through \ref{fig:mu_linear_class_1000} show that IPSS is robust to $\alpha$ in general, with all choices of $\alpha$ leading to more true positives than the stability selection methods, and most choices of $\alpha$ producing accurate E(FP) control in most settings. There are two notable exceptions. First, both IPSS methods significantly exceed the target E(FP) when $\alpha=5/4$ in the $p=1000$ linear regression experiments with lasso as the base estimator. This issue---included here for illustrative purposes and easily avoided by simply not using $\alpha=5/4$ when the base estimator is lasso---occurs because $\alpha=5/4$ puts too much weight on small regularization values, leading to violations of \cref{cond:1}. The second notable exception is the $p=1000$ logistic regression experiments, where IPSS(cubic) with $\alpha\in\{3/4,1,5/4\}$ and, to a lesser extent, IPSS(quad) with $\alpha=5/4$ significantly violate of the E(FP) bound. However, the remaining IPSS and $\alpha$ combinations typically keep the actual E(FP) close to or below target levels.

While $\alpha=1$ works well in general, further improvements in performance are possible with other choices of $\alpha$. Based on our experience, we recommend $\alpha=5/4$ when MCP or SCAD are the base estimators. For IPSS with lasso or adaptive lasso as a base estimator, we recommend $\alpha=1$ if $p\leq 200$, $\alpha=3/4$ if $p\geq 1000$, and $\alpha = -\tfrac{p}{3200} + \tfrac{17}{16}$ for $p\in (200,1000)$. The latter linearly interpolates between $(p,\alpha) = (200, 1)$ and $(1000, 3/4)$. For IPSS with $\ell_1$-regularized logistic regression, we recommend $\alpha=1$ if $p\leq 200$, $\alpha=0$ if $p\geq 1000$, and $\alpha = -\tfrac{p}{800} + \tfrac{5}{4}$  for $p\in (200,1000)$. The latter linearly interpolates between $(p,\alpha) = (200, 1)$ and $(1000, 0)$.

\ifthenelse{\boolean{showfigures}}{
\begin{figure*}
\includegraphics[width=0.8\textwidth, height=.25\textheight]{./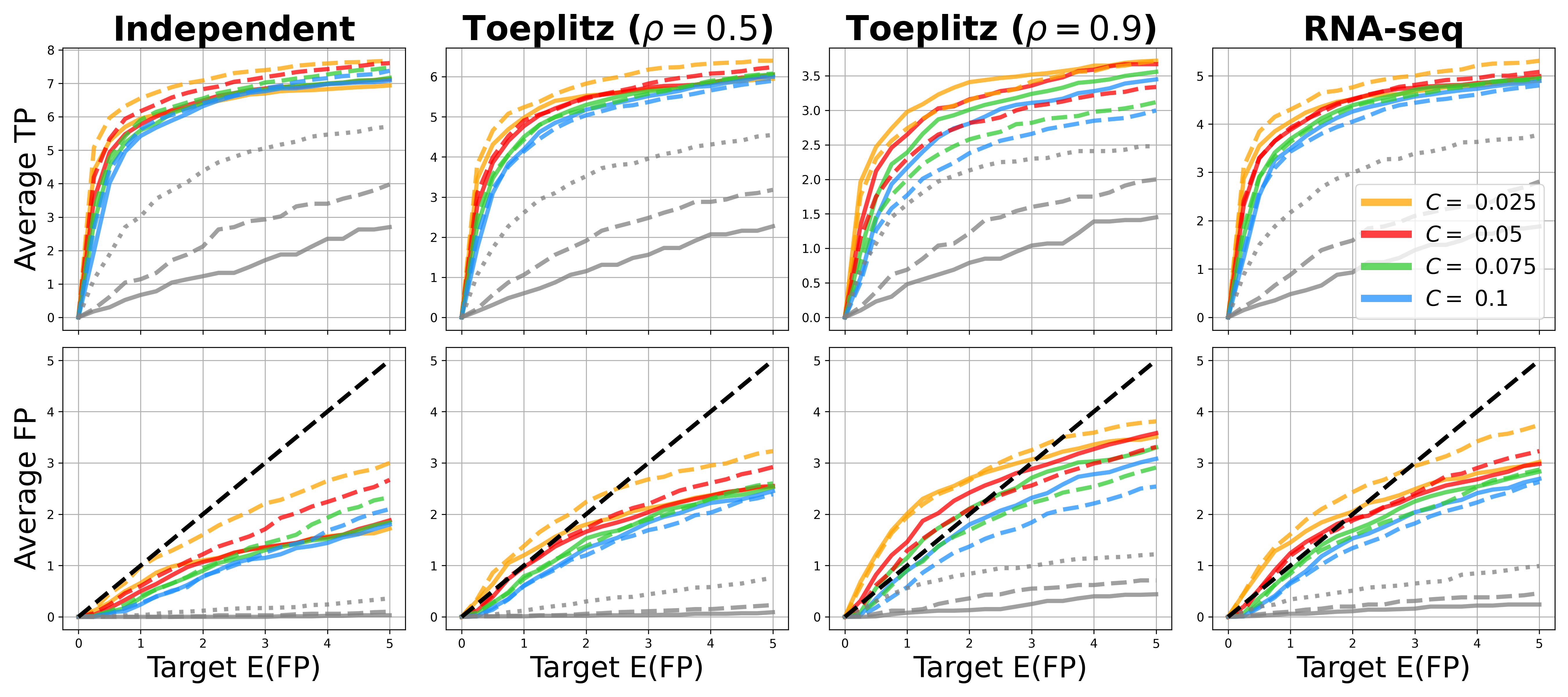}%
\caption{\textit{Sensitivity to $C$: Lasso $(p=200)$}. Colored solid and dashed curves show IPSS(quad) and IPSS(cubic) for different choices of $C$, respectively. Gray solid, dashed, and dotted gray curves show MB, UM, and $r$-concave stability selection with $\tau=0.75$. The dashed black line represents perfect E(FP) control.}
\label{fig:cutoff_linear_reg_200}
\end{figure*}

\begin{figure*}
\includegraphics[width=0.8\textwidth, height=.25\textheight]{./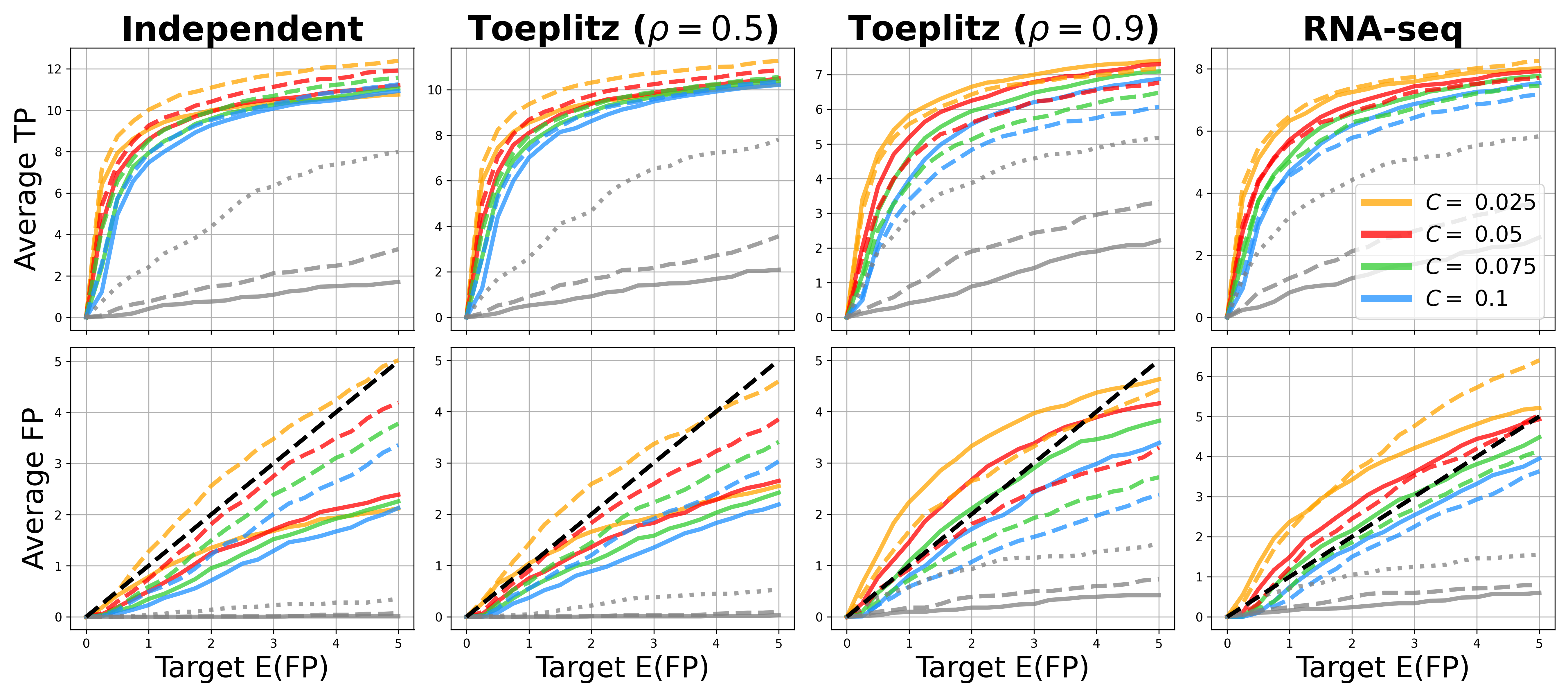}%
\caption{\textit{Sensitivity to $C$: Lasso $(p=1000)$}. See \cref{fig:cutoff_linear_reg_200}.}
\label{fig:cutoff_linear_reg_1000}
\end{figure*} 

\begin{figure*}
\includegraphics[width=0.8\textwidth, height=.25\textheight]{./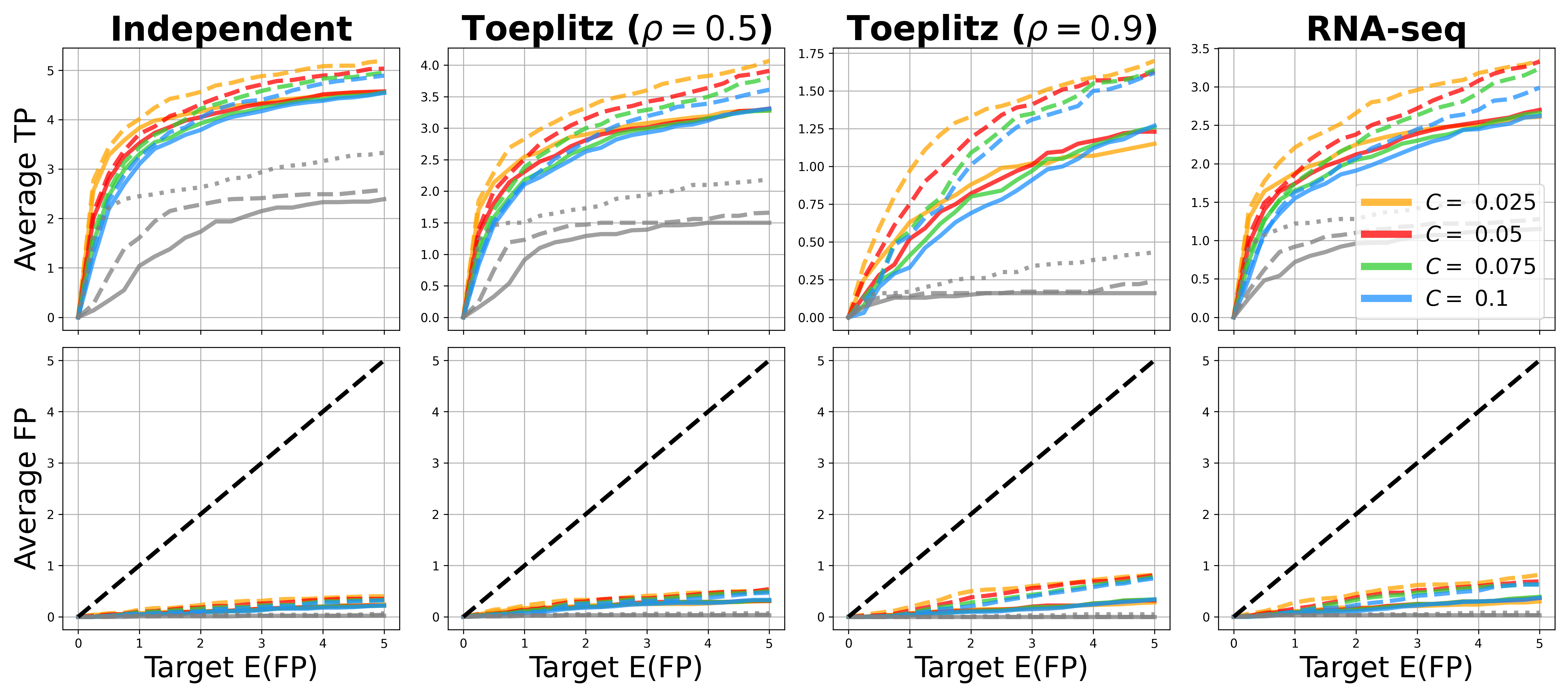}%
\caption{\textit{Sensitivity to $C$: MCP $(p=200)$}. See \cref{fig:cutoff_linear_reg_200}.}
\label{fig:cutoff_linear_reg_200_mcp}
\end{figure*}

\begin{figure*}
\includegraphics[width=0.8\textwidth, height=.25\textheight]{./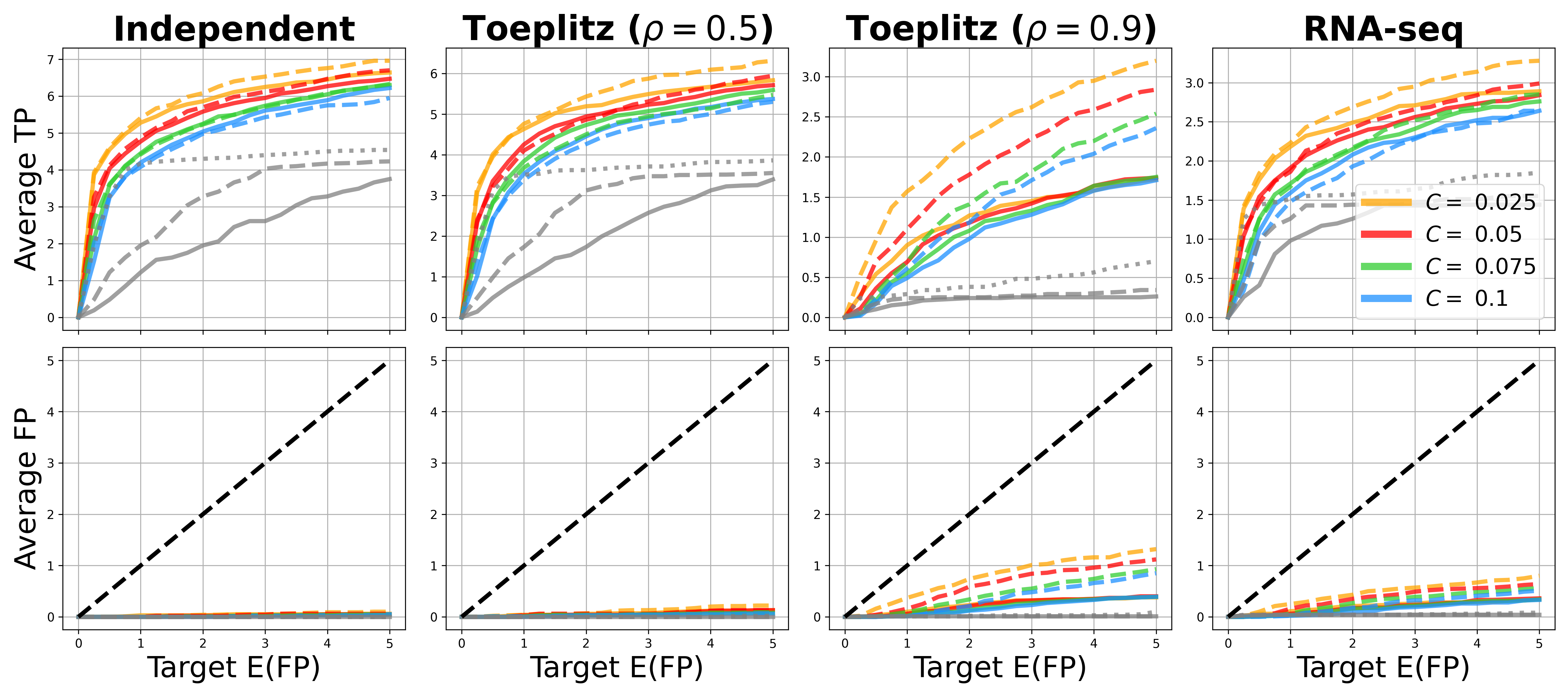}%
\caption{\textit{Sensitivity to $C$: MCP $(p=1000)$}. See \cref{fig:cutoff_linear_reg_200}.}
\label{fig:cutoff_linear_reg_1000_mcp}
\end{figure*} 

\begin{figure*}
\includegraphics[width=0.8\textwidth, height=.25\textheight]{./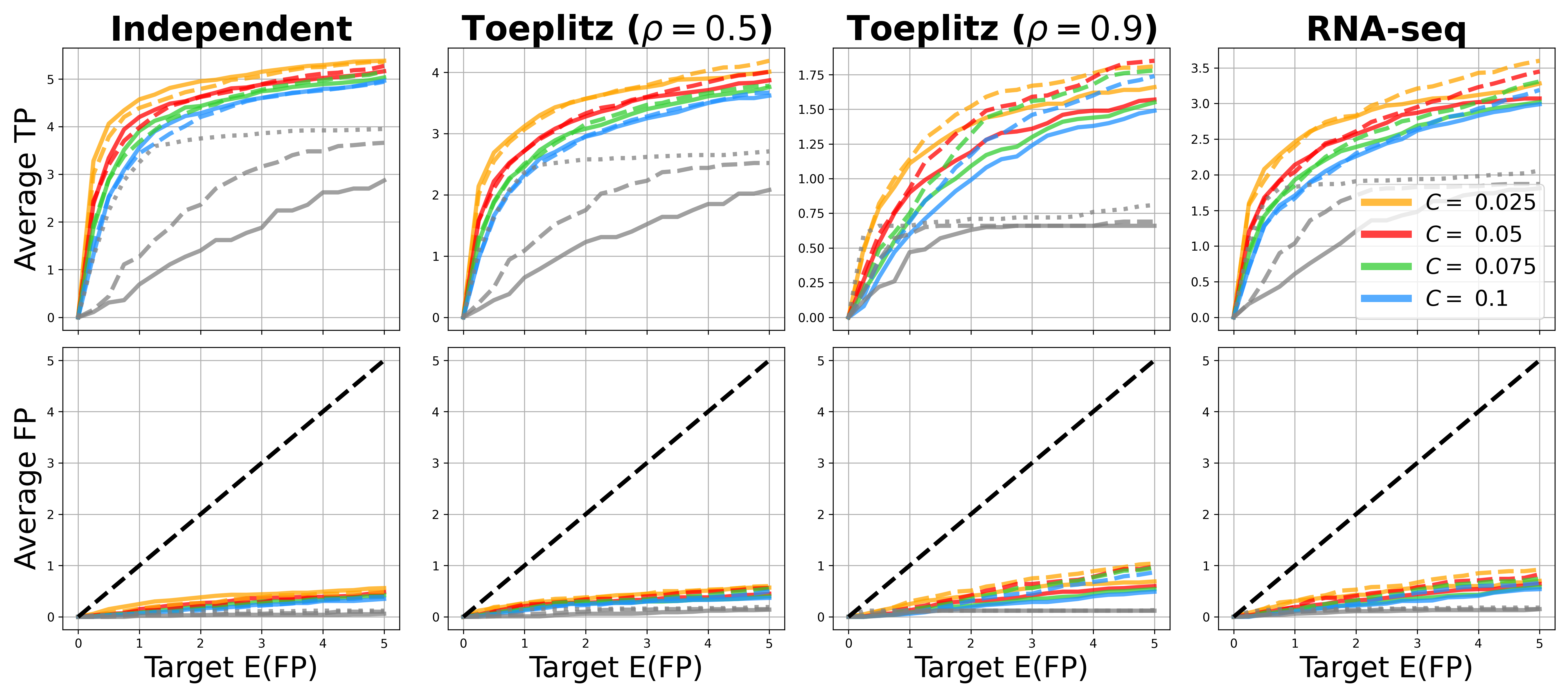}%
\caption{\textit{Sensitivity to $C$: SCAD $(p=200)$}. See \cref{fig:cutoff_linear_reg_200}.}
\label{fig:cutoff_linear_reg_200_scad}
\end{figure*}

\begin{figure*}
\includegraphics[width=0.8\textwidth, height=.25\textheight]{./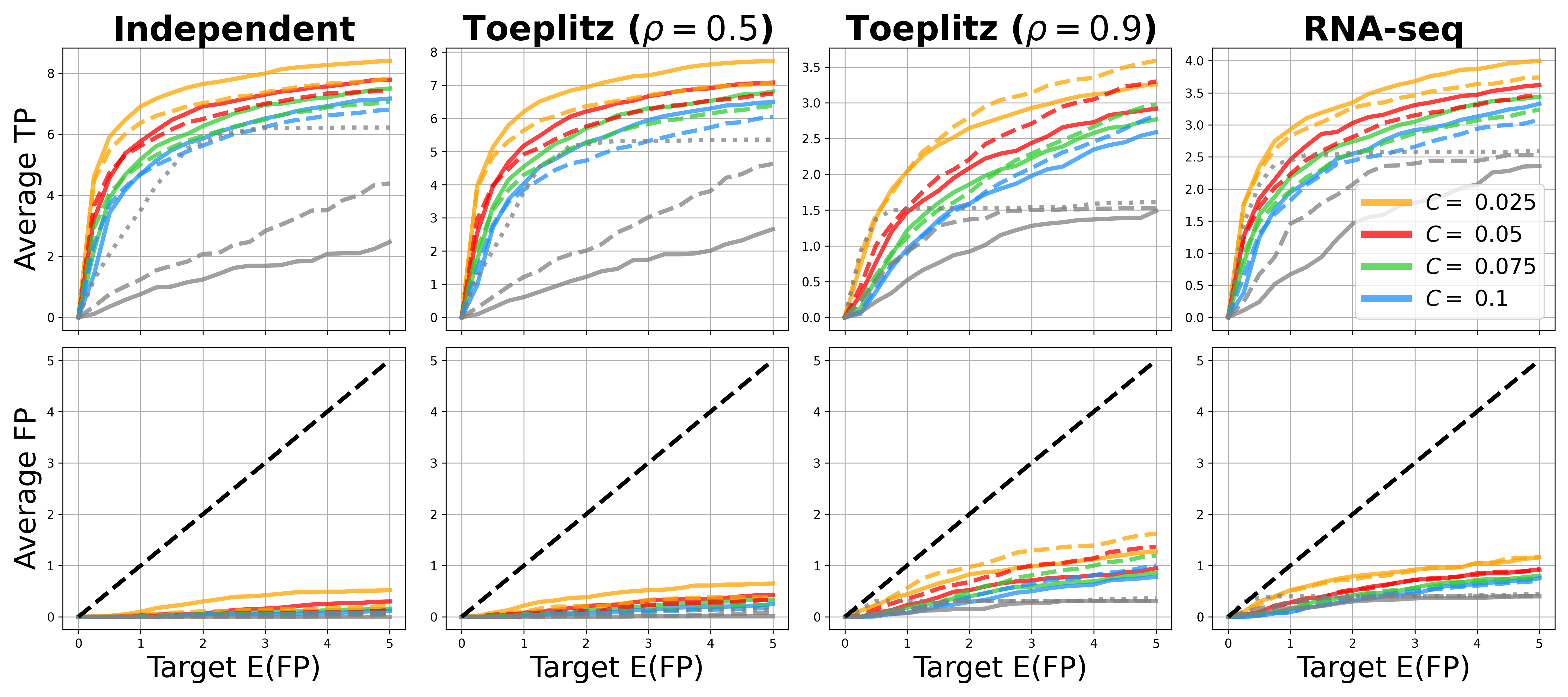}%
\caption{\textit{Sensitivity to $C$: SCAD $(p=1000)$}. See \cref{fig:cutoff_linear_reg_200}.}
\label{fig:cutoff_linear_reg_1000_scad}
\end{figure*} 

\begin{figure*}
\includegraphics[width=0.8\textwidth, height=.25\textheight]{./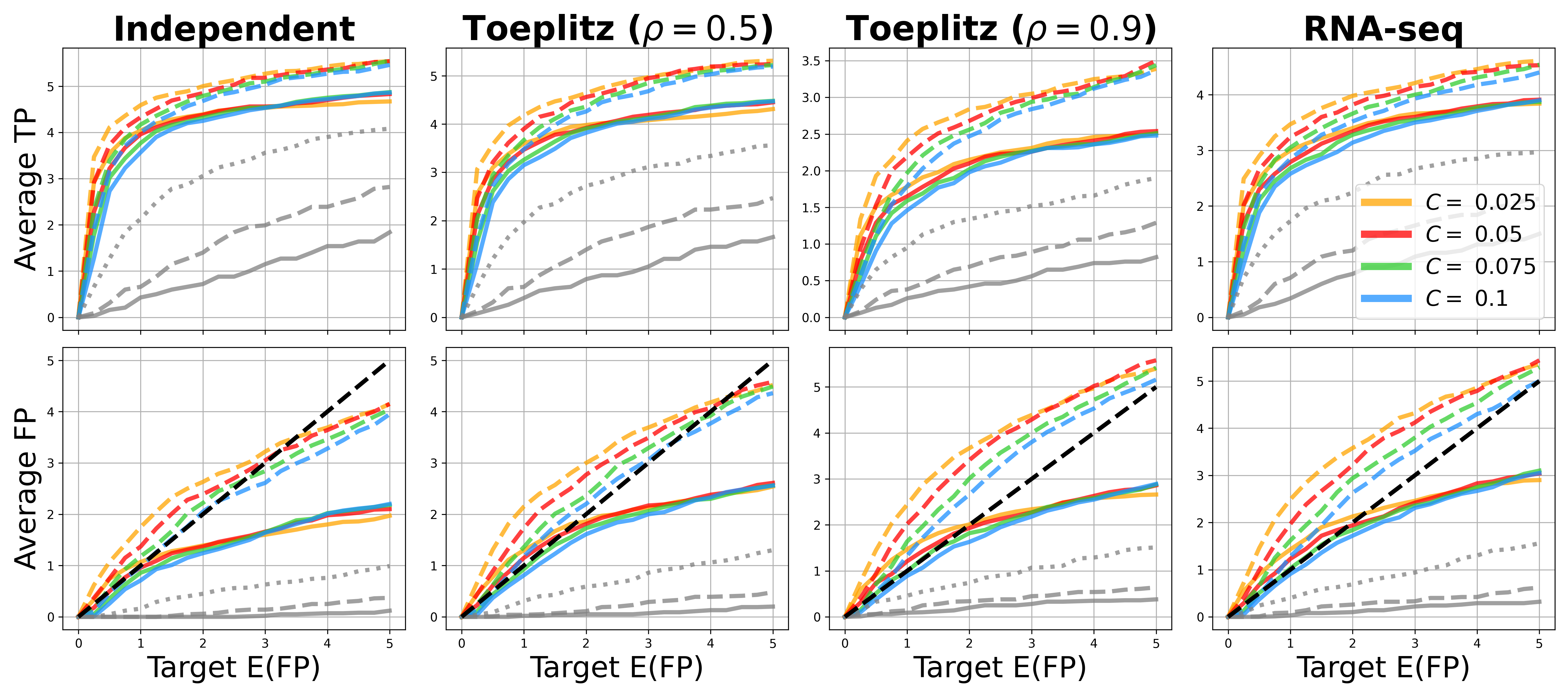}%
\caption{\textit{Sensitivity to $C$: Logistic regression $(p=200)$}. See \cref{fig:cutoff_linear_reg_200}.}
\label{fig:cutoff_linear_class_200}
\end{figure*}

\begin{figure*}
\includegraphics[width=0.8\textwidth, height=.25\textheight]{./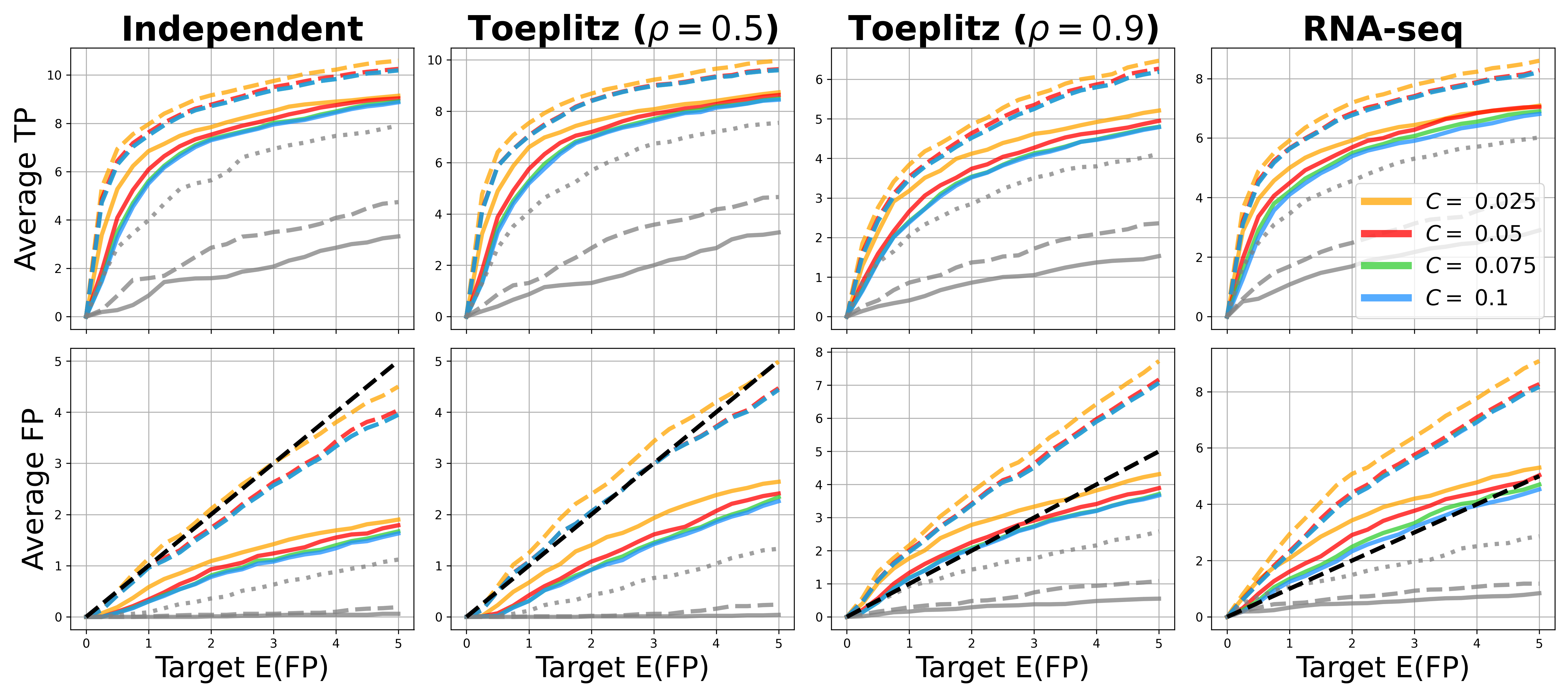}%
\caption{\textit{Sensitivity to $C$: Logistic regression $(p=1000)$}. See \cref{fig:cutoff_linear_reg_200}.}
\label{fig:cutoff_linear_class_1000}
\end{figure*} 
}{}

\ifthenelse{\boolean{showfigures}}{
\begin{figure*}
\includegraphics[width=0.8\textwidth, height=.25\textheight]{./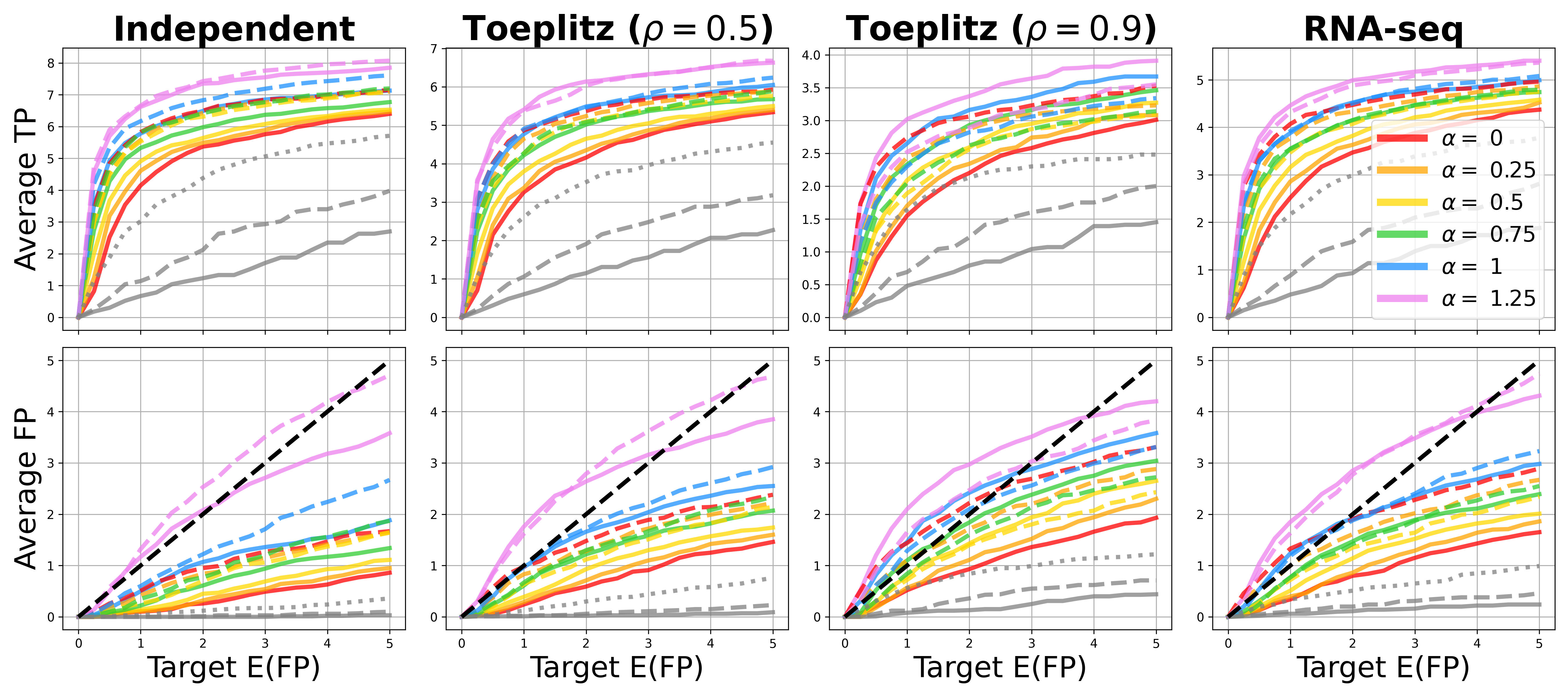}%
\caption{\textit{Sensitivity to $\alpha$: Lasso $(p=200)$}. Colored solid and dashed curves show IPSS(quad) and IPSS(cubic) for different choices of $\alpha$, respectively. Gray solid, dashed, and dotted gray curves show MB, UM, and $r$-concave stability selection with $\tau=0.75$. The dashed black line represents perfect E(FP) control.}
\label{fig:mu_linear_reg_200}
\end{figure*}

\begin{figure*}
\includegraphics[width=0.8\textwidth, height=.25\textheight]{./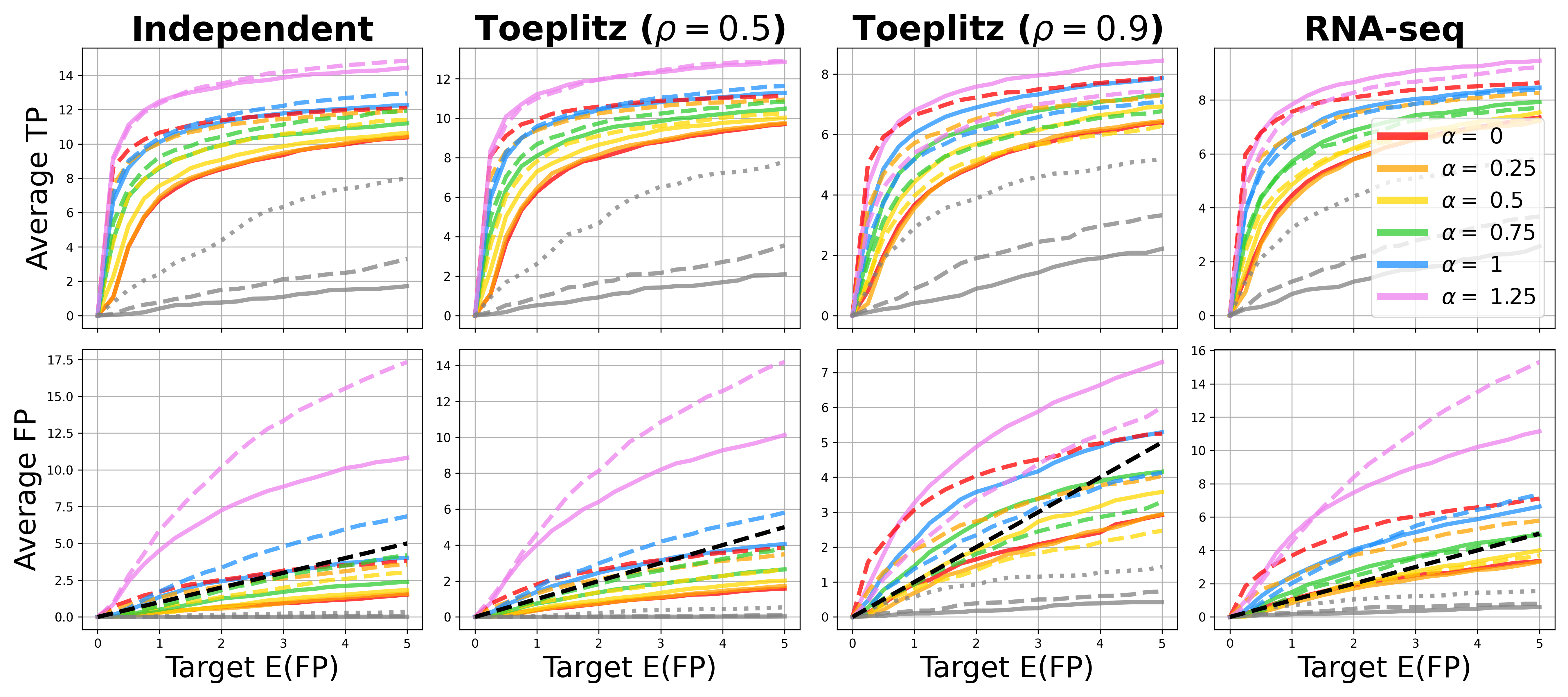}%
\caption{\textit{Sensitivity to $\alpha$: Lasso, $(p=1000)$}. See \cref{fig:mu_linear_reg_200}.}
\label{fig:mu_linear_reg_1000}
\end{figure*} 

\begin{figure*}
\includegraphics[width=0.8\textwidth, height=.25\textheight]{./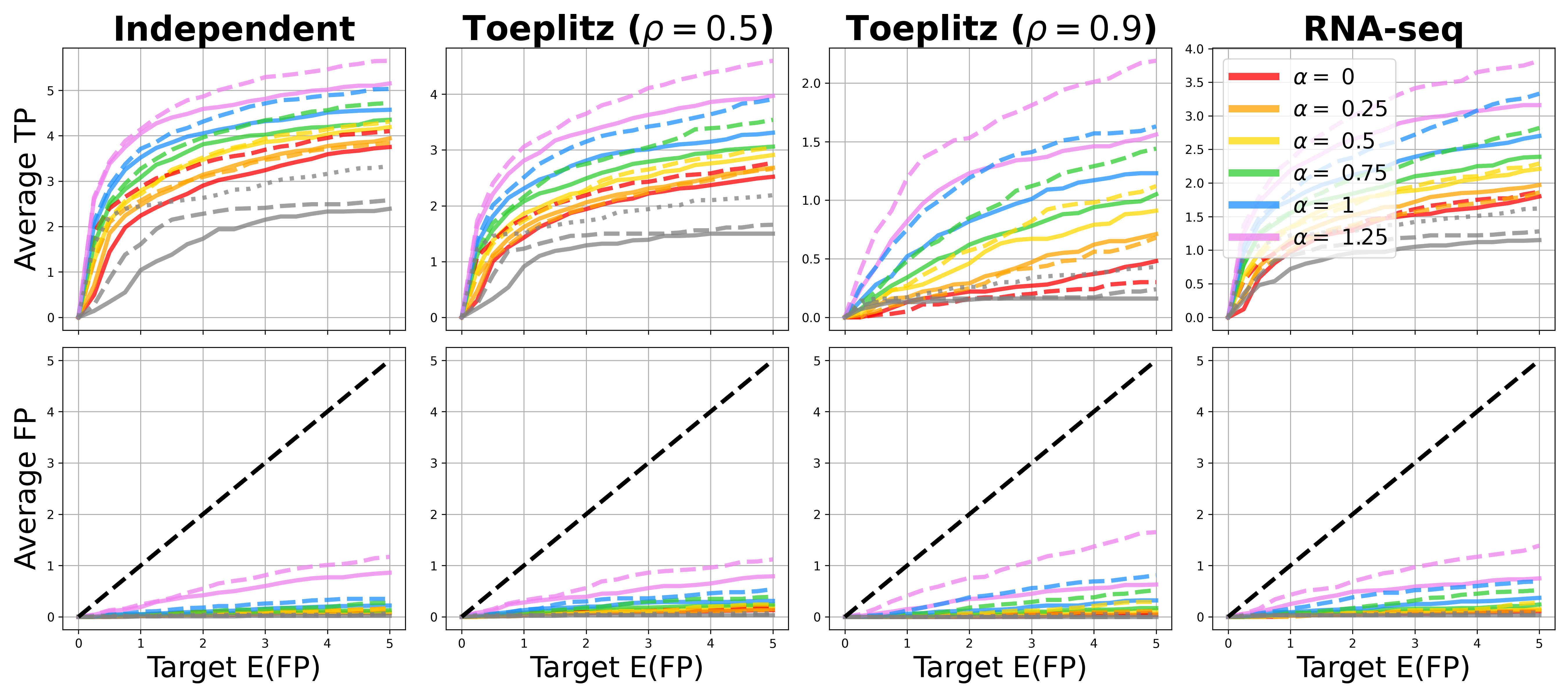}%
\caption{\textit{Sensitivity to $\alpha$: MCP $(p=200)$}. See \cref{fig:mu_linear_reg_200}.}
\label{fig:mu_linear_reg_200_mcp}
\end{figure*}

\begin{figure*}
\includegraphics[width=0.8\textwidth, height=.25\textheight]{./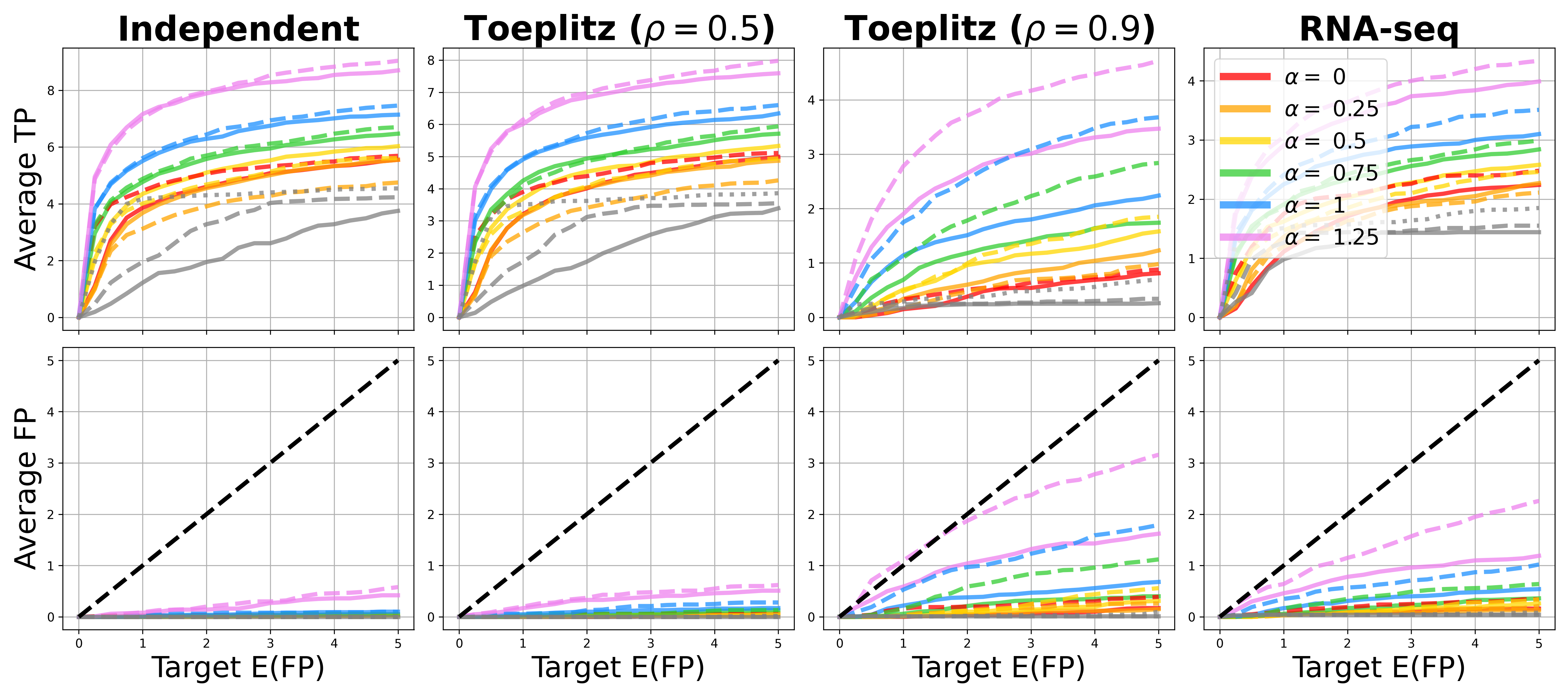}%
\caption{\textit{Sensitivity to $\alpha$: MCP $(p=1000)$}. See \cref{fig:mu_linear_reg_200}.}
\label{fig:mu_linear_reg_1000_mcp}
\end{figure*} 

\begin{figure*}
\includegraphics[width=0.8\textwidth, height=.25\textheight]{./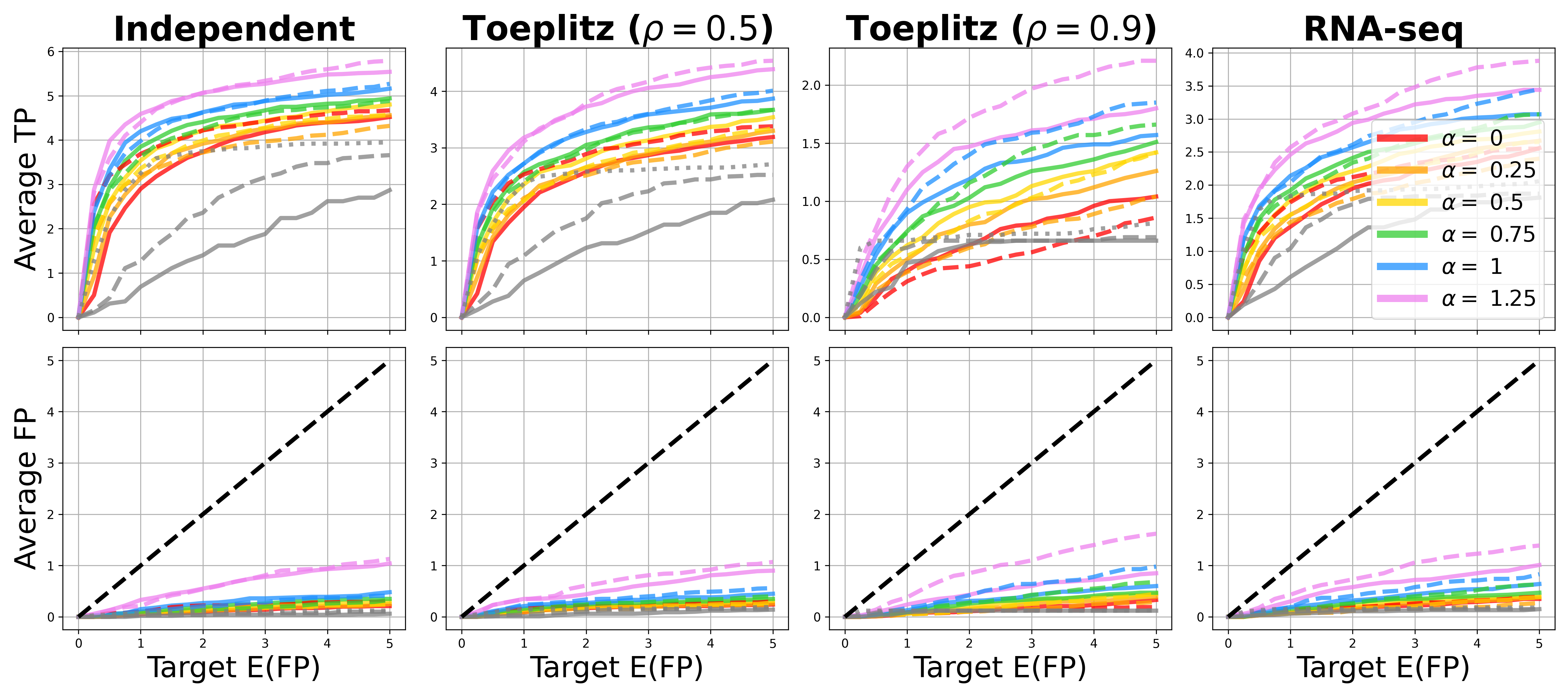}%
\caption{\textit{Sensitivity to $\alpha$: SCAD $(p=200)$}. See \cref{fig:mu_linear_reg_200}.}
\label{fig:mu_linear_reg_200_scad}
\end{figure*}

\begin{figure*}
\includegraphics[width=0.8\textwidth, height=.25\textheight]{./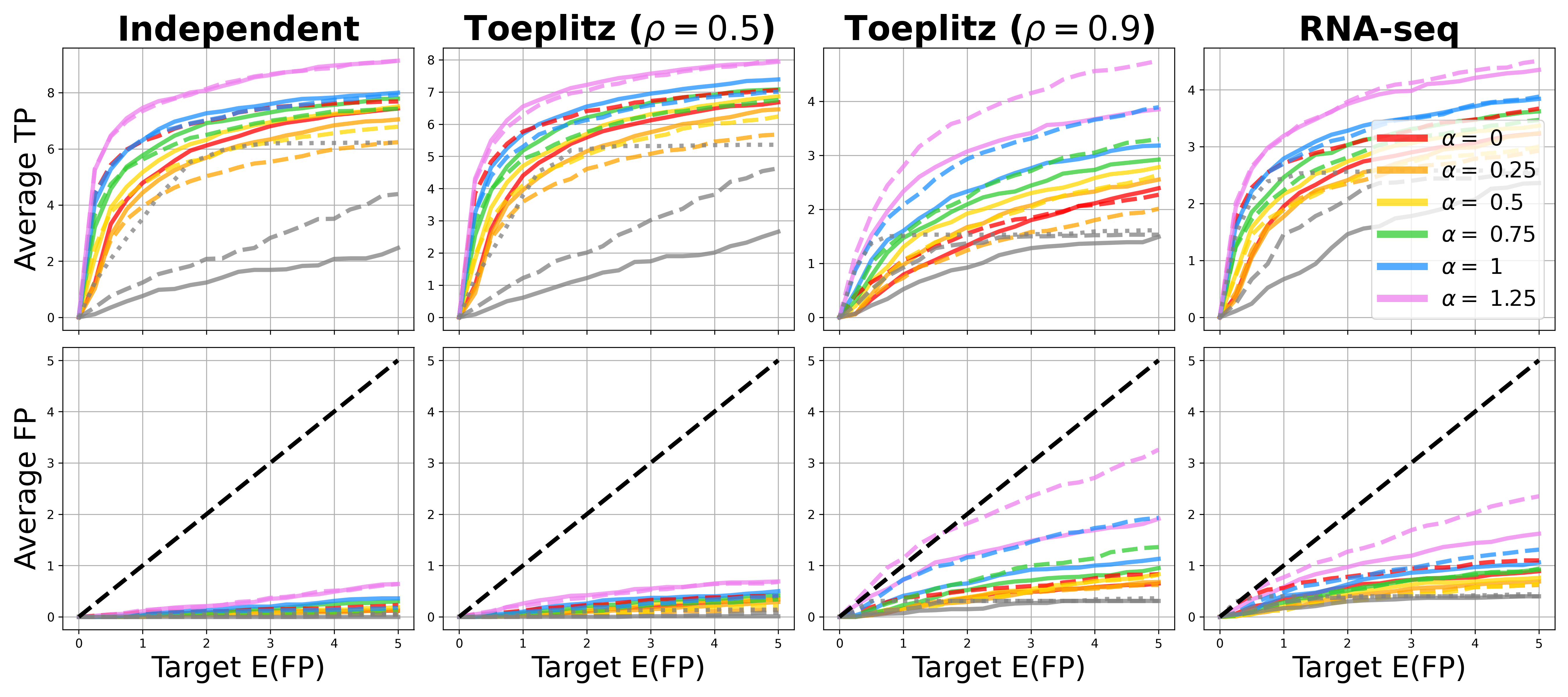}%
\caption{\textit{Sensitivity to $\alpha$: SCAD $(p=1000)$}. See \cref{fig:mu_linear_reg_200}.}
\label{fig:mu_linear_reg_1000_scad}
\end{figure*} 

\begin{figure*}
\includegraphics[width=0.8\textwidth, height=.25\textheight]{./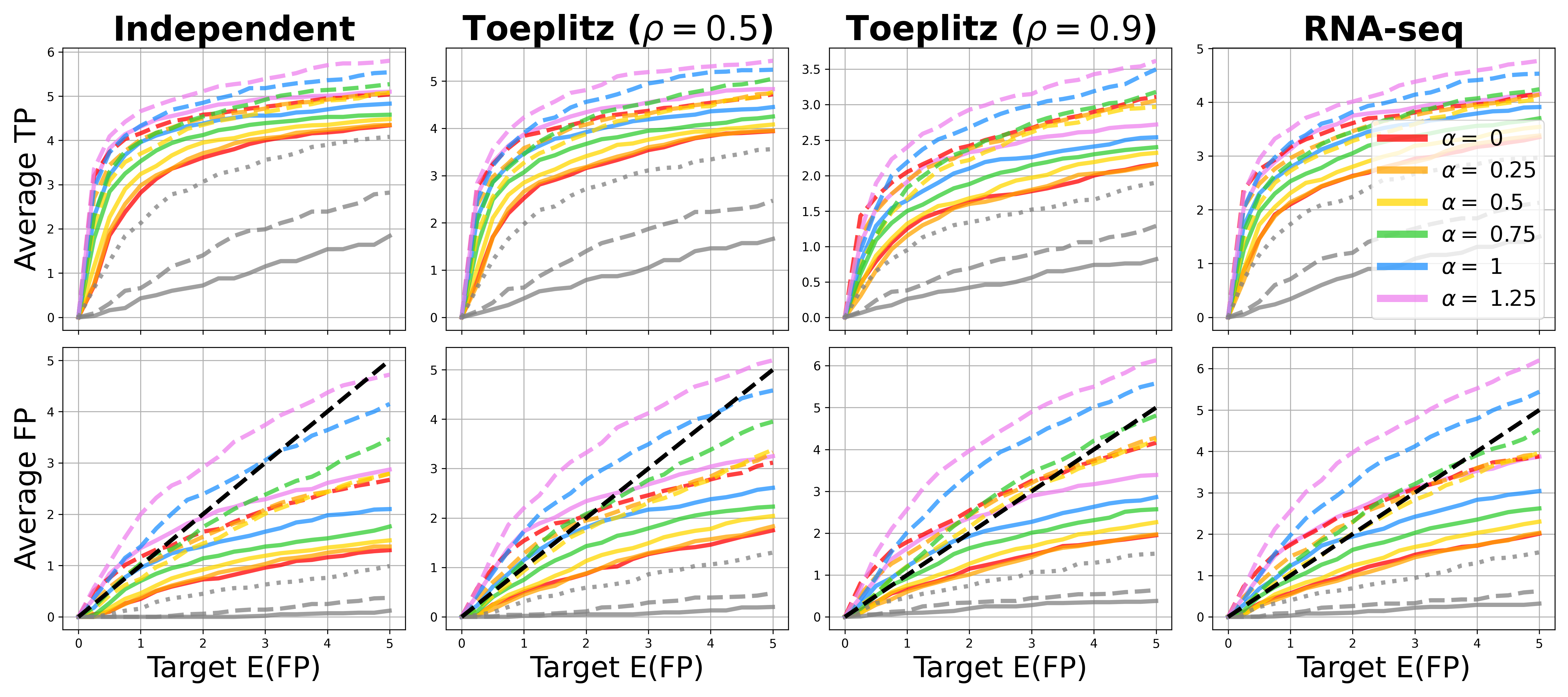}%
\caption{\textit{Sensitivity to $\alpha$: Logistic regression $(p=200)$}. See \cref{fig:mu_linear_reg_200}.}
\label{fig:mu_linear_class_200}
\end{figure*}

\begin{figure*}
\includegraphics[width=0.8\textwidth, height=.25\textheight]{./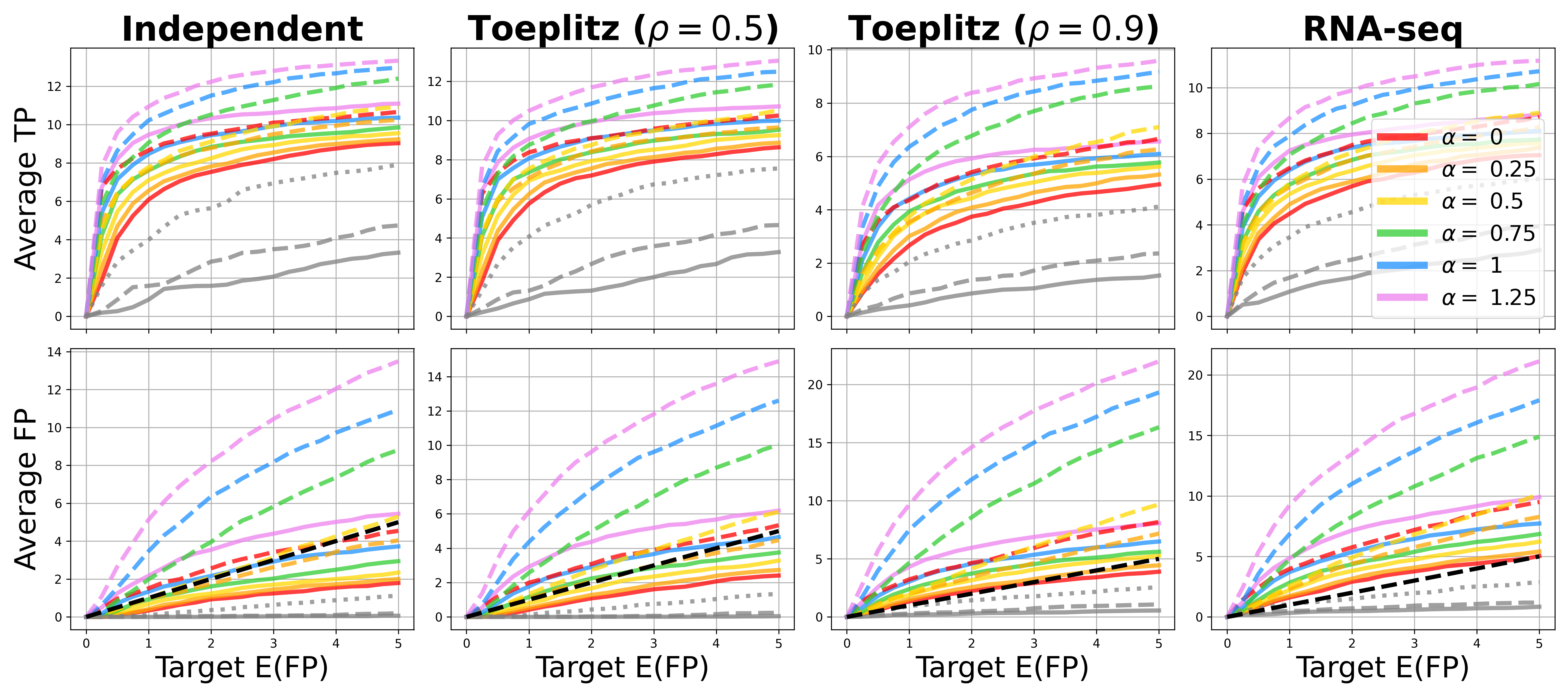}%
\caption{\textit{Sensitivity to $\alpha$: Logistic regression $(p=1000)$}. See \cref{fig:mu_linear_reg_200}.}
\label{fig:mu_linear_class_1000}
\end{figure*} 
}{}

\end{document}